\newcommand{\lift}[1]{\textit{Lift$(#1)$}}
\newcommand{\liftk}[1]{\textit{$\overline{Lift}(#1)$}}
\newcommand{\uvec}[1]{\boldsymbol{\hat{\textbf{#1}}}}
\newcommand{\ball}[3]{\ensuremath{B(#1(#2),#3)}}
\newcommand{\dist}[3]{\ensuremath{dist_{#1}(#2, #3)}}
\newtheorem{observation}{Observation}
\newcommand{\NP}{\textbf{NP}}
\newcommand{\coNP}{\textbf{co-NP}}
\newcommand{\decomplgo}{%
\alglanguage{pascal}
\begin{algorithm}[H]
\caption{Path decomposition algorithm}
\hspace*{\algorithmicindent} \textbf{Input:} Tree $T$ \\
\hspace*{\algorithmicindent} \textbf{Output:} List $\mathcal{P}$ of disjoint paths in $T$
\begin{algorithmic}[1]
\State $\mathcal{P} = \emptyset$
\State Forest $F = \{T\}$
\While {\text{there are edges in $F$}}
	\State Select root-to-leaf part $P'$ from tree $T'$
    \State ~~~~Delete edges of $P'$ from $T'$
    \State ~~~~Insert $P'$ in the beginning of $\mathcal{P}$
    \State ~~~~Delete $T'$ from $F$
    \State ~~~~Add trees from the $T' \setminus P$ to $F$
\EndWhile
\State Return $\mathcal{P}$
\end{algorithmic}
\end{algorithm}

}
\title{Morphing tree drawings in a small 3D grid}
\titlerunning{Morphing tree drawings in a small 3D grid}
\begin{document}

\title{Morphing tree drawings in a small 3D grid}
%
%
\author{Elena  Arseneva 
\and
Rahul Gangopadhyay 
\and
Aleksandra  Istomina  
}
\authorrunning{E. Arseneva et al.}
%
\institute{Saint-Petersburg University, Saint Petersburg, Russia \\ 
\email{e.arseneva@spbu.ru}, \email{rahulg@iiitd.ac.in}, \email{st062510@student.spbu.ru}
}
\maketitle              

\vspace{-2em}
\begin{abstract}
We study crossing-free grid morphs for planar tree drawings using 3D. A morph 
consists of morphing steps, 
where vertices move simultaneously along straight-line trajectories at  constant speeds. 
A crossing-free morph is known between two drawings of an $n$-vertex planar graph $G$ with $\mathcal{O}(n)$ morphing steps and  using the third dimension it can be reduced to $\mathcal{O}(\log n)$ for an $n$-vertex tree [Arseneva et al.\ 2019]. 
However, these morphs do not bound one practical parameter, the resolution. 
Can the number of steps be reduced substantially  by using the third dimension while keeping the resolution bounded throughout the morph?
We answer this question in an affirmative and present a 3D non-crossing morph between two planar grid drawings of an $n$-vertex tree in $\mathcal{O}(\sqrt{n} \log n)$ morphing steps. 
Each intermediate drawing 
lies in a $3D$ grid of polynomial volume.  

\small{\keywords{morphing grid drawings \and bounded resolution 
\and 3D morphing}}
\end{abstract}

 \vspace{-1.7em}

\section{Introduction}

Given an $n$-vertex graph $G$,  
a \emph{morph} between two drawings (i.e., embeddings in $\mathbb{R}^d$) of $G$ is a continuous transformation from one drawing to the other through a family of intermediate drawings.  
One is interested in well-behaved morphs, i.e., those that preserve essential properties of the drawing at any moment. 
Usually, this property is that the drawing is \emph{crossing-free}; 
such morphs are called \emph{crossing-free} morphs. 
This concept finds applications in multiple domains: animation, modeling, and computer graphics, etc.~\cite{gs2001controllable, gs-gifm-pm, gs2003intrinsic, fg-hmti-99}.
A drawing of 
$G$ is a \emph{straight-line drawing} if it maps each vertex of $G$ to a point in $\mathbb{R}^d$ and each edge of $G$ to the line segment whose endpoints correspond to  the endpoints of this edge. In this work, 
we focus on the case of drawings in the Euclidean plane ($d = 2$) and $3D$ drawings ($d = 3$); a non-crossing drawing of a graph in $\mathbb{R}^2$ is called \emph{planar}. 

The study of crossing-free graph morphs began 
in 1944 from a proof that such morphs for maximal planar graphs exist between two topologically equivalent drawings~\cite{c-dplc-44}; a matching algorithmic result appeared in 1983~\cite{t-dpg-83}. 
Recently crossing-free morphs of straight-line drawings are considered, where the vertex trajectories are simple. 
Of particular interest is morph transforming one straight-line drawing $\Gamma$ of a graph $G$ to another such drawing $\Gamma'$
through a sequence $\langle \Gamma = \Gamma_1,\Gamma_2,\ldots,\Gamma_k = \Gamma' \rangle$ of straight-line drawings. \emph{Linear morph} between $\langle \Gamma_i,\Gamma_{i+1}\rangle$ is a linear interpolation between the corresponding drawings and is also called a \emph{morphing step}, or simply a \emph{step}, for brevity.  In each step of a linear morph, each vertex of $G$ moves along a straight-line segment at a constant speed. 
Results~\cite{SODA-morph,angelini2013morphing, Barrera-unidirectional, Angelini-optimal-14} led up to a seminal paper by Alamdari et al.~\cite{alamdari2017morph} 
showing that for any two topologically equivalent planar drawings of a graph $G$, there is a linear $2D$ morph that transforms one drawing to the other in $\Theta(n)$ steps.
 This bound is asymptotically optimal in the worst case even when the graph $G$ is a path.  
A natural further question is how  the situation changes when we involve the third dimension. For general 3D graph drawings the problem seems challenging: 
it  is tightly connected to \emph{unknot recognition} problem, that is in $\NP \cap \coNP$~\cite{hlp-ccklp-99,l-ecktn-16}, and its containment in $\textbf{P}$ is wide open. 
If the given graph is a tree, the worst-case tight bound of $\Theta(n)$ steps holds for $3D$ crossing-free linear morph~\cite{PoleDance} (and the lower-bound example is again a path). 
If both the initial and the final drawing  are planar, then  $\mathcal{O}(\log n)$ steps suffice~\cite{PoleDance}. 

Both algorithmic results~\cite{alamdari2017morph,PoleDance}
have a drawback crucial  from the practical point of view. Their intermediate steps use infinitesimal or very small distances, as compared to distances in the input drawings. 
This may blow up the space requirements and affect the aesthetical aspect. 
This raises a demand for morphing algorithms that operate on a small grid, i.e., of size polynomial in the size of the graph and parameters of the input drawings. 
All the intermediate  drawings are then restricted to be \emph{grid drawings}, where vertices map to nodes of the grid.
Two crucial parameters of a straight-line grid drawing are: the area (or volume for the 3D case) of the required grid, and the \emph{resolution}, 
that is the ratio between the maximum edge length and the minimum vertex-edge distance. 
 If the grid area (or volume) is polynomially bounded, then so is the resolution~\cite{bblbfpr19}. 

Barrera-Cruz et al.~\cite{B-CHL18} gave an algorithm to morph between two straight-line planar Schnyder drawings of a triangulated graph in $\mathcal{O}(n^2)$ steps; all the intermediate drawings of this morph lie in a grid of size $\mathcal{O}(n) \times \mathcal{O}(n)$. Very recently  Barrera-Cruz et al.~\cite{bblbfpr19} gave an algorithm that linearly morphs between two planar straight-line grid drawings $\Gamma$ and $\Gamma'$ of an $n$-vertex rooted tree in $\mathcal{O}(n)$ steps while each intermediate drawing is also a planar straight-line drawing in a bounded grid. In particular, the maximum grid length and width is respectively $\mathcal{O}(D^3n \cdot L)$ and $\mathcal{O}(D^3n\cdot W)$, where $L=max\{l(\Gamma), l(\Gamma')\}$, $W=max\{w(\Gamma), w(\Gamma')\}$ and $D=max\{H, W\}$, $l(\Gamma)$ and $w(\Gamma)$ are the \emph{length} and the \emph{height} of the drawing $\Gamma$ respectively. Note that $D$ is $\Omega(\sqrt{n})$.

Here we study morphing one straight-line grid drawing $\Gamma$ of a tree
 to another such drawing $\Gamma'$ in \emph{sublinear} number of steps using three dimensions for intermediate steps. Effectively we ask the same question as in~\cite{PoleDance}, but now with the additional restriction that all drawings throughout the algorithm should lie in a small grid. 
We give two algorithms that require $\mathcal{O}(n)$ steps and  $\mathcal{O}(\sqrt{n} \log n)$ steps, respectively. 
All the intermediate drawings require 
 a $3D$ grid of length $\mathcal{O}(d^3(\Gamma) \cdot \log n)$,  width $\mathcal{O}(d^3(\Gamma) \cdot \log n)$ and height $\mathcal{O}(n)$, where $d(\Gamma)$ is the \emph{diameter} of the drawing $\Gamma$.
At the expense of using an extra dimension, we significantly decrease the number of morphing steps, and also the required (horizontal) area, as compared to the planar result~\cite{bblbfpr19}. 
Our algorithm is the first one to morph between planar drawings of a tree in $o(n)$ steps on a grid of polynomial volume.\par
In this work, we morph a $2D$ crossing free drawing of an $n$-vertex tree to the $3D$ canonical drawing of it and then analogously morph the  $3D$ canonical drawing to the final drawing. During the procedure, we use some standard techniques, e.g., canonical drawing~\cite{PoleDance}, rotation~\cite{B-CHL18}. Since a simple rotation on a fixed angle can not guarantee that the integral points map to the integral points, we generalize the concept of rotation to mapping. 

In Section~\ref{Sec:4}, we introduce a technique of lifting paths such that the vertices on the path along with their subtrees go the respective canonical positions and the drawing remains crossing free. The algorithm in Sec.~\ref{Sec:4} splits the given tree into disjoint paths that are lifted one by one in specific order by the described technique. 

In Section~\ref{sec:second_alg}, we introduce a technique of lifting a set of edges of the given tree. It is used for the algorithm that lifts the tree by dividing its edges into disjoint sets and lifting them one after another. We then merge the described algorithms together to produce an algorithm that uses $o(n)$ morphing steps.


\section{Preliminaries and Definitions} \label{sec:defs}


\paragraph{\textbf{Tree drawings}.}
For a tree $T$, let $r(T)$ be its root, and $T(v)$ be the subtree of $T$ rooted at a vertex $v$.  Let $V(T)$ and $E(T)$ be respectively the set of vertices and edges
 of $T$, and let $|T|$ denote the number of vertices in $T$. 

In a \textit{straight-line drawing}  of  $T$, each vertex $u$ is mapped to a point in $\mathbb{R}^2$ 
and each edge is mapped  to a straight-line segment connecting its end-points. 
A \emph{$3D$-} (respectively, a \emph{$2D$-}) \emph{grid drawing} of $T$ is a straight-line drawing where each vertex is mapped to a point with integer coordinates in $\mathbb{R}^3$ (respectively, $\mathbb{R}^2$). A  drawing  of $T$ is said to be \emph{crossing-free} if images of no two edges intersect except, possibly, at common end-points. 
A crossing-free $2D$-grid drawing is called a \emph{planar grid drawing}. For a crossing-free drawing $\Gamma$, let $\ball{\Gamma}{v}{r}$ denote the open disc
 of radius $r$ in the $XY_0$ plane centered at the image $\Gamma(v)$ of $v$. Unless stated otherwise, by the \textit{projection}, denoted by $pr()$, we mean the vertical projection to the $XY_0$ plane. 
Let $l(\Gamma)$, $w(\Gamma)$ and $h(\Gamma)$ respectively denote the \textit{length}, \textit{width} and \textit{height} of $\Gamma$, i.e., the maximum  absolute difference between the $x$-,$y$- and $z$-coordinates of vertices in $\Gamma$. Let  $d(\Gamma)$ denote the \textit{diameter} of $\Gamma$, defined as the ceiling of the maximum pairwise (euclidean) distance between its vertices. Given a vertex $v$ and an edge $e$ of the tree $T$, $\dist{\Gamma}{v}{e}$ denotes the distance between $\Gamma(v)$ and $\Gamma(e)$. Similarly, $\dist{\Gamma}{v}{u}$ is the distance between $\Gamma(v)$ and $\Gamma(u)$. For a  $3D$- (Also, a $2D$) grid drawing $\Gamma$, we define the resolution of $\Gamma$ as the ratio of the distances between the farthest and closest pairs of geometric objects where points
represent  images of nodes and line segments represent edges in $\Gamma$. 
We  define the vertical plane $Y=0$ (respectively, $X=0$)  by $XZ_0$ ($YZ_0$). We denote by $XZ^+_{0}$ ($XZ^-_{0}$) the vertical half-plane of $XZ_0$ with horizontal border-line going through the origin in $X$-positive (negative) direction. Similarly,  $XZ^+_{v}$ ($XZ^-_{v}$) is the shifted half-plane  $XZ^+_{0} (XZ^-_{0})$ where the origin is translated to the vertex $v$ and the horizontal border-line going through the vertex $v$.   $YZ^+_{v} (YZ^-_{v})$ denote the half-planes parallel to $YZ_0$ passing through $v$ in  $Y$-positive (negative) direction. We also define the horizontal plane passing through the origin by $XY_0$. We similarly define $XY^+_{0}$ and $XY^+_{v}$.

\begin{observation}
\label{obs:1}
 For $\Gamma$, $ M \leq d(\Gamma) \leq \sqrt{3}M$, where $M = \max(l(\Gamma), w(\Gamma), h(\Gamma))$. 
Thus 
to estimate the space 
required by $\Gamma$, it is enough to estimate  $d(\Gamma)$. 
\end{observation}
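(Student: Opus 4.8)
The plan is to reduce the statement to a single quantity, the maximum pairwise Euclidean distance $\delta := \max_{u,v \in V(T)} \dist{\Gamma}{u}{v}$, show $M \le \delta \le \sqrt{3}M$ directly from the coordinate-wise definitions of $l(\Gamma)$, $w(\Gamma)$, $h(\Gamma)$, and only at the end pass to $d(\Gamma) = \lceil \delta \rceil$. Both inequalities are elementary consequences of the $3D$ distance formula, so I expect no genuine obstacle; the only point requiring a word of care is the ceiling in the definition of $d(\Gamma)$.

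For the lower bound I would assume without loss of generality that $M = l(\Gamma)$, so that there exist vertices $u,v$ whose $x$-coordinates differ in absolute value by exactly $M$. Since the Euclidean distance between two points is at least the absolute difference of any one of their coordinates, $\dist{\Gamma}{u}{v} \ge M$, hence $\delta \ge M$. Taking the ceiling can only increase the value, so $d(\Gamma) = \lceil \delta \rceil \ge \delta \ge M$, giving the first inequality.

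For the upper bound I would take an arbitrary pair of vertices $u,v$ and note that each of the three coordinate differences is bounded in absolute value by the corresponding dimension $l(\Gamma)$, $w(\Gamma)$, $h(\Gamma)$, and therefore by $M$. Substituting into the distance formula, $\dist{\Gamma}{u}{v} = \sqrt{(\Delta x)^2 + (\Delta y)^2 + (\Delta z)^2} \le \sqrt{M^2 + M^2 + M^2} = \sqrt{3}\,M$, so $\delta \le \sqrt{3}M$. (This also covers the planar case $d=2$, where $h(\Gamma)=0\le M$ and the $\sqrt 3$ bound holds a fortiori.)

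Combining the two estimates yields $M \le \delta \le \sqrt{3}M$, i.e.\ $\delta = \Theta(M)$. The concluding remark then follows because $M$ is exactly the longest side of the axis-aligned bounding box of $\Gamma$, so the required grid volume is $\Theta(M^3)$ and is thus determined by $d(\Gamma)$ up to constant factors. The only technical subtlety is the ceiling: the upper bound sharpens only to $d(\Gamma)=\lceil\delta\rceil \le \lceil \sqrt{3}M \rceil$, which exceeds $\sqrt{3}M$ by less than one and is immaterial for the asymptotic space estimates for which this observation is invoked.
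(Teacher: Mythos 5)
Your proof is correct; the paper states this as an \emph{Observation} with no accompanying proof, and your elementary coordinate-wise argument (lower bound from a single coordinate difference, upper bound from the $3$D distance formula) is exactly the intended justification. You are also right to flag the ceiling in the definition of $d(\Gamma)$: the literal inequality $d(\Gamma)\le\sqrt{3}M$ can be off by a fractional amount, but as you note this is immaterial for the asymptotic space estimates for which the observation is used.
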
 

\begin{lemma}
\label{lemma:pick}
\begin{enumerate}
	\item~\cite{bblbfpr19} For any vertex $v$ and edge $e$ not incident to $v$ in a planar grid drawing of $T$, $dist(v,e) \geq \frac{1}{d(\Gamma)}$.
	\item For each $v \in V(T)$, let $\Gamma_1(v)$ be equal to $(c \cdot \Gamma(v)_x, c \cdot \Gamma(v)_y)$ where $c$ is an integral constant. The  distance between any two distinct vertices in $\Gamma_1$ is at least $c$. 
\end{enumerate}
 
\end{lemma}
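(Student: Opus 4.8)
The plan is to handle the two parts separately. Part 1 is attributed to prior work~\cite{bblbfpr19}, so I would simply invoke it; the substance to prove is Part 2. For that, the key observation is that $\Gamma$ is a planar \emph{grid} drawing, so every vertex is mapped to a point with integer coordinates, and distinct vertices map to distinct lattice points. First I would fix two distinct vertices $u,v \in V(T)$ and write $\Gamma(u) = (x_1,y_1)$ and $\Gamma(v)=(x_2,y_2)$ with all coordinates integral. The squared Euclidean distance $(x_1-x_2)^2 + (y_1-y_2)^2$ is then a nonnegative integer, and since $\Gamma(u)\neq\Gamma(v)$ it is strictly positive, hence at least $1$. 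Therefore $\dist{\Gamma}{u}{v} \geq 1$.

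The second step is to track how the coordinate transformation affects distances. By definition $\Gamma_1(v) = (c\cdot\Gamma(v)_x,\, c\cdot\Gamma(v)_y)$, so passing from $\Gamma$ to $\Gamma_1$ is a uniform scaling by the factor $c$ about the origin. A uniform scaling multiplies every pairwise distance by exactly $c$, i.e.\ $\dist{\Gamma_1}{u}{v} = c\cdot\dist{\Gamma}{u}{v}$. Combining this with the lower bound $\dist{\Gamma}{u}{v}\geq 1$ from the first step gives $\dist{\Gamma_1}{u}{v}\geq c$ for every pair of distinct vertices, which is the claim.

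I do not expect any genuine obstacle here: the argument is elementary and relies only on integrality. The one point worth stating carefully — and the crux that makes the bound clean — is the integrality step showing that the minimum nonzero distance between distinct lattice points is at least $1$; everything else is a routine consequence of the fact that scaling by $c$ rescales all distances by the same factor.
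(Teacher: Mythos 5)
Your proposal is correct and follows essentially the same route as the paper: both arguments reduce to the observation that distinct lattice points in $\Gamma$ are at distance at least $1$ (the paper phrases this as one coordinate difference being at least $1$, you phrase it via the squared distance being a positive integer) and then conclude that multiplying all coordinates by $c$ scales this lower bound to $c$. No gap; the only implicit assumption, shared with the paper, is that $c$ is a positive integer, which holds in every use of the lemma.
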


\begin{proof}
\begin{enumerate}
\item Let us fix a vertex $v \in V$ and an edge $e = (u, w) \in E$ such that $v \neq u, w$. Let $h$ be the distance between $v$ and the line spanned by $u,w$. height in the triangle $(v, u, w)$ from the vertex $v$,  Let $S$ be the area of triangle spanned by $u,v,w$.

Then, $S = \frac{1}{2} \cdot h \cdot dist_{\Gamma}(u, w) \leq \frac{d(\Gamma)}{2} \cdot h$.

On the other hand, from the Pick's theorem, we know that the area of the triangle with vertices lying in the lattice points of the grid is at least $\frac{1}{2}$.

Then, $\frac{1}{2} \leq S \leq \frac{d(\Gamma)}{2} \cdot h \Rightarrow h \geq \frac{1}{d(\Gamma)}$.\par

\item For any pair of distinct vertices $v_i, v_j$, either  $|\Gamma(v_i)_x - \Gamma(v_j)_x|$ or $|\Gamma(v_i)_y - \Gamma(v_j)_y|$ is at least $1$ since all vertices have integer coordinates in $\Gamma$.
 This implies that, for each pair of distinct vertices $v_i, v_j$, either  $|\Gamma_1(v_i)_x - \Gamma_1(v_j)_x|$ or $|\Gamma_1(v_i)_y - \Gamma_1(v_j)_y|$ is at least $c$ since each coordinate of each vertex in $\Gamma$ is multiplied by $c$ to obtain $\Gamma_1$. \qed
\end{enumerate}
\end{proof}

\begin{lemma}
\label{lemma:pick3d}
\begin{enumerate}
	\item For any vertex $v$ and edge $e$ not incident to $v$ in a $3D$-crossing free grid drawing of $\Gamma$ of $T$, $dist(v,e) \geq \frac{1}{d(\Gamma)}$.
	\item For a pair of non-adjacent edges $e_1, e_2$, the distance $dist(e_1,e_2) \geq \dfrac{1}{2\sqrt{3}\left(d(\Gamma)\right)^2}$ in a $3D$ crossing free grid drawing $\Gamma$ of $T$. 
\end{enumerate}
 
\end{lemma}

\begin{proof}
\begin{enumerate}
\item Let $e$ is spanned by the vertices $v_1,v_2$. If $v, v_1,v_2$ are co-linear, then $dist(v,e) \geq 1$ since each vertex is mapped to a point with an integral co-ordinate in $\mathbb{R}^3$. Let us assume that the points are not co-linear. Let the co-ordinates of points be $v=(x_0,y_0,z_0), v_1= (x_1,y_1,z_1)$ and $v_2=(x_2,y_2,z_2)$. The area of the triangles is $\dfrac{|\overrightarrow{v_0v_1}\times \overrightarrow{v_0v_2}|}{2}$, where $$\overrightarrow{v_0v_1}\times \overrightarrow{v_0v_2}=\begin{vmatrix}
\uvec{i} & \uvec{j} & \uvec{k} \\ 
x_1-x_0 & y_1-y_0 & z_1-z_0 \\ 
x_2-x_0 & y_2-y_0 & z_2-z_0 
\end{vmatrix}$$. Since all the points have integral co-ordinates and they a re not co-linear, $|\overrightarrow{v_0v_1}\times \overrightarrow{v_0v_2}| \geq 1$. This implies that the area  of the triangle spanned by $v,v_1,v_2$ is at least $\dfrac{1}{2}$. On the other hand, the area of the same triangle is  $\dfrac{1}{2}\cdot dist(v,e) \cdot dist(v_1,v_2) \leq \dfrac{1}{2}\cdot dist(v,e) \cdot d(\Gamma)$. This implies that $dist(v,e) \geq \dfrac{1}{ d(\Gamma)}$.
\item Let us assume that  $e_1$ is spanned by the vertices $v_1,v_2$ and $e_2$ is spanned by the vertices $v_3,v_4$. If these four points lie on the same plane, the problem maps to the previous one. We assume that these four points do not lie on a plane. Let the co-ordinates of points be $v_1= (x_1,y_1,z_1), v_2=(x_2,y_2,z_2), v_3=(x_3,y_3,z_3)$ and $v_4=(x_4,y_4,z_4)$. Then the volume of the tetrahedron spanned by these four points  is given by $\dfrac{dist(e_1,e_2)\cdot|\overrightarrow{v_1v_2}\times \overrightarrow{v_3v_4}|}{6}$. On the other hand the volume of the same tetrahedron is given by the following determinant.
$$\dfrac{1}{6}.
\left|~
\begin{vmatrix}
x_1-x_2 & y_1-y_2 & z_1-z_2 \\ 
x_2-x_3 & y_2-y_3 & z_2-z_3\\
x_3-x_4 & y_3-y_4 & z_3-z_4
\end{vmatrix}~\right|
$$

This implies that the volume of the determinant spanned by these four vertices is at least $\dfrac{1}{6}$, since the points have integral co-ordinates. This also implies that $dist(e_1,e_2) \geq \dfrac{1}{|\overrightarrow{v_1v_2}\times \overrightarrow{v_3v_4}|} \geq \dfrac{1}{2\sqrt{3}\left(d(\Gamma)\right)^2}$.

\end{enumerate}
\end{proof}


\paragraph{\textbf{Path decomposition}}

$\mathcal{P}$ of a tree $T$ is a decomposition of its edges into a set of disjoint paths as follows. Choose a path in $T$ and store it in $\mathcal{P}$. Removal of this path may disconnect the tree; recurse on the remaining connected components. Note  that during the removal of a path, we only delete its edges, not the vertices. In the end, $\mathcal{P}$ contains disjoint paths whose union is $T$. The way of choosing paths  may differ. The depth of a vertex $v$ in $T$, denoted by $dpt(v)$, is defined as the length of the path from $r(T)$ to $v$. \textit{Head} of the path $P$, denotes as $head(P)$, is the vertex $x \in P$ with the minimum depth in tree $T$. Let the \textit{internal vertices} of the path be all vertices except $head(P)$. 
Any path decomposition $\mathcal{P}$ of $T$ induces a linear order of the paths: path $P'$ succeeds $P$, i.e., ${P' \succ P}$, if and only if $P'$ is deleted before $P$
 during the construction of $\mathcal{P}$. Note that the subtree of each internal vertex of a path $P$ is a subset of the union  of the  paths that precede $P$. 

In the \textit{long-path decomposition}~\cite{bender_laps}, the path chosen in every iteration is the longest root-to-leaf path (ties are broken arbitrarily). Let $\mathcal{L} = \{L_1, \ldots, L_m\}$ be the ordered set 
of paths of a long-path decomposition of $T$. For $i <j$,  $|L_i| \leq |L_j|$.


\textbf{Heavy-rooted-pathwidth decomposition~\cite{PoleDance}.} The \textit{Strahler number} or \textit{Horton-Strahler number} of a tree is a parameter which was introduced by Horton and Strahler~\cite{h-ed-45, s-ha-52, s-ha-57}. The same parameter was recently rediscovered by Biedl~\cite{therese} with the name of \textit{rooted pathwidth} when addressing the problem of computing upward tree drawings with optimal width.




\begin{figure*}[!ht]
  \centering
  \subfloat[]{\centering \includegraphics[scale=0.7]{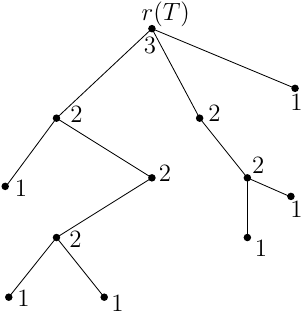}\label{fig:rpw}}
  \centering
  \subfloat[]{\centering \includegraphics[scale=0.7]{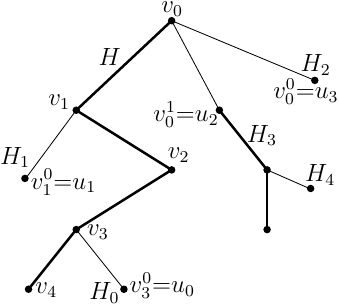}\label{fig:heavy-path-2}}
  \centering
  \subfloat[]{\centering \includegraphics[scale=1]{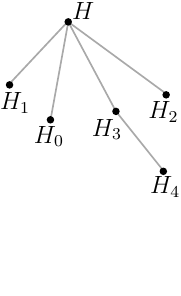}\label{fig:heavy-path-tree}}
  \caption{~\cite{PoleDance} The illustration (a) shows, for each vertex $v$ of a tree $T$, the number $rpw(T(v))$. In particular, $rpw(T) = 3$. The illustration (b) shows with bold lines the heavy edges of $T$ forming the heavy paths $H, H_0, \ldots, H_4$. The illustration (c) shows the path tree of $T$}
\end{figure*}

The \textit{rooted pathwidth} of a tree $T$, which we denote by $rpw(T)$, is defined as follows. If $|T| = 1$, then $rpw(T) = 1$. Otherwise, let $k$ be the maximum rooted pathwidth of any subtree rooted at a child of $r(T)$. Then $rpw(T) = k$ if exactly one subtree rooted at a child of $r(T)$ has rooted pathwidth equal to $k$, and $rpw(T) = k + 1$ if more than one subtree rooted at a child of $r(T)$ has rooted pathwidth equal to $k$; see Figure~\ref{fig:rpw}. Clearly $rpw(T)$ is an integer number.

The \textit{heavy-rooted-pathwidth decomposition} of a tree $T$ is defined as follows refer to Figure~\ref{fig:heavy-path-2}. For each non-leaf vertex $v$ of $T$, let $c^*$ be the child of $v$ in $T$ such that $rpw(T(c^*))$ is maximum (ties are broken arbitrarily). Then $(v, c^*)$ is a heavy edge; further, each child $c \neq c^*$ of $v$ is a \textit{light child} of $v$, and the edge $(v, c)$ is a \textit{light edge}. Connected components of heavy edges form set of paths $\mathcal{H}(T) = \{H, H_0, \ldots, H_k\}$, called \textit{heavy paths}, which may have many incident light edges. The \emph{path tree} of $T$ is a tree whose vertices correspond to heavy paths in $T$; see Fig.~\ref{fig:heavy-path-tree}. The parent of a heavy path $P$ in the path tree is the heavy path that contains the parent of the vertex with minimum depth in $P$. The root of the path tree is the heavy path containing $r(T)$.

For convenience, we will also consider the light edge immediately above a heavy path to be a part of the path.  When it is clear from the context we will refer to $\mathcal H(T)$ simply as $\mathcal{H}$.  

We denote by $H$ the root of the path tree of $T$; let $v_0, \ldots, v_{m - 1}$ be the ordered sequence of the vertices of $H$, where $v_0 = r(T)$. For $i = 0, \ldots, m - 1$, we let $v^0_i, \ldots, v^{t_i}_i$ be the light children of $v_i$ in any order. Let $L = u_0, u_1, \ldots, u_{l - 1}$ be the sequence of the light children of $H$ ordered so that: (i) any light child of a vertex $v_j$ precedes any light child of a vertex $v_i$, if $i < j$; and (ii) the light child $v^{j + 1}_i$ of a vertex $v_i$ precedes the light child $v^j_i$ of $v_i$. For a vertex $u_i \in L$, we denote by $p(u_i)$ its parent; note that $p(u_i) \in H$. It is known~\cite{therese}  that the height of the path tree of an $n$-vertex tree $T$ is at most $rpw(T) \in \mathcal{O}(\log n)$. \\

Fig.~\ref{fig:canonical_drawing} shows the heavy-rooted-pathwidth decomposition of our running example where heavy paths are shown in different colors. 

\emph{The canonical 3D drawing} of a tree $T$, introduced in~\cite{PoleDance}, is the crossing-free straight-line 3D drawing of $T$ that maps each vertex $v$ of $T$ to its \textit{canonical position} $\mathcal{C}(v)$ defined by the heavy-rooted pathwidth decomposition.

The canonical drawing of a tree $T$, denoted by $\mathcal{C}(T)$, is defined as follows:

\begin{itemize}
	\item First, we set $\mathcal{C}(v_0) = (0, 0, 0)$ for the root $v_0$ of $T$.
	\item Second, for each $i = 1, \ldots, k-1$, we set $\mathcal{C}(v_i) = (0, 0, z_{i-1} + |T(v_{i-1})| - |T(v_i)|)$, where $z_{i-1}$ is the $z$-coordinate of $\mathcal{C}(v_{i-1})$.
	\item Third, for each $i = 1, \ldots, k - 1$ and for each $j = 0, \ldots, t_i$, we determine $\mathcal{C}(v^{j}_i)$ as follows. If $j = 0$, then we set $\mathcal{C}(v^j_i) = (1, 0, 1 + z_i)$, where $z_i$ is the $z$-coordinate of $\mathcal{C}(v_i)$; otherwise, we set $\mathcal{C}(v^j_i) = (1, 0, z^{j-1}_i + |T(v^{j-1}_i)|)$, where $z^{j-1}_i$ is the $z$-coordinate of $\mathcal{C}(v^{j-1}_i)$.
	\item Finally, in order to determine the canonical positions of the vertices in $T(v^j_i) \setminus \{v^j_i\}$, for each $i = 0, \ldots, k - 2$ and each $j = 0, \ldots, t_i$, we recursively construct the canonical 3D drawing  of $T(v^j_i)$, and translate all the vertices by the same vector so that $v^j_i$ is sent to $\mathcal{C}(v^j_i)$.
\end{itemize}

We  use the fact that $\mathcal{C}(T)$ lies in the positive the $XZ_0$ quarter-plane, inside a bounding box of height $|T|$ and width $rpw(T)$ such that the left-bottom point of the box represents the image of the root $v$ and is mapped to the origin. Since tree $T$ never changes throughout our algorithm, we refer $rpw(T)$ as  $rpw$.

\begin{figure*}[!ht]
  \centering
  \subfloat[]{\centering \includegraphics[scale=0.5]{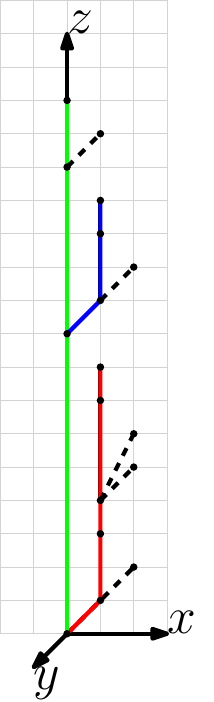}\label{fig:canonical_drawing}}
  \centering
  \subfloat[]{\centering \includegraphics[scale=0.5]{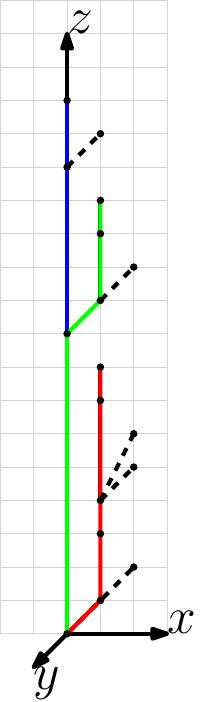}\label{fig:canon2}}

  \caption {An example of the canonical drawing of a tree. Illustrations (a) and (b) show heavy and long paths respectively with different colors.}
\end{figure*}

For any vertex $v$ of $T$, the \emph{relative canonical drawing} $\mathcal{C}_{T_v}$ of $T(v)$ is the drawing of $T(v)$ obtained by cropping it from $\mathcal{C}(T)$ and translating the obtained drawing of $T(v)$ so that $v$ is mapped to the origin. 
 
\paragraph{\textbf{Morph:}}

Let $\Gamma$ and $\Gamma'$ be two planar straight-line drawings of $T$. Then a morph $\mathcal{M}$ is a sequence $\langle \Gamma_1, \Gamma_2,\ldots, \Gamma_k \rangle$ of 3D straight-line drawings of $T$ such that $\Gamma_1 = \Gamma, \Gamma_k = \Gamma'$, and $\langle \Gamma_i, \Gamma_{i+1} \rangle$ is a linear morph, for each $i = 1, \ldots, k - 1$. 

\section{Tools for morphing algorithms}
\label{sec:Tools}
\paragraph{\textbf{Stretching  with a constant $\mathcal{S}_1$.}} 
\label{sec:stretching}

Let the drawing $\Gamma$ be lying in the the $XY_0$ plane. During the \emph{stretching morph} each coordinate of each vertex in $\Gamma$ is multiplied by a common positive integer constant $\mathcal{S}_1$. Thereby, it is a linear morph that "stretches" the vertices apart.

\begin{lemma} 
\label{lemma:stretching_morph_is_crossing_free}
The stretching morph is a crossing-free morph.
\end{lemma}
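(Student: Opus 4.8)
The plan is to exploit the fact that, throughout the stretching morph, every intermediate drawing is simply a uniformly rescaled copy of the single fixed drawing $\Gamma$. Since a uniform rescaling by a positive factor is a similarity of the plane, it cannot create any edge crossing that was not already present, and $\Gamma$ itself is crossing-free.

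First I would parametrize the trajectory of each vertex. Assuming the step runs over $t \in [0,1]$, the position of a vertex $v$ at time $t$ is the linear interpolation $(1-t)\,\Gamma(v) + t\,\mathcal{S}_1\,\Gamma(v) = \lambda(t)\,\Gamma(v)$, where $\lambda(t) = 1 + t(\mathcal{S}_1 - 1)$. Because $\mathcal{S}_1$ is a positive integer, $\lambda(t)$ is a convex combination of $1$ and $\mathcal{S}_1$, both of which are at least $1$; hence $\lambda(t) > 0$ for every $t \in [0,1]$. This positivity is the one point that must be checked, since it guarantees the intermediate scaling never degenerates.

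Next I would observe that the intermediate drawing $\Gamma_t$ is exactly the image of $\Gamma$ under the homothety $x \mapsto \lambda(t)\,x$ centered at the origin. Indeed, each edge $(u,v)$ is drawn at time $t$ as the segment joining $\lambda(t)\,\Gamma(u)$ and $\lambda(t)\,\Gamma(v)$, which is precisely $\lambda(t)$ times the segment joining $\Gamma(u)$ and $\Gamma(v)$; thus $\Gamma_t = \lambda(t)\cdot\Gamma$ as a whole drawing. A homothety with positive ratio is an injective affine similarity, so it maps segments to segments and preserves incidences and intersections: two edges meet in $\Gamma_t$ if and only if their preimages meet in $\Gamma$. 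Since $\Gamma$ is a planar (crossing-free) drawing, each $\Gamma_t$ is crossing-free as well, and therefore the entire morph is crossing-free.

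The only subtlety, rather than a genuine obstacle, is the boundary case of the scaling factor: one must confirm $\lambda(t)$ stays strictly positive so the map remains a nondegenerate similarity, which is immediate from $\mathcal{S}_1 \geq 1$. Everything else follows directly from the observation that the morph is a continuous uniform rescaling of one fixed planar drawing, so the combinatorial intersection pattern is preserved at every instant.
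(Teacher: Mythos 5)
Your proof is correct and follows essentially the same route as the paper: both identify the intermediate drawing at time $t$ as the original drawing $\Gamma$ uniformly scaled by the factor $t\mathcal{S}_1 + (1-t)$, and conclude that crossing-freeness is preserved since scaling by a positive factor preserves the intersection pattern. Your explicit check that the scaling factor stays strictly positive is a small but welcome addition that the paper leaves implicit.
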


\begin{proof}
 In the drawing  $\Gamma^t$, for  $t \in [0, 1]$,  the image of a vertex $v$ of $T$ is  $\{t\Gamma_1(v) + (1 - t)\Gamma(v)\}$. This implies that it is the same drawing $\Gamma$ scaled by $t\mathcal{S}_1 + (1 - t)$. Since the original drawing $\Gamma$ was crossing-free, so is the drawing $\Gamma^t$. \qed
\end{proof}



\begin{lemma} 
 \label{lemma:circle_around}
\begin{enumerate}
	\item For any pair  $v_i, v_j$ of vertices,  disks \ball{\Gamma_1}{v_i}{\frac{\mathcal{S}_1}{2}} and \ball{\Gamma_1}{v_j}{\frac{\mathcal{S}_1}{2}} do not cross in the $XY_0$ plane.
	\item For a vertex $v_i$ in $\Gamma_1$, disk \ball{\Gamma_1}{v_i}{\frac{\mathcal{S}_1}{2 \cdot d(\Gamma)}} does not enclose any other vertices or any part of edges non-incident to $v_i$.
	\item For every vertex $v$ and every edge $e = (v, u)$ in $\Gamma_1$, there is lattice point $z$ such that $z \in e$ and $z \in \ball{\Gamma_1}{v_i}{d(\Gamma)}$.
\end{enumerate}
\end{lemma}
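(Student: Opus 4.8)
The plan is to derive all three statements from two distance lower bounds that the stretching inherits from $\Gamma$. Writing $\Gamma_1(v)=\mathcal{S}_1\,\Gamma(v)$ for every vertex $v$, Lemma~\ref{lemma:pick}(2) with $c=\mathcal{S}_1$ shows that any two distinct vertices of $\Gamma_1$ lie at distance at least $\mathcal{S}_1$, while Lemma~\ref{lemma:pick}(1), applied to the planar grid drawing $\Gamma$ and then scaled by the factor $\mathcal{S}_1$, gives $\dist{\Gamma_1}{v}{e}\ge \mathcal{S}_1/d(\Gamma)$ for every vertex $v$ and every edge $e$ not incident to $v$. Since $\Gamma$ lies in the $X0Y$ plane and $\mathcal{S}_1$ is a positive constant, $\Gamma_1$ remains in that plane, so all the discs in the statement are genuine planar discs and these two bounds are the only inputs I will use.

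For part 1, I will use the elementary fact that two open discs of radius $r$ are disjoint exactly when their centres are at distance at least $2r$. Taking $r=\mathcal{S}_1/2$ and invoking the vertex--vertex bound $\mathcal{S}_1$, the discs \ball{\Gamma_1}{v_i}{\frac{\mathcal{S}_1}{2}} and \ball{\Gamma_1}{v_j}{\frac{\mathcal{S}_1}{2}} cannot overlap. For part 2, the radius $\mathcal{S}_1/(2\cdot d(\Gamma))$ is, because $d(\Gamma)\ge 1$, strictly smaller than both $\mathcal{S}_1$ and $\mathcal{S}_1/d(\Gamma)$; hence no other vertex (at distance at least $\mathcal{S}_1$) and no point of a non-incident edge (at distance at least $\mathcal{S}_1/d(\Gamma)$) can lie inside \ball{\Gamma_1}{v_i}{\frac{\mathcal{S}_1}{2\cdot d(\Gamma)}}. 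Both parts are therefore immediate consequences of the two inherited bounds.

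Part 3 is the step that needs a genuine lattice argument, and it is where I expect the only real work. Because $\Gamma$ is a grid drawing and $\mathcal{S}_1$ is an integer, both endpoints $\Gamma_1(v)$ and $\Gamma_1(u)$ of the edge $e=(v,u)$ are lattice points, so the lattice points lying on the segment $e$ are equally spaced. Writing $\Gamma(u)-\Gamma(v)=(a,b)$ with $a,b\in\mathbb{Z}$ and $g=\gcd(|a|,|b|)$, the stretched difference is $\mathcal{S}_1(a,b)$ with $\gcd(\mathcal{S}_1|a|,\mathcal{S}_1|b|)=\mathcal{S}_1 g$, so consecutive lattice points on $e$ are separated by $\frac{1}{g}\sqrt{a^2+b^2}$, i.e.\ by the length of the original edge divided by $g$. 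In particular the first lattice point after $\Gamma_1(v)$, namely $z=\Gamma_1(v)+\frac{1}{g}(a,b)$, has integer coordinates, lies on $e$, and sits at distance $\frac{1}{g}\sqrt{a^2+b^2}\le\sqrt{a^2+b^2}$ from $\Gamma_1(v)$. Since $\sqrt{a^2+b^2}$ is the Euclidean length of the edge $(v,u)$ in $\Gamma$, it is at most the maximum pairwise distance in $\Gamma$ and hence at most $d(\Gamma)$; therefore $z\in\ball{\Gamma_1}{v}{d(\Gamma)}$. The only subtlety I anticipate is the boundary case where the edge realises the diameter and $g=1$, so that $z$ lands on the bounding circle; this is harmless, since the disc can be read as closed here, or one observes that $z$ lies strictly inside as soon as $g\ge 2$.
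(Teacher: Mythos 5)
Your proof is correct and follows essentially the same route as the paper: parts 1 and 2 are derived, exactly as in the paper, from the scaled vertex--vertex bound $\mathcal{S}_1$ and the scaled vertex--edge bound $\mathcal{S}_1/d(\Gamma)$ coming from Lemma~\ref{lemma:pick}. For part 3 the paper simply exhibits the lattice point $\Gamma_1(v)+\bigl(\Gamma(u)-\Gamma(v)\bigr)$, which lies on the stretched edge at distance equal to the original edge length $\le d(\Gamma)$; your gcd argument locates the first lattice point on the segment and is a mild refinement of the same idea (and your remark about the boundary case of the open disc is, if anything, more careful than the paper, whose witness point always sits at distance exactly the original edge length).
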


\begin{proof}
\begin{enumerate}
	\item Since the disk \ball{\Gamma}{v_i}{1} contains no other vertices $v_j, j \neq i$, then \ball{\Gamma_1}{v_i}{\mathcal{S}_1} does not contain other vertices too. The \ball{\Gamma_1}{v_i}{\frac{\mathcal{S}_1}{2}}, i.e., the disk with the radius $\frac{\mathcal{S}_1}{2}$, does not intersect with any \ball{\Gamma_1}{v_j}{\frac{\mathcal{S}_1}{2}} (for $j \neq i$) as $\mathcal{S}_1$ is the stretching factor.

	\item Due to Lemma~\ref{lemma:pick},  no non-incident edges intersects the disk of radius $\frac{1}{d(\Gamma)}$ around $v_i$  in $\Gamma$. That means that in $\Gamma_1$, where all distances are multiplied by $\mathcal{S}_1$, no non-incident edges intersect the disk of radius $\frac{\mathcal{S}_1}{d(\Gamma)}$ around $v_i$. Other vertices do not lie in \ball{\Gamma_1}{v_i}{\frac{\mathcal{S}_1}{2 \cdot d(\Gamma)}} as $\ball{\Gamma_1}{v_i}{\frac{\mathcal{S}_1}{2 \cdot d(\Gamma)}} \subset \ball{\Gamma_1}{v_i}{\frac{\mathcal{S}_1}{2}}$.

	\item Every edge $e = (u, v)$ from $\Gamma_0$ has been stretched by $\mathcal{S}_1$.  If $\Gamma_0(u) = (u_x, u_y)$ and $\Gamma_0(v) = (v_x,v_y)$, then $\Gamma_1(u) = (u_x \cdot \mathcal{S}_1, u_y \cdot \mathcal{S}_1)$ and $\Gamma_1(v) = (v_x \cdot \mathcal{S}_1, v_y \cdot \mathcal{S}_1)$. The integral point $z \in e$ and $z \in \ball{\Gamma_1}{v}{d(\Gamma)}$ is $(u_x \cdot \mathcal{S}_1 + v_x - u_x, u_y \cdot \mathcal{S}_1 + v_y - u_y)$.\qed

\end{enumerate}
\end{proof}

\paragraph{\textbf{Mapping around a pole}}
\label{sec:mapping_pole}
 Let the \textit{pole through $(x', y')$} be  vertical line in $3D$  through point $(x', y', 0)$. Let $\alpha, \beta$ be vertical half-planes containing the pole $l$ through a point  with integer coordinates. Suppose $\angle(\alpha, \beta) \notin \{ 0, \pi\}$ and $\alpha, \beta$ contain infinitely many points with integer coordinates. Mapping \emph{around the pole $l$} is a morphing step to obtain a drawing $\Gamma'$ which lies in  $\beta$ from  $\Gamma$ which lies in $\alpha$. Each vertex moves along a horizontal vector between $\alpha$ and $\beta$. The direction of this vector is common for all vertices of $\Gamma$ and is defined by  $\alpha$ and $\beta$. Let us fix a horizontal plane $h$ which passing through the point $(0,0,b)$ where $b$ is an integer. Let   $p_{\alpha}, p_{\beta}$ be points that lie on $h\cap \alpha$ and  $h\cap \beta$, respectively; such that $dist(l, p_{\alpha}) = d_{\alpha}$ and $dist(l, p_{\beta}) = d_{\beta}$ be the minimum non-zero distances from the $l$ to the integer points lying in $h\cap \alpha$ and  $h\cap \beta$.  The  \textit{vector of mapping} $u$ is defined as $\frac{p_{\beta} - p_{\alpha}}{|p_{\beta} - p_{\alpha}|}$. Mapping is a unidirectional morph since all vertices of $\Gamma$ move along the vectors parallel to the vector of mapping till they reach the half-plane $\beta$. See Fig.~\ref{fig:mapping}.
 
\begin{figure*}[!ht]
  \centering 

  \includegraphics[scale=0.65]{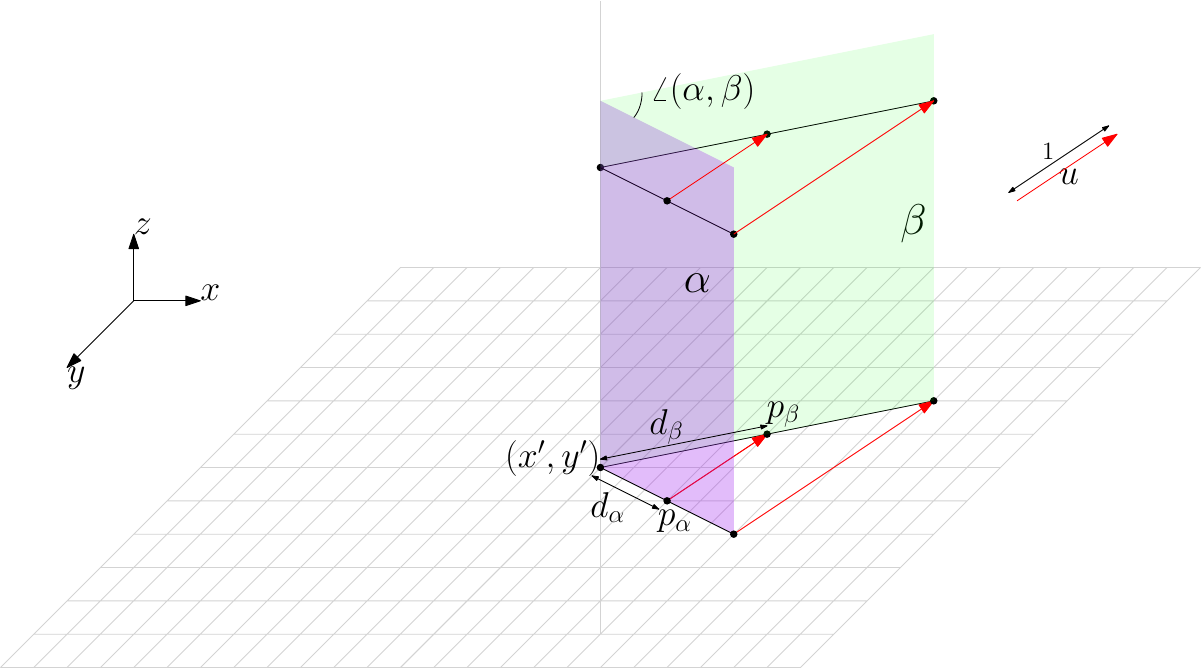}
  
  \caption {The illustration shows the mapping morph, half-planes $\alpha, \beta$ sharing a common pole through point $(x', y')$ and their vector of mapping.}
  
  \label{fig:mapping}
\end{figure*}

\begin{lemma} \label{lemma:mapping_crossing-free}
Mapping is a crossing-free morph.
\end{lemma}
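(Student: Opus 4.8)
The plan is to turn the three‑dimensional statement into a two‑dimensional one at every instant of the morph, and then to recognize each intermediate drawing as an affine image of the crossing‑free drawing $\Gamma$. By Lemma~\ref{lemma:mapping_plane}, for every $t \in [0,1]$ the whole drawing $\Gamma^t = \{t\Gamma' + (1-t)\Gamma\}$ lies in a single vertical half‑plane $\gamma_t$ whose boundary is the pole $l$. Since a half‑plane is convex and every edge of $T$ is a straight segment joining two of its vertices, each edge of $\Gamma^t$ is also contained in $\gamma_t$; hence two edges of $\Gamma^t$ can meet only inside $\gamma_t$. Therefore $\Gamma^t$ is crossing‑free in $3D$ if and only if it is crossing‑free as a planar drawing inside $\gamma_t$, and it suffices to control the combinatorics of $\Gamma^t$ within $\gamma_t$.

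First I would equip each half‑plane $\gamma_t$ with intrinsic coordinates $(s,z)$, where $z$ is the height and $s \ge 0$ is the horizontal distance to the pole $l$; in these coordinates $\alpha$ carries exactly the drawing $\Gamma$. Because mapping moves every vertex along a horizontal vector, heights are preserved, so the $z$‑coordinate of each vertex is the same in $\Gamma$ and in $\Gamma^t$. For the $s$‑coordinate I would reuse the proportionality already exploited in the proof of Lemma~\ref{lemma:mapping_plane}: all movement vectors are parallel to the common vector of mapping, which forces the distance to the pole of a vertex $v$ at time $t$ to equal $s_v \cdot g(t)$, where $s_v$ is its distance in $\Gamma$ and $g(t)$ is a factor depending only on $t$ (and on the fixed angle between $\alpha$ and $\beta$ and on the vector of mapping), not on $v$ nor on its height. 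Consequently, in the intrinsic coordinates of $\gamma_t$, the drawing $\Gamma^t$ is the image of $\Gamma$ under the map $(s,z) \mapsto (g(t)\,s,\, z)$.

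As soon as $g(t) \neq 0$, this map is a non‑degenerate affine transformation of the plane, namely a pure scaling along the $s$‑axis; such a map is a bijection of the plane sending segments to segments, so two segments cross under it exactly when their preimages cross. Since $\Gamma$ is a planar (crossing‑free) drawing inside $\alpha$, its affine image $\Gamma^t$ is crossing‑free inside $\gamma_t$, hence, by the reduction above, crossing‑free in $3D$; as this holds for every $t$, the morph is crossing‑free. The main obstacle I anticipate lies precisely in justifying the two facts that make this within‑half‑plane transformation a genuine non‑degenerate affine map: that the scaling factor $g(t)$ is common to all vertices irrespective of their height, and that it never vanishes for $t \in [0,1]$. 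The first follows by writing the interpolated position of a vertex explicitly and checking that the height coordinate drops out of its distance to the pole. For the second, $g(t)=0$ would force the movement direction to be (anti)parallel to $\alpha$, keeping the vertices in $\alpha$ or its opposite half‑plane and contradicting that they reach the distinct half‑plane $\beta$; here the hypothesis $\angle(\alpha,\beta)\notin\{0,\pi\}$ is exactly what rules out the drawing collapsing onto the pole, where bijectivity would fail and spurious overlaps could appear.
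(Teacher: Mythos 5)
Your proposal is correct and follows essentially the same route as the paper: both reduce to the single half-plane guaranteed by Lemma~\ref{lemma:mapping_plane} and then identify $\Gamma^t$ as the original drawing $\Gamma$ stretched horizontally (away from the pole) by a $t$-dependent nonzero factor, which preserves crossing-freeness. Your version merely makes explicit two points the paper leaves implicit, namely that the stretch factor $g(t)$ is independent of the vertex and that the hypothesis $\angle(\alpha,\beta)\notin\{0,\pi\}$ keeps it from vanishing.
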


\begin{proof}
Let $\alpha, \beta$ be vertical half-planes containing the pole $l$ through the origin. Let us assume that $\angle(\alpha, \beta)=\theta$ and  let us define the stretching factor $S_m = \dfrac{d_{\beta}}{d_{\alpha}}$. Then, we can define mapping by the following matrix. 
$$
\begin{pmatrix}
S_m \cos \theta & -S_m \sin \theta &  0 \\
S_m \sin \theta & S_m \cos \theta  &  0 \\
0 & 0  & 1 
\end{pmatrix}
$$
Note that the determinant of the matrix is positive. This implies that the order type of the points does not change during mapping. Since the drawing in $\alpha$ is crossing free, the drawing obtained in $\beta$ is also crossing free. Note that if the pole passes through the point $(a,b)$ instead of the origin, we can first translate the point $(a,b)$ to origin. We again translate back after mapping. 

\end{proof}

 The \textit{rotation} is a mapping  when $\alpha, \beta$ are half-planes of planes parallel to the $XZ_0, YZ_0$ respectively and thus $\angle(\alpha, \beta) = \frac{\pi}{2}$. A \textit{horizontal pole} is a line parallel to $0X$ axis. We similarly define a horizontal rotation which maps $XZ_0$ to $XY_0$ plane.  

\begin{figure}[!htp]

  \begin{center}  
  \includegraphics[scale = 0.6]{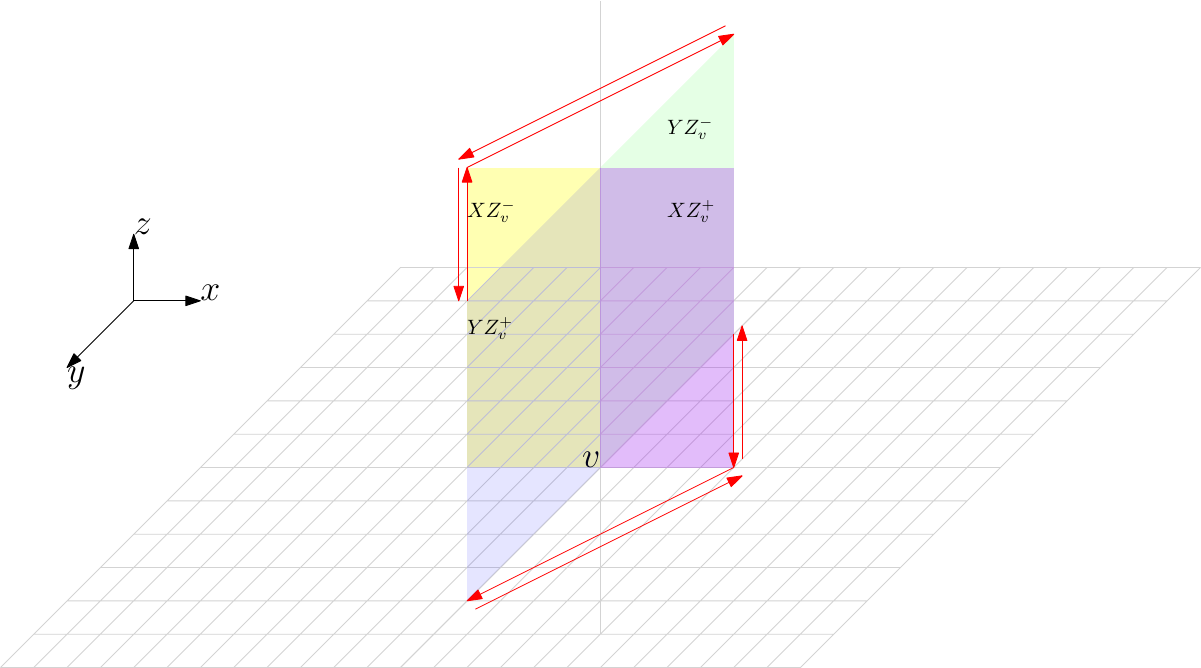}
  \caption{Possible vectors of rotation between 4 half-planes sharing a common pole through vertex $v$}
  \label{fig:rotation}
  \end{center}

\end{figure}

\paragraph{\textbf{Shrinking lifted subtrees.}}
\label{sec:shrink_subtrees}
 Let $v$ be a  vertex of $T$.  Suppose that the image of subtree $T(v)$ in $\Gamma(T)$ is $\mathcal{C}_{T_v}$, i.e., it lies in $h = XZ^+_v$.  Let  $\mathscr{C}=\{v_1, \ldots, v_{l}\}$ be the  set of the children of $v$, ordered according to their $z$-coordinates in $\mathcal{C}_{T_v}$. Let $\mathscr{C'}= \{v_{i_1}, \ldots, v_{i_k}\}$ be an ordered subset of $\mathscr{C}$ such that  the members of $\mathscr{C'}$ having same order among them as in $\mathscr{C}$.  Let us consider the new subtree $T'(v)$ which is obtained by deleting the vertices in $\mathscr{C}\setminus \mathscr{C'}$ and their subtrees. The  drawing of $T'(v)$  obtained by deleting the vertices in $\mathscr{C}\setminus \mathscr{C'}$ and their subtrees from $\mathcal{C}_{T_v}$ is denoted by $\mathcal{C}_{T_v}(T'_v)$, i.e., the canonical drawing of $T(v)$ restricted to $T'(v)$. Note that any vertex of $T'(v)$ has same coordinate  in $\mathcal{C}_{T_v}$ and $\mathcal{C}_{T_v}(T'_v)$.  Also, for each $j$ satisfying $1\leq j \leq k$, $T'(v_{i_j})$ lies inside a box of height $|T(v_{i_j})|$ and width $rpw(T(v_{i_j}))$ on  $h$. We define the \emph{shrink subtree} procedure on $\mathcal{C}_{T_v}(T'_v)$ as follows. We move each vertex $v_{i_j}$ along with their subtrees from $\mathcal{C}_{T_v}(v_{i_j})$ to $(\mathcal{C}_{T_v}(v_{i_j})_x, \mathcal{C}_{T_v}(v_{i_j})_y, \mathcal{C}_{T'_v}(v_j)_z)$.  Let us denote the shrunk subtree by $\mathcal{C'}_{T'_v}$. The  height of the shrunk subtree $\mathcal{C'}_{T'_v}$  is equal to the  number of  vertices in $T'(v)$. 
 
\begin{figure*}[!ht]
  \centering 
  
  \includegraphics[scale=0.6]{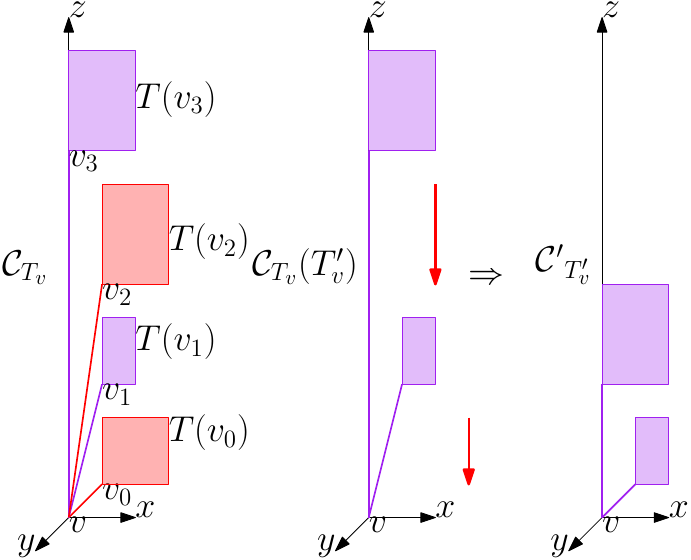}
  
  \label{fig:shrinking}

  \caption {In the drawing there is an example of shrinking morph when $l = 4$.}
\end{figure*}
 
 \begin{observation}\label{obs:shrink_height}
  The  height of the shrunk subtree $\mathcal{C'}_{T'_v}$  is equal to the  number of  vertices in $T'(v)$. 
   \end{observation}

 \begin{proof}
  Note that the subtree rooted at $v$ has a $3D$ canonical drawing in $\mathcal{C}_{T_v}(T'_v)$.
  This implies that for $1\leq j \leq k$ the subtree rooted at $v_{i_j}$ also has the canonical drawing with respect to $v_{i_j}$ in $\mathcal{C}_{T_v}(T'_v)$.
  Each  $T'(v_{i_j})$ lies inside a box of height $|T'(v^{i_j})|$. Also, note that the z-coordinate of $ \mathcal{C'}_{T'_v}(v_{i_j})$ is equal to $\mathcal{C}_{T_v}(v_j)_z$. This implies that in $\mathcal{C'}_{T'_v}$,  $T'(v)$ lies inside a box of height $\sum_{j=1}^{k} |T(v_{i_j})|+1=|T'(v_i)|$ and the down-left corner of the box coincide with $\mathcal{C}_{T_v}(v)=\mathcal{C'}_{T'_v}(v).$ Also note that $|T'(v)| \le |T(v)|$.
 \end{proof}
  \begin{proposition}~\cite{alamdari2017morph}
 \label{cor:unidirectional}
 Let $\langle \Gamma, \Gamma' \rangle$ be a unidirectional morph between two planar straight-line drawings $\Gamma$ and $\Gamma'$ of $T$. Let $u$ be a vertex of $T$, let $vw$ be an edge of $T$ and, for any drawing of $T$, let $l_{vw}$ be the line through the edge $vw$ oriented from $v$ to $w$. Suppose that $u$ is to the left of $l_{vw}$ both in $\Gamma$ and in
$\Gamma'$. Then $v$ is to the left of $l_{vw}$ throughout $\langle \Gamma, \Gamma' \rangle$.
 \end{proposition}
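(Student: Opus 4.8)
The plan is to reduce the geometric claim to a statement about the sign of an affine function of the time parameter. I would parametrize the morphing step so that at time $t \in [0,1]$ the position of any vertex $x$ is $\Gamma_t(x) = (1-t)\Gamma(x) + t\Gamma'(x) = \Gamma(x) + t\,\delta_x$, where $\delta_x = \Gamma'(x) - \Gamma(x)$ is its displacement. Since the morph is \emph{unidirectional}, every displacement is parallel to one common direction $d$, i.e. $\delta_x = \lambda_x\, d$ for scalars $\lambda_x$. For $2$D vectors $a=(a_x,a_y)$, $b=(b_x,b_y)$ write $a\times b = a_xb_y - a_yb_x$. With an appropriate orientation convention, the predicate ``$u$ lies to the left of the oriented line $l_{vw}$'' is precisely the statement that the signed area
\[
f(t) \;=\; \bigl(\Gamma_t(v) - \Gamma_t(u)\bigr)\times\bigl(\Gamma_t(w) - \Gamma_t(u)\bigr)
\]
is positive. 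So it suffices to show $f(t) > 0$ for all $t\in[0,1]$, given $f(0)>0$ and $f(1)>0$.

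The crucial observation is that $f$ is \emph{affine} in $t$, not merely quadratic. Writing $p(t) = \Gamma_t(v) - \Gamma_t(u) = p_0 + t(\lambda_v-\lambda_u)d$ and $q(t) = \Gamma_t(w) - \Gamma_t(u) = q_0 + t(\lambda_w-\lambda_u)d$, where $p_0,q_0$ are the initial difference vectors, we have $f(t)=p(t)\times q(t)$. Expanding this product, the coefficient of $t^2$ equals $(\lambda_v-\lambda_u)(\lambda_w-\lambda_u)\,(d\times d)$, which vanishes because $d\times d = 0$. Hence the quadratic term disappears and $f(t) = at+b$ for constants $a,b$ depending only on $p_0,q_0$ and $d$.

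Finally, an affine function on $[0,1]$ attains its minimum at an endpoint; since $f(0)=b>0$ and $f(1)=a+b>0$ by hypothesis, we conclude $f(t)>0$ for every $t\in[0,1]$, so $u$ stays strictly to the left of $l_{vw}$ throughout $\langle\Gamma,\Gamma'\rangle$. (The conclusion concerns the side of the tracked vertex $u$ relative to the moving line; an endpoint of the edge itself always lies on $l_{vw}$.) I expect the only genuine subtlety to be the cancellation of the $t^2$ term: this is the single point where unidirectionality is used, and it is exactly what prevents the signed area from being a general quadratic that could dip below zero strictly between the endpoints. Everything else is bookkeeping, though I would take care to fix the orientation convention once and for all so that ``left'' corresponds to a fixed sign of $f$, and to note that the strict inequalities assumed at the two endpoints propagate to strict positivity on the whole interval.
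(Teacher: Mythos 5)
Your argument is correct and complete: the only point of substance is that unidirectionality kills the $t^2$ coefficient of the signed-area function via $d\times d=0$, making $f$ affine in $t$, and you identify and handle exactly that point (you also correctly read the conclusion as being about $u$, not $v$ --- the statement has a typo). The paper itself gives no proof of this proposition --- it is imported verbatim from Alamdari et al.\ --- and your affine-signed-area argument is precisely the standard proof of that cited result, so there is nothing to contrast.
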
 
\begin{lemma}
  \label{lemma:shrinking}
 Shrinking is a crossing-free morph.
 \end{lemma}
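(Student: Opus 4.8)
The plan is to use that the shrinking morph never leaves the half-plane $h = XZ^+_v$: every vertex keeps its $x$- and $y$-coordinate fixed and only its $z$-coordinate is interpolated, so the whole process is really a $2D$ morph inside $h$. I would work in the coordinates of $h$ in which $v$ sits on the pole and each child $v_{i_j}$ together with its subtree $T(v_{i_j})$ occupies an axis-aligned box whose left edge is the vertical line $x=1$, whose width is $rpw(T(v_{i_j}))$, whose height is $|T(v_{i_j})|$, and whose lower-left corner is $v_{i_j}$ (this is exactly the canonical branch layout). Because each subtree is translated rigidly (all of its vertices share a single vertical motion vector), no crossing can ever appear inside a single subtree, and it suffices to rule out crossings of three kinds: between two distinct subtree boxes, between two of the edges $v v_{i_j}$, and between an edge and a foreign subtree.

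The core step is to show that the subtree boxes stay pairwise disjoint for every $t\in[0,1]$. Writing $\zeta_j(t)$ for the interpolated $z$-coordinate of $v_{i_j}$, the $j$-th box is the $z$-interval $[\zeta_j(t),\,\zeta_j(t)+|T(v_{i_j})|]$, and the gap to the next box is $g_j(t)=\zeta_{j+1}(t)-\zeta_j(t)-|T(v_{i_j})|$. Each $\zeta_j$ is affine in $t$, hence so is $g_j$. At $t=0$ the drawing is $\mathcal{C}_{T'_v}$, obtained from the crossing-free canonical drawing $\mathcal{C}_{T_v}$ by deleting subtrees, so the boxes sit at their original order-preserving positions and $g_j(0)\ge 0$; at $t=1$ Observation~\ref{obs:shrink_height} guarantees a total box height of exactly $|T'(v)|=\sum_j|T(v_{i_j})|+1$, i.e.\ the boxes are packed consecutively with $g_j(1)\ge 0$. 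An affine function nonnegative at both endpoints of $[0,1]$ is nonnegative throughout, so $g_j(t)\ge 0$ for all $t$: the boxes never overlap and keep their vertical order, and in particular the bottoms $\zeta_j(t)$ stay strictly increasing in $j$. This settles the inter-subtree case.

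The remaining two cases follow cheaply from this ordering. All edges $v v_{i_j}$ emanate from the fixed point $v$ and reach the points $(1,\zeta_j(t))$ whose $z$-coordinates are strictly increasing in $j$; two segments from a common point with distinct slopes meet only at that point, so no two such edges cross. For an edge against a foreign subtree, each edge $v v_{i_j}$ lies in the closed slab $x\le 1$ and touches $x=1$ only at $v_{i_j}$, whereas every subtree lies in $x\ge 1$; hence a crossing could occur only at $x=1$, and there $v_{i_j}=(1,\zeta_j(t))$ lies outside the $z$-range of every other box by disjointness, while an edge meets its own subtree solely at the shared vertex $v_{i_j}$. Collecting these facts over all $t$ shows every intermediate drawing is crossing-free; since the drawing only becomes more compact and stays inside the original bounding box of $T(v)$, it also does not interfere with the rest of $T$.

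The step I expect to be the crux is the box-disjointness argument of the middle paragraph; everything else is bookkeeping once the vertical order is pinned down. The only care needed there is to identify the two endpoints correctly — that the $t=0$ drawing is the order-preserving deletion of $\mathcal{C}_{T_v}$ and that the $t=1$ drawing is the tight consecutive packing certified by Observation~\ref{obs:shrink_height} — after which the ``affine and nonnegative at both ends'' observation does all the work. (If one prefers, the edge arguments can instead be phrased through the unidirectional-morph Proposition~\ref{cor:unidirectional}, since the motion is entirely parallel to the $z$-axis.)
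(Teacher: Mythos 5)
Your proof is correct and takes essentially the same route as the paper's, which simply notes that shrinking is a vertical (hence unidirectional) morph inside $XZ^+_v$ that preserves the relative order of the vertices and then invokes Proposition~\ref{cor:unidirectional}; your affine-gap computation is exactly the mechanism behind that proposition, written out explicitly. The only minor imprecision is that the heavy child of $v$ and its box sit at $x=0$ rather than $x=1$, but your $z$-ordering argument already covers that case.
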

 
 \begin{proof}
 Note that the $x$ and $y$ coordinates of each vertex  in $\mathcal{C}_{T'_v}$ and $\mathcal{C'}_{T'_v}$ are the same and $z$-coordinate can only decrease, shrinking is obviously a unidirectional morph in vertical $XZ$ plane which passes through $v$. Also, note that through out the morphing process all vertices  in $T'_v$ maintain the relative orders among themselves since they move along parallel vectors. Also note that shrinking satisfies the conditions of Proposition~\ref{cor:unidirectional}. This implies that shrinking is a crossing free morph. \qed
 \end{proof}

\paragraph{\textbf{Turning in the horizontal plane.}}
\label{sec:turning_in_horiz}

Let $\Gamma_0(T(v))$ be the canonical drawing of a subtree $T(v)$ on the horizontal plane $\alpha$, i.e. relative canonical drawing $\mathcal{C}_{T_v}$ rotated around horizontal pole through $v$ in any direction to lie in $XY_{v}^{+}$ or $XY_{v}^{-}$ half-plane. We discuss the case when $\Gamma_0(T(v))$ lies in $XY_{v}^{+}$, the other case is similar. Let $\Gamma_1(T(v)), \Gamma_2(T(v)), \Gamma_3(T(v))$ be the drawings  obtained from  $\Gamma_0(T(v))$ by rotating the horizontal plane $\alpha$ around the point $\Gamma(v)$ by the angles $\frac{\pi}{2}, \pi, \frac{3\pi}{2}$ respectively. Using a lemma from~\cite[Lemma~pinwheel]{bblbfpr19}, we show the following. 

\begin{figure}[!htp]

  \begin{center}  
  \includegraphics[scale = 0.5]{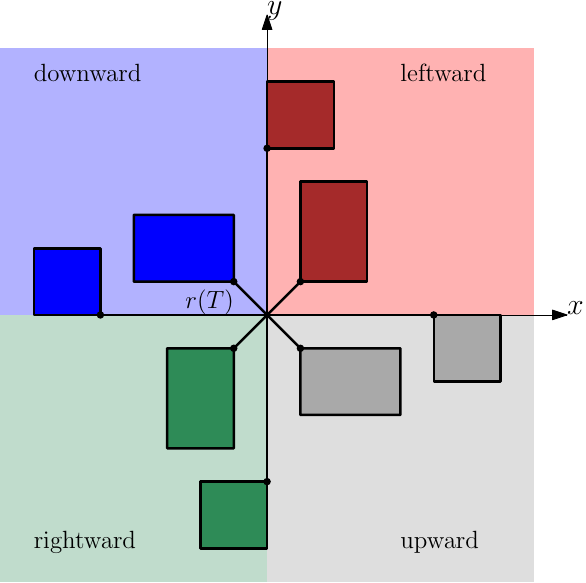}
  \caption{Possible orientations of the canonical drawing of a tree $T$.}
  \label{fig:pinwheel}
  \end{center}
\end{figure}

\begin{lemma} (Pinwheel)~\cite{bblbfpr19} 
\label{lemma:pinwheel}
Let $\Gamma$ and $\Gamma'$ be two canonical drawings of a rooted ordered tree $T$ , where $r(T)$ is at the same point in $\Gamma$ and $\Gamma'$. If $\Gamma$ and $\Gamma'$ are (i) upward and leftward, or (ii) leftward and downward, or (iii) downward and rightward, or (iv) rightward and upward, respectively, then the morph $\langle \Gamma, \Gamma' \rangle$ is planar and lies in the interior of the right, top, left, or bottom half of the $2n$-box centered at $r(T)$, respectively. See Fig.~\ref{fig:pinwheel}.
\end{lemma}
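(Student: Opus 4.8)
The plan is to exploit the fact that, in every one of the four cases, $\Gamma'$ is obtained from $\Gamma$ by a rigid rotation of $\pm\frac{\pi}{2}$ about the common image of $r(T)$, and that linearly interpolating a rotation yields a family of \emph{similarity} transformations of the plane. First I would place $r(T)$ at the origin; this is legitimate because its position coincides in $\Gamma$ and $\Gamma'$ and it is the centre of the $2n$-box. In each case the orientation of $\Gamma'$ differs from that of $\Gamma$ by a rotation of $\pm\frac{\pi}{2}$ about the origin, so I may write $\Gamma'(u)=R\,\Gamma(u)$ for every vertex $u$, where $R$ is the corresponding $90^{\circ}$ rotation matrix.

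Next I describe the intermediate drawings. In the single linear morph $\langle\Gamma,\Gamma'\rangle$ the image of a vertex $u$ at time $t\in[0,1]$ is $(1-t)\Gamma(u)+tR\,\Gamma(u)=M_t\,\Gamma(u)$, where $M_t=(1-t)I+tR$ is the \emph{same} linear map applied to all vertices. For $R$ a rotation by $\pm\frac{\pi}{2}$ one has $M_t=\left(\begin{smallmatrix}1-t&\mp t\\ \pm t&1-t\end{smallmatrix}\right)$, which is of the form $\rho(t)\,\mathrm{Rot}(\theta(t))$ with scaling factor $\rho(t)=\sqrt{(1-t)^2+t^2}>0$ and some angle $\theta(t)$. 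Hence $M_t$ is a nonsingular similarity of the plane for every $t\in[0,1]$. The crossing-free property is then immediate: a similarity is a homeomorphism of the plane sending segments to segments and preserving incidences and crossings, so the straight-line drawing $\Gamma_t=M_t\,\Gamma$ is crossing-free if and only if $\Gamma$ is; and $\Gamma$, being a canonical drawing, is crossing-free. Thus every intermediate drawing of the morph is planar.

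It then remains to locate the morph inside the claimed half of the $2n$-box, which I would do coordinatewise. Writing $\Gamma(u)=(x,y)$, the two coordinates of $M_t\,\Gamma(u)$ are $(1-t)x\mp ty$ and $\pm tx+(1-t)y$. The orientation of $\Gamma$ (upward, leftward, downward, or rightward) fixes the signs of $x$ and $y$ because the canonical box of width $rpw(T)$ and height $|T|$ then sits in a known quadrant; consequently, in each case exactly one of the two coordinates keeps a constant sign for all $t$, which confines the morph to the corresponding right, top, left, or bottom half-plane. Moreover, since $rpw(T)\le |T|=n$, both coordinates stay within $[-n,n]$ in absolute value, so the whole morph lies in the interior of that half of the $2n$-box. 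Finishing this part is just a matter of tabulating the four sign patterns and matching them to the four stated halves.

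The hard part is really the reduction in the first paragraph rather than any calculation: the entire argument hinges on $\Gamma'$ being an \emph{exact} $\pm\frac{\pi}{2}$ rotation of $\Gamma$ about a single common point, so I must verify that the two orientations named in each case are genuinely related by such a rotation about $r(T)$ — this is precisely where the hypotheses that both drawings are canonical and that $r(T)$ is at the same point are used. Once that is secured, the similarity observation does all the work, and no appeal to unidirectionality (Proposition~\ref{cor:unidirectional}) is required, since here the vertices do not move along parallel vectors.
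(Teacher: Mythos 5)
Your argument is correct and is essentially the standard proof of this lemma: the paper itself does not reprove it but imports it from~\cite{bblbfpr19}, where the proof is exactly your observation that $M_t=(1-t)I+tR$ for a $\pm\frac{\pi}{2}$ rotation $R$ is a nonsingular direct similarity (complex multiplication by $(1-t)\pm ti$, of modulus at least $\tfrac{1}{\sqrt2}$), so every intermediate drawing is a scaled rotation of a crossing-free drawing, and the convex-combination bound $|(1-t)x\mp ty|\le\max(|x|,|y|)\le n$ keeps it inside the $2n$-box. The only parts you defer --- checking that each named pair of orientations really differs by an exact $90^{\circ}$ rotation about $r(T)$, and tabulating which coordinate keeps its sign to identify the correct half --- are purely definitional bookkeeping against the conventions of~\cite{bblbfpr19} and do not constitute a gap.
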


\begin{lemma}
\label{lemma:pinwheel_ref}
The drawing $\Gamma_i(T(v))$ can be obtained from the drawing $\Gamma_{i - 1}(T(v))$ in one morphing step, $i = 0, 1, 2, 3$, where by $\Gamma_{0 - 1}(T(v))$ we mean $\Gamma_3(T(v))$.
\end{lemma}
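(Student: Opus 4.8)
The plan is to recognize $\Gamma_0(T(v)), \Gamma_1(T(v)), \Gamma_2(T(v)), \Gamma_3(T(v))$ as the four axis-aligned orientations of the canonical drawing of $T(v)$ lying in the single horizontal plane $\alpha$, and then to apply the Pinwheel Lemma (Lemma~\ref{lemma:pinwheel}) to each consecutive pair. First I would fix the orientation labels. Since $\Gamma_0(T(v))$ lies in $XY_v^+$, I may take it (relabelling cyclically if needed) to be the \emph{rightward} canonical drawing of $T(v)$; rotating $\alpha$ about $\Gamma(v)$ by $\pi/2$, $\pi$, and $3\pi/2$ then yields, in cyclic order, the \emph{upward}, \emph{leftward}, and \emph{downward} canonical drawings $\Gamma_1, \Gamma_2, \Gamma_3$. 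All four drawings lie in the same horizontal plane $\alpha$ and agree on the image of the root $v$, which every such rotation fixes at $\Gamma(v)$. Hence the hypotheses of Lemma~\ref{lemma:pinwheel} (the same rooted ordered tree, with the root placed at a common point) are met, and because the morph between any two of these drawings takes place entirely within the plane $\alpha$, the planarity conclusion of the $2$D Pinwheel Lemma applies verbatim.

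Next I would check that every consecutive pair $\langle \Gamma_{i-1}, \Gamma_i\rangle$, including the wrap-around pair $\langle \Gamma_3, \Gamma_0\rangle$, is one of the four configurations of Lemma~\ref{lemma:pinwheel}. Indeed, $\langle\Gamma_0,\Gamma_1\rangle$ is rightward-to-upward (case (iv)), $\langle\Gamma_1,\Gamma_2\rangle$ is upward-to-leftward (case (i)), $\langle\Gamma_2,\Gamma_3\rangle$ is leftward-to-downward (case (ii)), and $\langle\Gamma_3,\Gamma_0\rangle$ is downward-to-rightward (case (iii)). Thus each of the four cyclic cases of Lemma~\ref{lemma:pinwheel} is realized exactly once, and by that lemma each such morph is a single planar (hence crossing-free) linear morph, i.e.\ precisely one morphing step, which is the assertion.

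The only point that needs care is the sense of rotation. If the rotations are taken clockwise rather than counterclockwise, the cyclic order of orientations reverses, and a pair such as $\langle\Gamma_0,\Gamma_1\rangle$ becomes rightward-to-downward, which is the \emph{reverse} of case (iii) instead of a listed case. This is harmless: a linear morph $\langle\Gamma,\Gamma'\rangle$ traversed backwards is again a single linear morph, and it passes through the same family of intermediate drawings, so reversing time in Lemma~\ref{lemma:pinwheel} covers these cases as well. Consequently, independently of the direction of rotation, each of the four consecutive morphs is a single crossing-free step, completing the proof. I expect the only real subtlety to be the bookkeeping that matches the rotation angles $\tfrac{\pi}{2},\pi,\tfrac{3\pi}{2}$ to the orientation labels in the Pinwheel Lemma, together with this reversibility remark; everything else is a direct invocation of Lemma~\ref{lemma:pinwheel}.
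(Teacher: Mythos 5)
Your proposal is correct and follows essentially the same route as the paper, which simply defers to the proof of the Pinwheel Lemma (Lemma~\ref{lemma:pinwheel}) from~\cite{bblbfpr19}; you merely spell out the case-matching of the four consecutive orientation pairs and the time-reversal remark that the paper leaves implicit. No gap.
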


\begin{proof}
Proof of this fact is the same as the proof of the Lemma~\ref{lemma:pinwheel} of~\cite{bblbfpr19}.\qed 
\end{proof}

\section{Morphing through lifting paths}
\label{Sec:4}

Let $T$ be an $n$-vertex tree  and $\mathcal{P}$ be a  path decomposition of $T$ into $k$ paths. In this section, we describe an algorithm that morphs a plane drawing $\Gamma = \Gamma_0$ in the $XY_0$ plane of tree $T$ to the canonical 3D drawing $\Gamma' =\mathcal{C}(T)$ of $T$ in $\mathcal{O}(k)$ steps. It lifts the paths of $\mathcal{P}$ one by one applying the procedure \lift{}.   Note that the final positions for the vertices in $\mathcal{C}(T)$ are defined through heavy-rooted-pathwidth decomposition, and do not depend on $\mathcal{P}$. Also, a morph from $\mathcal{C}(T)$ to $\Gamma'$ can be obtained by playing the morph from $\Gamma'$ to $\mathcal{C}(T)$ backwards.  At all times during the algorithm, the following invariant holds: a path $P_i \in \mathcal{P}$ is lifted only after all the children of  the internal vertices of $P_i$ are  lifted. After the execution of \lift{P_i}, path $P_i$ moves to its canonical position with respect to $head(P_i)$. See Fig.~\ref{fig:step0}-\ref{fig:step13}. \\

\noindent\textit{\textbf{Step 0: Preprocessing.}} This step is a single stretching morph $\langle \Gamma, \Gamma_1 \rangle$ with $\mathcal{S}_1 = 2 \cdot (rpw + d(\Gamma))$, see Section~\ref{sec:stretching}, stretching is a crossing-free morph.

\subsection{Procedure \lift{P}}

\begin{figure}[!htp]

  \begin{center}  
  \includegraphics[scale = 0.6]{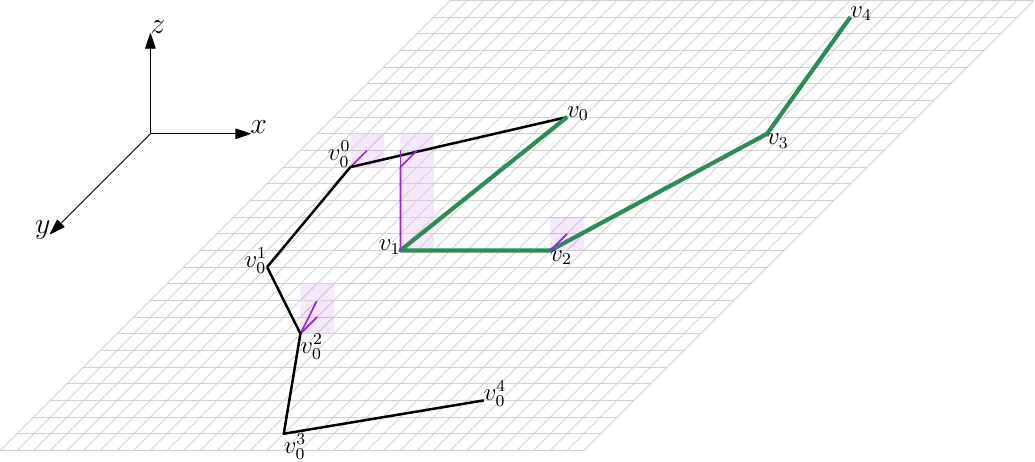}
  \end{center}
  
  \caption{\textbf{Drawing $\Gamma_t$}. In the picture there is the drawing $\Gamma_t$ after lifting all paths containing purple edges. All lifted subtrees are shown in light purple.}
  \label{fig:step0}

\end{figure}

Let $ P_i = (v_0, v_1, \ldots, v_m)$ be the first path in $\mathcal{P}$ that has not been processed yet and $\Gamma_t$ be the current drawing of $T$. We lift the path $P_i$. For any vertex $v$ let \textit{lifted subtree} $T'(v)$ be the portion of subtree $T(v)$ that has been lifted after execution of \lift{P_j} for some $j < i$. The following invariants hold after every iteration of \lift{} procedure. 

\begin{itemize}
	\item[(I)] the drawing of $T'(v)$ in $\Gamma_t$ is the canonical drawing of $T'(v_j)$ with respect to $v$ for any $v \in V(T)$,
	\item[(II)] vertices of paths $P_k, k > i$ lie within the $XY_0$ plane.
\end{itemize} 

We prove that these conditions hold after lifting $P_i$. Let the \textit{processing vertices} be the internal vertices of $P_i$ along with the vertices of their lifted subtrees.

\begin{lemma} The subtrees of all internal vertices $v_j$ in $P_i$ are already lifted. 
\end{lemma}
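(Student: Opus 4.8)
The plan is to treat this as a purely combinatorial statement about the \emph{processing order} of the paths, with no reference to the geometry of the drawings. The only ingredients I need are the definition of the order $\prec$ induced by $\mathcal{P}$, the fact that the lifting loop iterates over $P_1,\ldots,P_k$ in list order, and the structural observation already recorded in the Path-decomposition section: the subtree hanging off an internal vertex of a path is a union of paths that precede that path.

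First I would pin down the direction of the order, since this is where sign errors are easy. In the decomposition algorithm each newly chosen path is inserted at the \emph{beginning} of $\mathcal{P}$, so the final list $P_1,\ldots,P_k$ is exactly the reverse of the deletion order. Unwinding the definition $P'\succ P$ (``$P'$ deleted before $P$'') against this reversal gives $P_a\prec P_b$ if and only if $a<b$. Hence the set of paths that precede $P_i$ in $\prec$ is precisely $\{P_1,\ldots,P_{i-1}\}$, which is exactly the set of paths already processed at the moment \lift{P_i} is invoked.

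Next I would argue that the subtrees that must be lifted are composed only of such preceding paths. Fix an internal vertex $v_j$ of $P_i=(v_0,\ldots,v_m)$ and a child $c$ of $v_j$ with $c\neq v_{j+1}$, i.e.\ a child branching off $P_i$. Before $P_i$ is extracted, the whole tree $T'$ that currently contains it sits untouched in the forest $F$; when the edges of $P_i$ are removed, $T(c)$ becomes a separate component that is added to $F$ and decomposed only in \emph{later} iterations. Therefore every path composing $T(c)$ is deleted strictly after $P_i$, so each such path precedes $P_i$ in $\prec$ and hence has index smaller than $i$. This is exactly the observation of the Path-decomposition section, and it shows $T(c)\subseteq P_1\cup\cdots\cup P_{i-1}$.

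Combining the two points, every subtree rooted at a branching child of an internal vertex of $P_i$ consists of paths with index $<i$, all of which have been processed (and, by invariant~(I), already placed in their canonical positions) before \lift{P_i}. I expect the main obstacle to be bookkeeping rather than any real difficulty: one must keep straight the three coupled orderings (deletion order, the reversed list order, and $\prec$), and one must make explicit that ``the subtree of an internal vertex $v_j$'' refers to the branches off $P_i$, since the portion of $P_i$ below $v_j$ is of course not yet lifted but is precisely what the current step is moving.
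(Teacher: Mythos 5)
Your proposal is correct and follows essentially the same route as the paper: both arguments are purely combinatorial, resting on the fact that the components created by deleting $P_i$ are decomposed into paths that succeed $P_i$ in deletion order, hence precede it in the list order $P_1,\ldots,P_k$ and are therefore already processed when \lift{P_i} is invoked. Your version merely spells out the order-reversal bookkeeping more explicitly than the paper does.
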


\begin{proof}
All the paths that precede $P_i$ in $\mathcal{P}$ are exactly the paths that are left after deleting $P_i$ from the forest of subtrees of $T$ (see paragraph \textit{Path decomposition} in Section~\ref{sec:defs}). From this fact it follows that all internal vertices of $P_i$ have become roots of some trees after deleting the edges of $P_i$ and edges of that trees were processed in \lift{P_j} procedures for some $P_j, j < i$. \qed
\end{proof}
\begin{lemma}
\label{lemma:n-height}
The maximum height of vertices in the drawing $\Gamma_t$ is strictly less than  $n$.
\end{lemma}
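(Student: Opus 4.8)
The plan is to bound the $z$-coordinates of the vertices, since the height of a drawing is by definition the maximum absolute difference of $z$-coordinates. First I would invoke the two invariants: by invariant (II) every not-yet-lifted vertex lies in the $X0Y$ plane, i.e.\ at height $0$, while by invariant (I) every already-lifted vertex belongs to a lifted subtree $T'(v)$ that is drawn as the canonical drawing of $T'(v)$ with respect to its root $v$. So it suffices to bound the maximum $z$-coordinate inside each maximal lifted subtree. For this I would first argue that the root $v$ of every maximal lifted subtree is itself at height $0$: such a $v$ is the head of an already-processed path, and by the convention of Section~\ref{sec:defs} (the light edge above a heavy path is part of the path) this head also lies on the parent path in the path tree, which has larger index and is therefore still unprocessed; hence $v$ is in $X0Y$ by invariant (II). Consequently the absolute $z$-coordinates inside $T'(v)$ agree with those of its relative canonical drawing.

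The key quantitative step is to show that the canonical drawing of any tree $S$ with its root placed at height $0$ occupies exactly the levels $\{0,1,\dots,|S|-1\}$; in particular its maximum $z$-coordinate is $|S|-1$, which is one less than the bounding-box height $|S|$ recorded in Section~\ref{sec:defs}. I would prove this directly from the construction of $\mathcal{C}(S)$. Writing $v_0,\dots,v_{k-1}$ for the heavy path of $S$, the recurrence $\mathcal{C}(v_i)_z = z_{i-1}+|S(v_{i-1})|-|S(v_i)|$ telescopes to $\mathcal{C}(v_i)_z = |S|-|S(v_i)|$. Since a heavy path ends at a leaf of $S$, we have $|S(v_{k-1})|=1$, so the top heavy-path vertex sits at height $|S|-1$. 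The light subtrees hanging off $v_i$ are stacked strictly between levels $\mathcal{C}(v_i)_z$ and $\mathcal{C}(v_{i+1})_z$, and the identity $\mathcal{C}(v_{i+1})_z-\mathcal{C}(v_i)_z = 1+\sum_j |S(v_i^j)|$ together with induction shows these blocks tile $\{0,\dots,|S|-1\}$ without gaps or overlaps, so no vertex exceeds height $|S|-1$. This is the analogue, for full canonical drawings, of the height computation carried out for shrunk subtrees in Observation~\ref{obs:shrink_height}.

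Combining the two steps finishes the argument: each lifted subtree $T'(v)$ has its root at height $0$ and at most $n$ vertices, so all of its vertices have $z$-coordinate at most $|T'(v)|-1\le n-1$, while all unlifted vertices have $z$-coordinate $0$. Therefore the maximum height in $\Gamma_t$ is at most $n-1<n$.

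I expect the main obstacle to be exactly the off-by-one that yields the \emph{strict} inequality: the bounding-box fact quoted in Section~\ref{sec:defs} only gives height $|S|\le n$, so the proof genuinely needs the sharper statement that the occupied $z$-levels are $\{0,\dots,|S|-1\}$ rather than $\{0,\dots,|S|\}$. A secondary subtlety is the bookkeeping that the root of every maximal lifted subtree really lies at height $0$, which is where the path-tree convention about the light edge and invariant (II) must be used carefully.
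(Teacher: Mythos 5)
Your proof is correct and follows essentially the same route as the paper's: the paper's (much terser) two-sentence argument likewise observes that by invariant (I) every lifted vertex sits at its canonical height within some lifted subtree, which is strictly less than the number of vertices of that subtree and hence less than $n$. The details you add --- the telescoping computation showing the canonical drawing of $S$ occupies exactly the levels $\{0,\dots,|S|-1\}$, and the check that the roots of maximal lifted subtrees lie in the $X0Y$ plane --- correctly fill in what the paper leaves implicit.
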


\begin{proof}
As all vertices of the processed paths are part of lifted subtrees of their head vertices, and by (I) have height as in the canonical drawing, their height is strictly less than the number of vertices in a corresponding subtree, which does not exceed number of vertices in $T$. \qed
\end{proof}
\begin{lemma}
\label{lemma:rpw_dist}
For any vertex $v$ and its lifted subtree $T'(v)$. The maximum horizontal distance in the lifted subtree $T'(v)$ of $v$, i.e. the difference between $x$-coordinates between any pair of vertices in $T'(v)$ in $\Gamma_t$, does not exceed $rpw$.
\end{lemma}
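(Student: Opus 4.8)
The plan is to combine invariant (I) with the bounding-box property of canonical drawings and the monotonicity of rooted pathwidth.

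First I would invoke invariant (I): since $T'(v)$ is the already-lifted portion of $T(v)$ in the current drawing $\Gamma_t$, its image is precisely the relative canonical drawing of $T'(v)$ with respect to $v$. Recall from the preliminaries that the canonical drawing of any rooted tree $S$ lies in the positive $X0Z$ quarter-plane inside a bounding box of width $rpw(S)$, with the root mapped to the origin. Applying this to $S = T'(v)$, all $x$-coordinates of vertices of $T'(v)$ in $\Gamma_t$ lie in an interval of length at most $rpw(T'(v))$; hence the maximum difference of $x$-coordinates over all pairs of vertices of $T'(v)$ is at most $rpw(T'(v))$.

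Second, I would bound $rpw(T'(v))$ by $rpw = rpw(T)$. The lifted subtree $T'(v)$ is a rooted subtree of $T(v)$ containing $v$ (obtained by deleting some vertices together with their subtrees), and $T(v)$ is itself a subtree of $T$. The key sub-step is the \emph{monotonicity} of rooted pathwidth: if $S'$ is obtained from a rooted tree $S$ by deleting the subtrees rooted at some vertices while keeping the root, then $rpw(S') \le rpw(S)$. I would prove this by induction on $|S|$ following the recursive definition of $rpw$. For each retained child $c$ of the root, $rpw(S'(c)) \le rpw(S(c))$ by the induction hypothesis. Since the retained children form a subset of all children and each has non-increased rooted pathwidth, the maximum $k$ of the children's rooted pathwidths cannot increase; moreover the tie-breaking count (the number of children attaining that maximum) cannot increase either, because a child's rooted pathwidth can only drop. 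Hence $rpw(S') \le rpw(S)$. Chaining the two inclusions gives $rpw(T'(v)) \le rpw(T(v)) \le rpw(T) = rpw$.

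Combining the two bounds yields that the maximum horizontal distance in $T'(v)$ in $\Gamma_t$ is at most $rpw(T'(v)) \le rpw$, as claimed. The main obstacle is the monotonicity step: one must argue carefully that deleting subtrees cannot turn a single maximum-attaining child into two, which is exactly where the ``$+1$'' in the definition of $rpw$ could bite. The induction handles this, since a child that does not attain the children's maximum in $S$ cannot attain it in $S'$, so the $+1$ case for $S'$ forces the $+1$ case for $S$.
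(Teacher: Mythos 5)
Your overall route is the same as the paper's: the published proof consists of exactly your first step --- invoke invariant (I) to identify the drawing of $T'(v)$ in $\Gamma_t$ with its relative canonical drawing, and then use the fact that a canonical drawing fits in a bounding box of width $rpw$ --- and it silently elides the bound $rpw(T'(v)) \le rpw(T)$ that you try to establish explicitly. Your added monotonicity lemma is a genuine (and welcome) strengthening of the write-up, but as justified it contains a false intermediate claim: the number of children attaining the maximum \emph{can} increase when passing from $S$ to $S'$. For example, if the root has children $c_1, c_2$ with $rpw(S(c_1)) = 3$ and $rpw(S(c_2)) = 2$, pruning inside $S(c_1)$ can drop $rpw(S'(c_1))$ to $2$, after which both children attain the new maximum $k' = 2$; likewise a child that did not attain the maximum in $S$ can attain the (lower) maximum in $S'$. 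Your final sentence therefore does not hold in general. The lemma is still true, but the argument needs a two-case split: if $k' = k$, then every retained child attaining $k'$ in $S'$ already attained $k$ in $S$, so the $+1$ case for $S'$ forces the $+1$ case for $S$; if $k' < k$, then $rpw(S') \le k' + 1 \le k \le rpw(S)$ trivially. Separately, note that your chain $rpw(T'(v)) \le rpw(T(v)) \le rpw(T)$ uses a second, different monotonicity (a subtree rooted at a descendant versus a pruned tree with the same root); that inequality is immediate from the recursive definition but is not covered by the lemma you state, so it deserves its own one-line remark.
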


\begin{proof}
Note that the drawing of $T'(v)$ in $\Gamma_t$ is the canonical drawing of $T'(v)$ with respect to $v$. This implies that the maximum horizontal distance between any vertex in $T'(v)$ and $v$ in  $\Gamma_t$ is at most $rpw$. \qed
\end{proof}

\paragraph{\textbf{Overview}}

The procedure \lift{P_i} consists of 13 steps and results in moving vertices of the path $P_i$ along with their lifted subtrees to their canonical positions with respect to the head of $P_i$, i.e.,vertex $v_0$. Note that the preprocessing Step 0 and Lemmas.~\ref{lemma:circle_around},~\ref{lemma:n-height} and~\ref{lemma:rpw_dist} guarantee that the already lifted subtrees lie in the disjoint cylinders of radius $rpw$ and height $n$.

Step 1 is a preparatory step and is needed to ensure that the maximum height of a vertex does not exceed $2n$ during the \lift{.} procedure. Step 1 is a simultaneous shrinking morph for already lifted subtrees of the internal vertices, and it is a crossing free morph since it happens in the corresponding disjoint cylinders. 

Step 2 is also preparatory and is needed to make sure, that no intersections happen during Steps 3. In this step, lifted subtree of vertex $v_j (j < m)$ of $P_i$ is rotated to such a direction  that in the projection it does not overlap with the edge $(v_j, v_{j + 1})$.

In Step 3, all the internal vertices along with their lifted subtrees are moved vertically  such that height that all $T'(v_j)$(lifted subtrees of internal vertices) become horizontally separated from each other and from the non-processed part of the tree. Note that the vertices of $P_i$ are in the same vertical order as in the $\mathcal{C}(T)$. 

In Step 4, lifted subtree of each of the internal vertices is rotated to lie in a horizontal plane through the corresponding vertex. This step places all $T'(v_j)$ in disjoint horizontal planes. 

After that we want to get $\mathcal{C}(T(v_1))$, i.e., the canonical drawing of all the subtree of $v_1$, translated to the current position of $v_1$.  During Steps 5-9 we do not move vertex $v_1$ and other non-processing vertices.

Steps 5 and 6 move $v_2, \ldots, v_m$ to their canonical positions with respect to the vertex $v_1$ (Step 5 corrects $x, y$-coordinates and Step 6 corrects $z$-coordinate). During both of these steps every lifted subtree of $v_2, \ldots, v_m$ moves along the same vector as its root vertex.

Steps 7, 8 and 9 place lifted subtrees of the internal vertices to the canonical positions with respect to $v_1$. In the canonical drawing, every subtree lies in $x+$ direction from its root. Step 7 simultaneously turns (if needed) every subtree of the internal vertices in $y+$ direction. Then in Step 8, we need to make the inverse step to Step 1 and morph $T'(v_j)$ from shrunk drawing to the canonical one. In Steps 7 and 8, all vectors of movements are horizontal and no intersections can happen due to disjointness of the corresponding horizontal planes. After Step 8 every lifted subtree of internal vertices is already in canonical position with respect to its root, but lying in $XY^+$ instead of $XZ^+$. In Step 9, we rotate $T'(v_j)$ to the vertical plane simultaneously for every internal vertex $v_j$. 

Steps 10-13 move $v_1$ with its subtree $T(v_1)$ and (if needed) $T'(v_0)$ to its canonical position with respect to $v_0$. Step 10 horizontally moves $T(v_1)$ towards the vertical pole through $v_0$ until $T(v_1)$ lies within the cylinder of radius $rpw + d(\Gamma)$ and height $2n$ around $v_0$ . Due to  Lemmas.~\ref{lemma:circle_around},~\ref{lemma:n-height} and~\ref{lemma:rpw_dist}, we know that these cylinders are disjoint for different unprocessed vertices that still lie in $XY_0$-plane. Step 10 ensures that Steps 11-13 move vertices only inside this cylinder and the processed part of the tree does not intersect with the unprocessed part.

Steps 11-13 differ depending whether or not we have rotated $T'(v_0)$ during Step 2. In one case $T'(v_0)$ is in $x+$ direction from $v_0$ and in the other $T(v_1)$ is in $x+$ direction from $v_0$. Steps 11 and 13 make two rotations of the needed part of the tree to correct $x, y$-coordinates. Also, we need to move $v_1$ and its subtree to the canonical height with respect to $v_0$. In Step 12, we make the $z$-coordinate correction of $T(v_1)$. The steps are ordered in such a way that no intersections happen during their execution. 

Step 13 concludes the procedure \lift{P_i}, placing all processing vertices into their canonical positions with respect to $v_0$, the head of the lifted path.
\noindent\textit{\textbf{Step 1: Shrink.}} 

\begin{figure}[!htp]

  \begin{center}  
  \includegraphics[scale = 0.6]{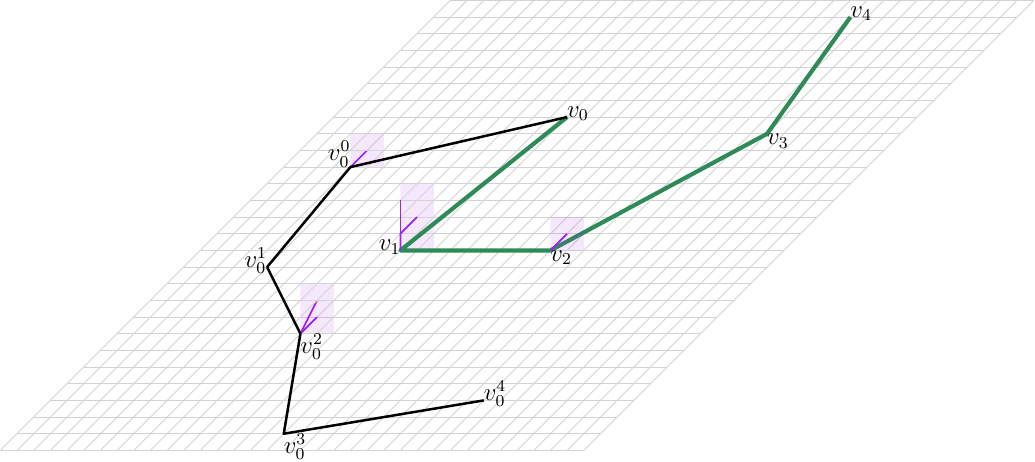}
  \end{center}
  
  \caption{\textbf{Step 1}.}
  \label{fig:step1}

\end{figure}

For every internal vertex $v_j$ of the path $P_i$ its lifted subtree $T'(v_j)$ morphs into the shrunk lifted subtree, see Sec.~\ref{sec:Tools}. All subtrees are shrunk simultaneously in one morphing step. 

\begin{lemma} Step 1 is a crossing-free morph.
\end{lemma}

\begin{proof}
For each lifted subtree this is a shrinking step and by Lemma~\ref{lemma:shrinking} it is crossing-free.

All lifted subtrees were not overlapping in projection to $XY_0$ plane due to Condition (I) and Lemmas~\ref{lemma:circle_around} and~\ref{lemma:rpw_dist} in the drawing $\Gamma_t$. As all vectors of movement are vertical, projections of lifted subtrees can not overlap during this morphing step and can not cross with each other.

In shrunk drawing every vertex of $T'(v_j)$ except $v_j$ has strictly positive $z$-coordinate which means that $T'(v_j) \setminus \{v_j\}$ remains strictly above $XY_0$ plane and can not cross with the part of $T$ that is still lying in the $XY_0$. Internal vertices of $P_i$ do not change their positions during this morph. \qed
\end{proof}

\noindent\textit{\textbf{Step 2: Move others.}} 

\begin{figure}[!htp]

  \begin{center}  
  \includegraphics[scale = 0.6]{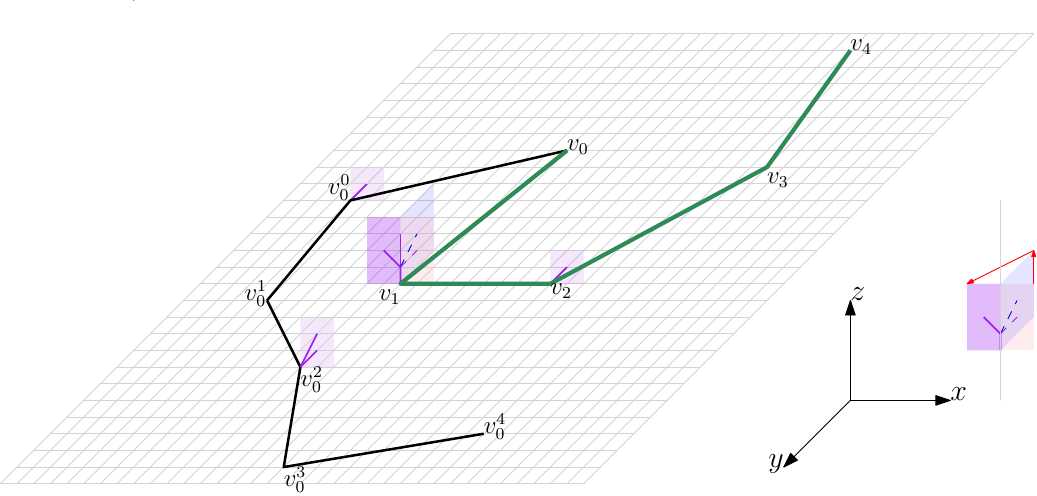}
  \end{center}
  
  \caption{\textbf{Step 2}.}
  \label{fig:step2}

\end{figure}

This step consists of two morphs $\langle \Gamma_{t}, \Gamma_{t + 1} \rangle, \langle \Gamma_{t + 1}, \Gamma_{t + 2} \rangle$.

 For $0 \leq j < i - 1$, if projection $pr(T'(v_j))$ overlaps with $pr((v_j, v_{j + 1}))$, we  rotate twice the drawing of $T'(v_j)$ around the  vertical pole through $\Gamma_t(v_j)$. Since every lifted subtree $T'(v_j)$ lies in $XZ^+_{v_j}$, after this step all lifted subtrees lie in $XZ^+_{v_j}$ or $XZ^-_{v_j}$. 
 
 \begin{lemma} Step 2 is a crossing-free morph. 
\end{lemma}

\begin{proof}
Two different lifted subtrees $T'(v), T'(v')$ cannot cross as due to Lemma~\ref{lemma:rpw_dist}: $pr(T'(v)) \subset \ball{\Gamma_t}{v}{rpw}, pr(T'(v')) \subset \ball{\Gamma_t}{v'}{rpw}$ and due to Lemma~\ref{lemma:circle_around} disks around vertices with radius $rpw \le \frac{\mathcal{S}_1}{
2}$ do not cross with each other.

No edges within rotating lifted subtree can intersect as rotating is a mapping and mapping is a crossing-free morph (Lemma~\ref{lemma:mapping_crossing-free}). Due to condition (II), all edges that do not lie in lifted subtrees are lying in the $XY_0$ plane. No lifted subtree $T'(v_j)$ can intersect with any edges in the $XY_0$ by condition (I): $\Gamma_t(T'(v_j))$ is the canonical drawing with respect to $v_j$ and lies strictly above $XY_0$ plane except for the point $\Gamma_t(v_j)$. Rotation morph does not change $z$-coordinate of points during the movement, so throughout the morph drawing of $T'(v_j)$ lies above $XY_0$ plane, vertex $v_j$ does not move during the rotation as it lies on the pole.

No vertices move in the $XY_0$ plane during Step 2 and no crossing can happen within the plane $XY_0$. \qed
\end{proof}

\noindent\textit{\textbf{Step 3: Go up:}} 

\begin{figure}[!htp]

  \begin{center}  
  \includegraphics[scale = 0.6]{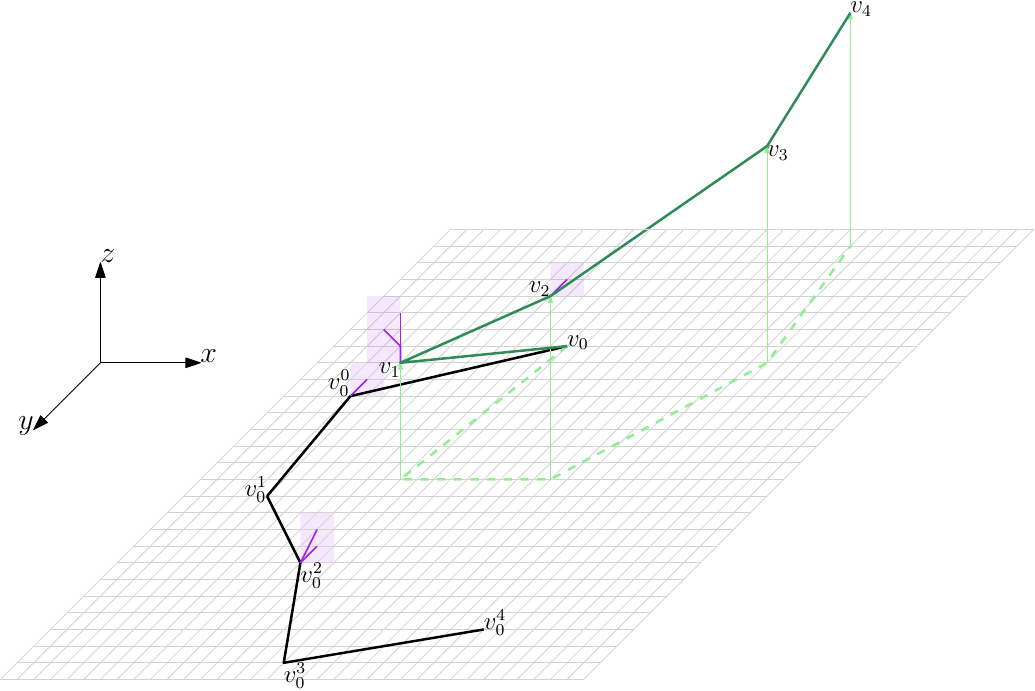}
  \end{center}
  
  \caption{\textbf{Step 3.}}

\end{figure}

This is one morphing step $\langle \Gamma_{t + 2}, \Gamma_{t + 3} \rangle$ that moves each internal vertex $v_j, j \geq 1$ of path $P_i$ vertically to the height defined recursively as follows: 
for $v_1$: $\Gamma_{t + 3}(v_1)_z = n$; for $v_j, j > 1$: $\Gamma_{t + 3}(v_j)_z = \Gamma_{t + 3}(v_{j - 1})_z + h_{sh}(T'(v_{j - 1}))$. Note that $h_{sh}(T'(v_{j}))$ is an integer number equal to the height of the shrunk lifted subtree $T'(v_j)$.

\begin{lemma} 
Step 3 is a crossing-free morph.
\end{lemma}

\begin{proof}
Every internal vertex $v_j$ in the path $P_i$ moves with its subtree $T'(v_j)$ along the vertical vector. In the beginning of this morphing step no intersections existed in projection to the $XY_0$ plane between different subtrees and the path edges. Nothing changes in projection to the $XY_0$ plane during this step, all movements happen strictly above the $XY_0$ plane. That means that there can be no crossings during this step of an algorithm. \qed
\end{proof}

\begin{lemma}
\label{lemma:hor_separ}
\begin{enumerate}
	\item After Step 3,  all internal vertices of $P_i$ along with vertices of their subtrees are horizontally separated from the rest of the vertices of the tree by a horizontal plane. 
	\item The subtrees of internal vertices of $P_i$ are horizontally separated from each other. 
\end{enumerate}
\end{lemma}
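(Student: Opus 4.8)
The plan is to prove both parts by tracking the $z$-coordinates of the vertices in the drawing $\Gamma_{t+3}$ produced by Step 3, exhibiting in each case an explicit horizontal (constant-$z$) separating plane. Throughout I would use two facts established just before the statement: by Lemma~\ref{lemma:n-height} every vertex of $\Gamma_{t+2}$ lies at height strictly below $n$, and Step 3 is a purely vertical morph that moves only the internal vertices $v_1,\dots,v_m$ of $P_i$ together with their lifted subtrees, leaving the head $v_0$ and every vertex outside the processing set fixed.

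For part (1), first I would observe that the recursion $\Gamma_{t+3}(v_1)_z=n$ and $\Gamma_{t+3}(v_j)_z=\Gamma_{t+3}(v_{j-1})_z+h_{sh}(T'(v_{j-1}))$ is nondecreasing, since each $h_{sh}(T'(v_{j-1}))=|T'(v_{j-1})|\ge 1$; hence every internal vertex ends at height $z_j\ge z_1=n$. As each lifted subtree $T'(v_j)$ is in canonical position with root $v_j$, it lies entirely at heights $\ge z_j\ge n$, so all processing vertices have height $\ge n$. The remaining vertices are untouched by Step 3 and therefore retain, by Lemma~\ref{lemma:n-height}, height strictly below $n$. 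Consequently any horizontal plane $z=c$ with $c$ strictly between the largest height among the remaining vertices and $n$ separates the two groups, and such $c$ exists because the former quantity is $<n$.

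For part (2), I would pin down the exact vertical span of each shrinked subtree. After Step 3 the root $v_j$ sits at $z_j$, and by the structure of the canonical drawing together with Observation~\ref{obs:shrink_height} the shrinked subtree $T'(v_j)$ occupies a box of integer height $h_{sh}(T'(v_j))=|T'(v_j)|$ whose bottom is $v_j$; its vertices take distinct integer heights with $v_j$ lowest, so the topmost vertex of $T'(v_j)$ is at height $z_j+h_{sh}(T'(v_j))-1=z_{j+1}-1$. Thus $T'(v_j)\subseteq\{z\le z_{j+1}-1\}$ while $T'(v_{j+1})\subseteq\{z\ge z_{j+1}\}$, so the plane $z=z_{j+1}-\tfrac12$ strictly separates two consecutive subtrees; for non-consecutive $j<j'$, the subtree $T'(v_j)$ lies below $z_{j+1}\le z_{j'}$ and $T'(v_{j'})$ lies at or above $z_{j'}$, so a plane of the same form separates them as well.

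The hard part will be the one-unit gap used in part (2): I must justify that the topmost vertex of a shrinked subtree sits strictly below the top of its height-$h_{sh}$ box, as otherwise consecutive subtrees would merely touch at a single height rather than be cleanly separated. This reduces to the structural claim that a canonical (or shrinked-canonical) drawing of an $N$-vertex subtree assigns its vertices distinct integer heights with the root at the bottom, so that the subtree spans exactly the heights $\{0,1,\dots,N-1\}$ relative to its root. I would establish this by induction along the heavy-path/light-subtree recursion in the definition of $\mathcal{C}(T)$, using Observation~\ref{obs:shrink_height} for the shrinked case. Once this gap is in hand, both separations follow immediately from the coordinate bookkeeping above.
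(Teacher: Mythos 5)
Your proof is correct and follows essentially the same route as the paper's: part (1) separates with a horizontal plane just below height $n$ using Lemma~\ref{lemma:n-height}, and part (2) uses the plane at height $\Gamma_{t+3}(v_k)_z-\tfrac12$. The only difference is that you explicitly verify the one-unit gap --- that an $N$-vertex shrinked/canonical subtree occupies exactly the integer heights $0,\dots,N-1$ above its root, so $T'(v_j)$ tops out at $z_{j+1}-1$ --- a detail the paper's proof of part (2) leaves implicit in the phrase ``from the definition of the height.''
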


\begin{proof}
\begin{enumerate}
	\item Due to Lemma~\ref{lemma:n-height}, $z$-coordinates of all vertices that do not lie on the path are strictly below $n$, also they are integer. By height definition in $\Gamma_{t + 3}$ internal vertices of $P$ and vertices in their lifted subtrees have $z$-coordinates at least $n$.

So the plane $(x, y, n - \frac{1}{2}), x, y \in \mathbb{R}$ is the horizontal plane of separation.
	\item For every pair of internal vertices  $v_j, v_k, j < k$ of $P$,  let us define the plane of separation as follows. From the definition of the height $\Gamma_{t + 3}(v_k)_z$ the horizontal plane $(x, y, \Gamma_{t + 3}(v_k)_z - \frac{1}{2}), x, y \in \mathbb{R}$ separates lifted subtree of vertex $v_j$ from lifted subtree of vertex $v_k$. \qed
\end{enumerate} 
\end{proof}

\noindent\textit{\textbf{Step 4: Rotate subtrees to horizontal plane.}}

\begin{figure}[!htp]

  \begin{center}  
  \includegraphics[scale = 0.6]{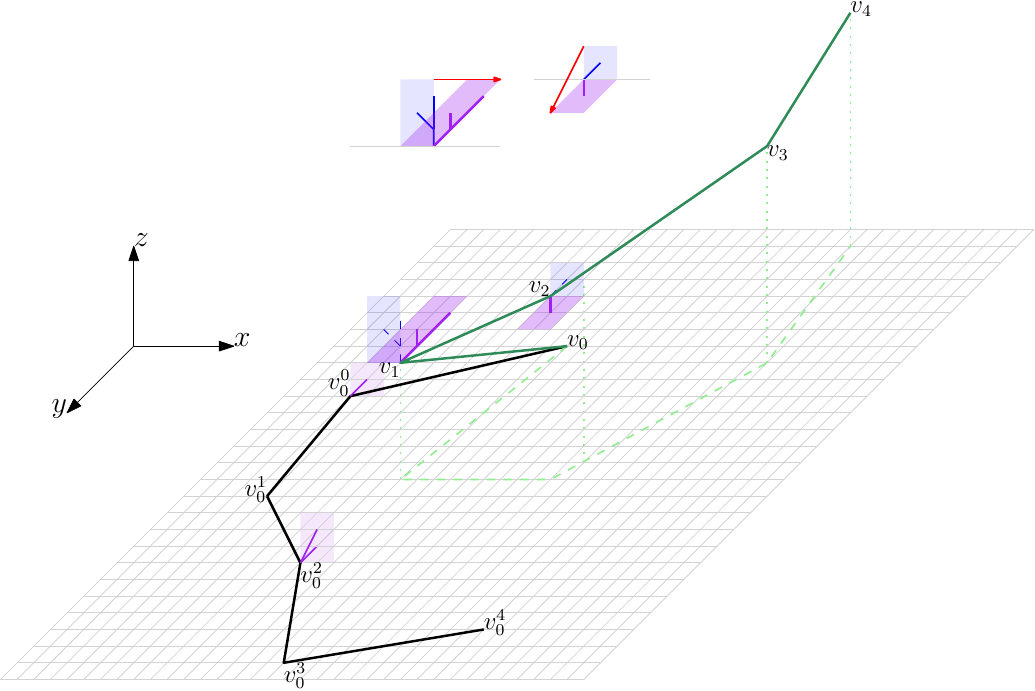}
  \end{center}
  
  \caption{\textbf{Step 4.}}

\end{figure}

In this morphing step, $\langle \Gamma_{t + 3}, \Gamma_{t + 4} \rangle$ all lifted subtrees $T'(v_j)$ of internal vertices of path $P_i$ are rotated around a horizontal pole through $\Gamma_{t + 3}(v_j)$ to lie in a horizontal plane. The direction of rotation is chosen in such a way that $T'(v_j)$ does not cross with an edge $(v_j, v_{j + 1})$ throughout this morph. That is we rotate in the half-space that does not contain the edge $(v_j, v_{j + 1})$.

\begin{lemma} 
With the proper direction of rotation, Step 4 is a crossing-free morph.
\end{lemma}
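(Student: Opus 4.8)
The plan is to classify the possible crossings during the rotation $\langle \Gamma_{t+3}, \Gamma_{t+4}\rangle$ into three kinds and rule out each in turn, leaning on the horizontal separation already guaranteed by Lemma~\ref{lemma:hor_separ}. The one fact I would record at the outset is that each rotation is a horizontal rotation around a horizontal pole through $\Gamma_{t+3}(v_j)$: a vertex of $T'(v_j)$ at relative height $c \ge 0$ above $v_j$ is sent, at rotation angle $\theta \in [0,\frac{\pi}{2}]$, to relative height $c\cos\theta$. Thus every vertex keeps its $x$-coordinate, its $z$-coordinate is monotonically non-increasing, and it stays in the interval $[\Gamma_{t+3}(v_j)_z,\ \Gamma_{t+3}(v_j)_z + c]$ while the root $v_j$ itself stays fixed on the pole. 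The first kind of crossing, an intersection of two edges \emph{inside} a single rotating subtree $T'(v_j)$, is immediate: the rotation is a mapping around a pole, and by Lemma~\ref{lemma:mapping_crossing-free} a mapping is crossing-free.

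Next I would rule out crossings between two distinct subtrees and between a subtree and the part of $T$ not on $P_i$. By Lemma~\ref{lemma:hor_separ}, in $\Gamma_{t+3}$ the subtree of each internal vertex occupies its own horizontal slab, and all of them lie above the remaining vertices of $T$. Since the rotation only decreases the relative height of every vertex while fixing each root on its pole, each subtree remains inside its slab for all $\theta$; hence the separating horizontal planes of Lemma~\ref{lemma:hor_separ} stay valid throughout the step, and crossings of these two kinds cannot occur.

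The remaining, and most delicate, case is a rotating subtree meeting a path edge. Inside the slab of $T'(v_j)$ the only path edge present is $(v_j, v_{j+1})$: the edge $(v_{j-1}, v_j)$ lives in the slab below and touches the current slab only at the shared vertex $v_j$, while all other path edges lie in disjoint slabs. This is where the choice of rotation direction is used: we rotate $T'(v_j)$ into the half-space bounded by the vertical plane $y = \Gamma_{t+3}(v_j)_y$ that does \emph{not} contain $v_{j+1}$. Then for every $\theta \in (0, \frac{\pi}{2}]$ all vertices of $T'(v_j)$ other than $v_j$ lie strictly on one side of this plane, whereas the segment $(v_j, v_{j+1})$, whose $y$-coordinate varies monotonically from $\Gamma_{t+3}(v_j)_y$ to $\Gamma_{t+3}(v_{j+1})_y$, lies on the closed opposite side; so the subtree and the edge can meet only at their common endpoint $v_j$. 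In the degenerate case where $v_{j+1}$ lies in the plane $y = \Gamma_{t+3}(v_j)_y$, either direction works, since $T'(v_j)\setminus\{v_j\}$ leaves that plane as soon as $\theta > 0$ while the edge stays in it, and crossing-freeness at $\theta = 0$ is inherited from $\Gamma_{t+3}$. Because the subtrees occupy pairwise disjoint slabs, these directional choices are independent and can be made simultaneously for all internal vertices.

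The main obstacle I expect is exactly the bookkeeping of this last case: one must check that the swept wedge of $T'(v_j)$ — the quarter-space between its initial vertical half-plane and its final horizontal half-plane — is separated from $(v_j, v_{j+1})$ by the plane $y = \Gamma_{t+3}(v_j)_y$ \emph{precisely} when the rotation is taken away from $v_{j+1}$, and to dispatch the coplanar boundary case cleanly by appealing to the crossing-freeness of $\Gamma_{t+3}$ rather than to the separating plane. The other two kinds of crossings reduce, essentially mechanically, to the $z$-monotonicity of the rotation together with Lemma~\ref{lemma:hor_separ}.
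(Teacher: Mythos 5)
Your proof is correct and follows essentially the same route as the paper's: within-subtree crossings are dismissed because the rotation is a crossing-free mapping (Lemma~\ref{lemma:mapping_crossing-free}), inter-subtree and subtree-versus-unprocessed-tree crossings by the horizontal separation of Lemma~\ref{lemma:hor_separ}, and the subtree-versus-$(v_j,v_{j+1})$ case by choosing the rotation direction into the half-space not containing $v_{j+1}$, with the coplanar case dispatched separately exactly as the paper does. If anything you are slightly more explicit than the paper (e.g., in arguing via $z$-monotonicity that the separating planes persist for all intermediate times, not just at the endpoints); the only cosmetic mismatch is that the paper's rotation is a linear (unidirectional) mapping morph, so the relative height scales as $(1-t)c$ rather than $c\cos\theta$, which changes nothing in the argument.
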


\begin{proof}
If the crossing happens it must include one of the moving edges, i.e., edges of lifted subtrees. 

As subtrees $T'(v_j)$ are horizontally separated from each other after Step 3 by Lemma~\ref{lemma:hor_separ}, edges of different lifted subtrees can not cross. Also by Lemma~\ref{lemma:hor_separ}, subtrees $T'(v_j)$ can not cross with non-processing vertices or edges between them as they also are horizontally separated in the beginning and in the end of this step.

If for some $j, 1 \le j \le m$ the edge $(v_j, v_{j + 1})$ lied in the vertical plane through $T'(v_j)$ there was no crossings in $\Gamma_{t + 3}$ and after the beginning of movement $T'(v_j)$ will no longer lie in that plane and no crossing can happen. If $(v_j, v_{j + 1})$ lied strictly within one of the half-planes defined by $T'(v_j)$, then by the choice of direction of rotation we rotate through the half-space that does not contain $(v_j, v_{j + 1})$ and thus can not cross with this edge either. \qed
\end{proof}

\noindent\textit{\textbf{Step 5: Correct the path:}}

\begin{figure}[!htp]

  \begin{center}  
  \includegraphics[scale = 0.6]{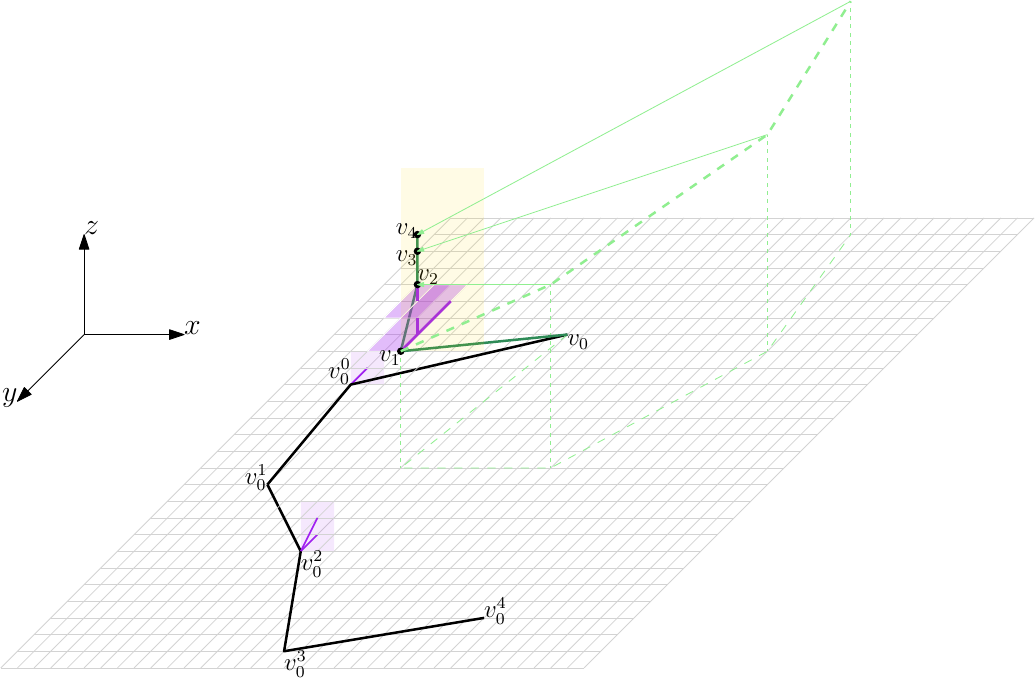}
  \end{center}
  
  \caption{\textbf{Step 5.}}

\end{figure}

In morph $\langle \Gamma_{t + 4}, \Gamma_{t + 5} \rangle$, each vertex $v_j$($j \geq 2$) of the path $P_i$ moves together with its subtree $T'(v_j)$ along the vector $((v_{1_x}  - v_{j_x}) + \mathcal{C}(v_j)_x - \mathcal{C}(v_1)_x, v_{1_y} - v_{j_y}, 0)$, where $v_{1_x}$ denotes $x$-coordinate of vertex $v_1$ in drawing $\Gamma_{t + 4}$. At the end of this step, the $x$ and $y$ coordinates of each $v_j$ are same as their $x$ and $y$ coordinates in the canonical drawing with respect to $v_1$. 

\begin{lemma} Step 5 is a crossing-free morph.
\end{lemma}

\begin{proof}
Step is crossing-free because all edges of the path $i$ and all subtrees of different vertices $v_j$ of the path $P_i$ (that lie in the horizontal planes) are horizontally separated from each other. All moving edges are horizontally separated from already lifted subtrees with the plane $z = n$ by Lemma~\ref{lemma:hor_separ}. \qed
\end{proof}

\noindent\textit{\textbf{Step 6: Go down:}} 

\begin{figure}[!htp]

  \begin{center}  
  \includegraphics[scale = 0.6]{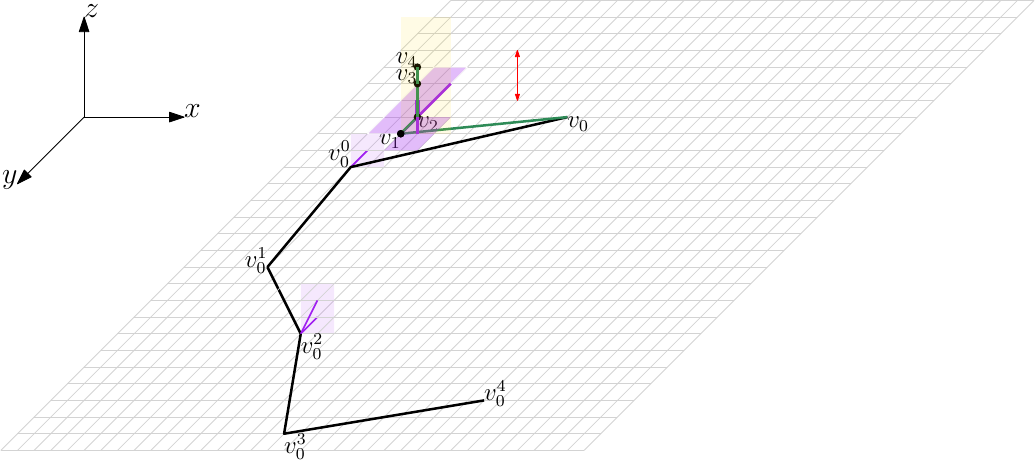}
  \end{center}
  
  \caption{\textbf{Step 6.}}

\end{figure}

During the morphing step $\langle \Gamma_{t + 5}, \Gamma_{t + 6} \rangle$, every vertex $v_j, j \geq 2$ of the path $P_i$ moves together with its subtree $T'(v_j)$ along the same vertical vector $(0, 0, (v_{1_z}  - v_{j_z}) + \mathcal{C}(v_j)_z - \mathcal{C}(v_1)_z)$, where $v_{1_z}$ means $z$-coordinate of vertex $v_1$ in drawing $\Gamma_{t + 5}$. At the end of this step, the $z$ coordinate of $v_j$ is the same as the $z$ coordinate of it in the canonical drawing with respect to $v_1$. 

\begin{lemma} Step 6 is a crossing-free morph.
\end{lemma}

\begin{proof}
All moving edges are horizontally separated from already lifted subtrees with the plane $z = n$. During the morph the vertical order of internal vertices of the path does not change as $P_i$ is a portion of a root-to-leaf path. From that it follows that horizontal separation of $T'(v_j)$ remains for different $j$. \qed
\end{proof}

\noindent\textit{\textbf{Step 7: Turn subtrees in the horizontal planes:}} 

\begin{figure}[!htp]

  \begin{center}  
  \includegraphics[scale = 0.6]{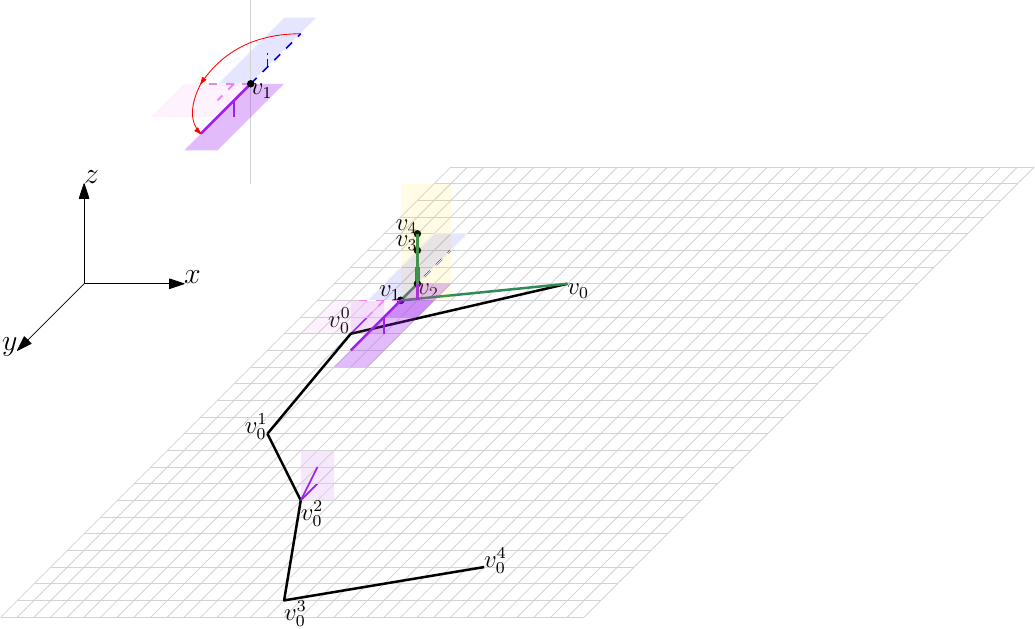}
  \end{center}
  
  \caption{\textbf{Step 7.}}

\end{figure}

It consists of two morphing steps $\langle \Gamma_{t + 6}, \Gamma_{t + 7} \rangle, \langle \Gamma_{t + 7}, \Gamma_{t + 8} \rangle$, and turns every lifted subtree $T'(v_j)$ of internal vertices of $P_i$ to lie in positive $x$-direction with respect to vertex $v_j$. For every subtree that lies in negative $x$-direction in the drawing $\Gamma_{t + 6}$, we use the morphing step from Lemma~\ref{lemma:pinwheel_ref} to obtain a  planar point reflection of $T'(v_j)$ across the point $v_j$ in the horizontal plane containing $T'(v_j)$.

\begin{lemma}
Step 7 is a crossing-free morph.
\end{lemma}

\begin{proof}
The fact that point reflection of a tree across the point of its root can be performed as two steps and those steps are crossing-free follows from Lemma~\ref{lemma:pinwheel_ref}.

Vertices of different lifted subtrees are horizontally separated and separated from non-processing vertices and edges, which means no crossings can happen between them. \qed
\end{proof}

\noindent\textit{\textbf{Step 8: Stretch in $y$-direction:}} 

\begin{figure}[!htp]

  \begin{center}  
  \includegraphics[scale = 0.6]{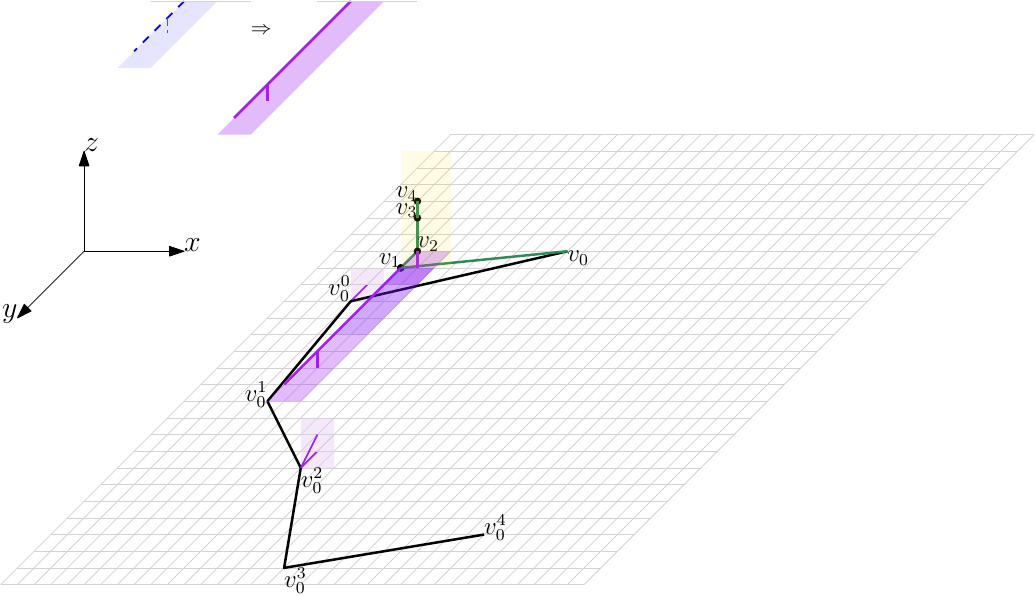}
  \end{center}
  
  \caption{\textbf{Step 8.}}

\end{figure}

This morphing step $\langle \Gamma_{t + 8}, \Gamma_{t + 9} \rangle$ transforms lifted subtrees of internal vertices of $P_i$ in the horizontal planes from shrunk to the canonical size.

\begin{lemma} Step 8 is a crossing-free morph.
\end{lemma}

\begin{proof}
Stretching here is a shrinking morph played backwards and is a crossing-free morph for every $T'(v_j)$ by Lemma~\ref{lemma:shrinking}.

As all $T'(v_j)$ are horizontally separated from each other and from unprocessed part of the tree and morph is horizontal no crossings can happen between different $T'(v_j)$ or $T'(v_j)$ and unprocessed part of the graph. \qed
\end{proof}

\noindent\textit{\textbf{Step 9: Rotate to the canonical positions:}} 

\begin{figure}[!htp]

  \begin{center}  
  \includegraphics[scale = 0.6]{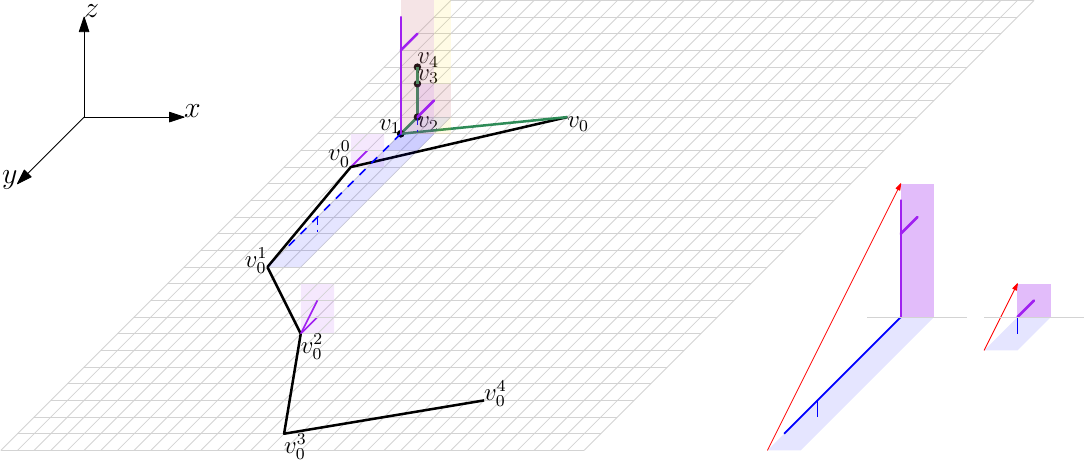}
  \end{center}
  
    \caption{\textbf{Step 9.}}
\end{figure}

In morph $\langle \Gamma_{t + 9}, \Gamma_{t + 10} \rangle$ all lifted subtrees $T'(v_j)$ of the internal vertices of the path rotate around horizontal axes $(x, v_{j_y}, v_{j_z}), x \in \mathbb{R}$ to lie in vertical plane in positive direction such that the subtree $T(v_1)$ is in the canonical position with respect to vertex $v_1$. 

\begin{lemma} Step 9 is a crossing-free morph.
\end{lemma}

\begin{proof}
Let $H$ be the vertical plane parallel to $XZ_0$ and containing $v_1, \ldots, v_m$.

Vertices in subtrees $T'(v_j)$ lie from one side of $H$ in parallel planes at any moment of this morphing step. The half-planes that contain different $T'(v_j)$ are parallel because we can look at the rotation of the half-plane $\alpha$ containing $T'(v_j)$ as at the rotation of the half-plane $\beta$ containing $T'(v_k), j, k \in \{1, \ldots, m\}$ translated by the vector $\Gamma_{t + 9}(v_j)_z - \Gamma_{t + 9}(v_k)_z$. Then at every moment angle $
\angle (\alpha, XZ) = \angle (\beta, XZ)$ which means that these half-planes are parallel to each other. So the vertices of $T'(v_j)$ do not cross. 

$T'(v_j)$ remains in the same half-space from $H$ during all morph for every $j = 1, \ldots, m$. This means that subtrees that lie from different sides of $H$ do not cross too.

At the end of this step subtree of vertex $v_1$ of the drawing $\Gamma_{t + 10}$ is at its canonical positions with respect to to vertex $v_1$ and therefore does not have any crossing within itself. \qed
\end{proof}

\noindent\textit{\textbf{Step 10: Move along.}} 

\begin{figure}[!htp]

  \begin{center}  
  \includegraphics[scale = 0.6]{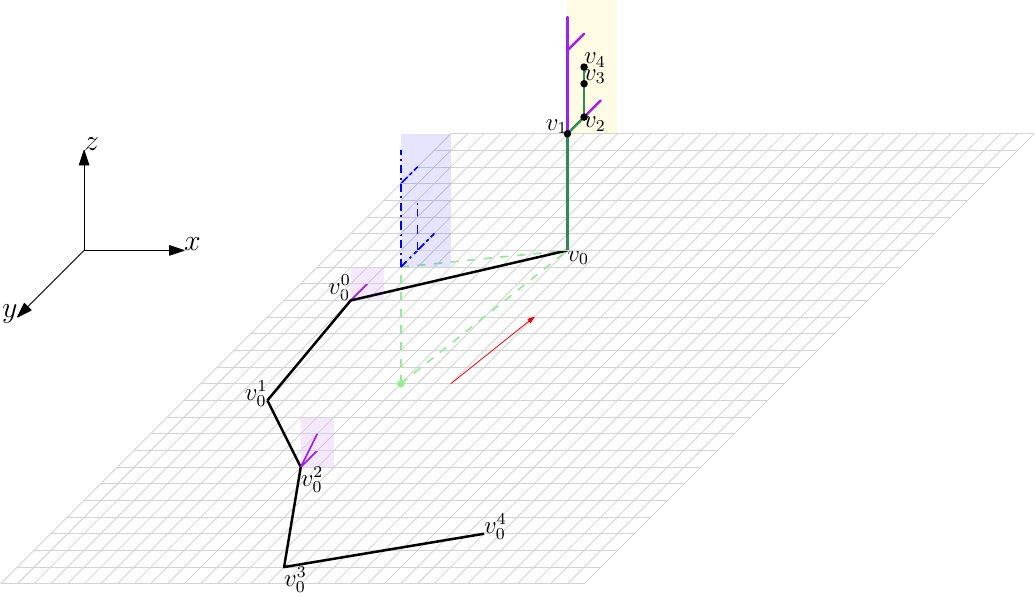}
  \end{center}
  
    \caption{\textbf{Step 10.}}
\end{figure}

In this morphing step, $\langle \Gamma_{t + 10}, \Gamma_{t + 11} \rangle$ every internal vertex $v_j$ of the path with its subtree $T'(v_j)$ moves horizontally in the direction  $(v_{0_x} - v_{1_x}, v_{0_y} - v_{1_y}, 0)$. If in $\mathcal{C}(T)$ the edge $(v_0, v_1)$ is vertical, vertex $v_1$ moves along this vector to get $x, y$-coordinates equal to $(v_{0_x}, v_{0_y})$. Otherwise, vertex $v_1$ moves along this vector as long as possible to get integer $x$ and $y$ coordinates not equal to $(v_{0_x}, v_{0_y})$. From Lemma~\ref{lemma:circle_around} there is such a point $p = (p_x, p_y, 0)$ in disk $\ball{\Gamma_{t + 10}}{v_0}{rpw + d(\Gamma)}$. Note that $\dist{\Gamma_{t + 10}}{p}{v_0} \le d(\Gamma)$. Since all subtrees $T'(v_j)$ take no more than $rpw$ space around the pole $(v_{1_x}, v_{1_y}, z), z \in \mathbb{R}$, the projection to $XY_0$ plane lie within the disk $\ball{\Gamma_{t + 10}}{v_0}{rpw + d(\Gamma)}$. After this step, the projection of all processing vertices lie within $\ball{\Gamma_{t + 10}}{v_0}{rpw + d(\Gamma)}$ in the $XY_0$ plane.

\begin{lemma} Step 10 is a crossing-free morph.
\end{lemma}

\begin{proof}
Step is crossing-free because all the vertices of the path $P$ are horizontally separated from other subtrees of other vertices in the plane $XY_0$ (all already lifted subtrees have height at most $n - 1$ and all processing vertices have $z$-coordinates at least $n$) and the only not-separated edge $(v_0, v_1)$ moves along such a vector, that its projection does not change angle in the $XY_0$ plane. \qed
\end{proof}

All the steps from 11 to 13 are split in two different cases depending on whether we turned $T'(v_0)$ during Step 2 or not. 
\textbf{Case 1} describes all three steps when $T'(v_0)$ was not rotated since $pr(v_0, v_1)$ and $T'(v_0)$ did not overlap after Step 1. This implies that in $\Gamma_{t + 11}$ already lifted subtree of $v_0$ lies in $XZ^+_{v_0}$ (Fig.~\ref{fig:case1}).\textbf{Case 2} describes all three steps if overlap happened and $T'(v_0)$ was rotated twice to lie in $XZ^-_{v_0}$ (Fig.~\ref{fig:step13}).

\noindent\textit{\textbf{Step 11: First part of $xy$-correction:}} \\
\textbf{Case 1:} If in Step 2 $T'(v_0)$ was not rotated, during $\langle \Gamma_{t + 11}, \Gamma_{t + 12} \rangle$ all internal vertices of $P$ with their subtrees $T'(v_j)$ move along the same horizontal vector until the edge $(v_0, v_1)$  lie on the half-plane parallel to $YZ_0$ and $|v_{1_y} - v_{0_y}| = |\mathcal{C}(v_1)_x - \mathcal{C}(v_0)_x|$. The direction is chosen so that the angle between $pr((v_0, v_1))$ in $\Gamma_{t + 11}$ and $pr((v_0, v_1))$ in $\Gamma_{t + 12}$ is minimal.

\noindent\textbf{Case 2:} If during Step 2 $T'(v_0)$ was rotated, which means that the edge $(v_0, v_1)$ was parallel to $0X$, then $pr((v_0 ,v_1))$ is still parallel to $0X$, because we have not changed its direction. By definition of Step 2 we know that $T'(v_0)$ in $\Gamma_{t + 11}$ lies in $XZ^-_{v_0}$. We are rotating $T'(v_0)$ around the pole through $v_0$ to lie in $YZ^+_{v_0}$.  

\begin{lemma} Step 11 is a crossing-free morph.
\end{lemma}

\begin{proof}
\textbf{Case 1:} All vectors of movement in $T(v_1)$ are the same, which means no crossing can happen in $T(v_1)$. Also $T(v_1)$ is horizontally separated from the unprocessed part of $T$, which is motionless. 

\begin{sloppypar}
As for the edge $(v_0, v_1)$ its projection $pr((v_0, v_1))$ lies within the disk ${\ball{\Gamma_{t + 11}}{v_0}{rpw + d(\Gamma)}}$ throughout this morphing step and therefore can not cross with lifted subtrees $T'(v)$ of other non-processing vertices $v$ which projections lie within ${\ball{\Gamma_{t + 11}}{v}{rpw + d(\Gamma)}}$. Also $(v_0, v_1)$ can not cross with $T'(v_0)$ because $(v_0, v_1)$ moves within one half-space defined by plane $XZ_{v_0}$.
\end{sloppypar}

\textbf{Case 2:} Rotation is a crossing-free morph and in projection happens within $\ball{\Gamma_{t + 11}}{v_0}{rpw + d(\Gamma)}$. For the same reasons as in Case 1 no crossings happen.
\end{proof}

\noindent\textit{\textbf{Step 12: Go down.}} 

\begin{figure}[!htp]

  \begin{center}  
  \includegraphics[scale = 0.6]{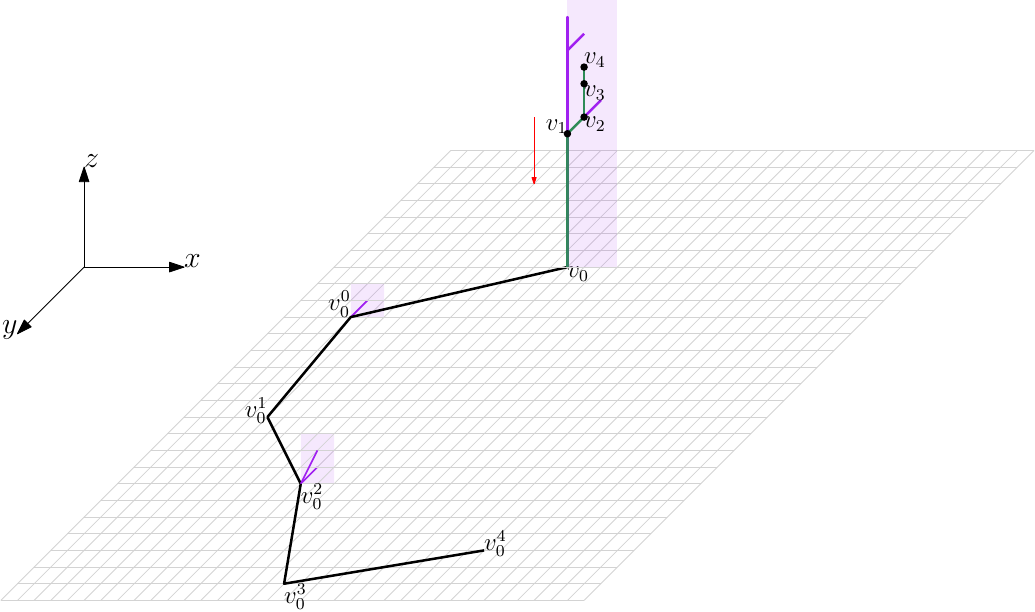}
  \end{center}
  
    \caption{\textbf{Step 12.}}
\end{figure}

$\langle \Gamma_{t + 12}, \Gamma_{t + 13} \rangle$ morphing step is a vertical morph. In $\Gamma_{t + 12}$, the $z$-coordinates of internal vertices of $P$ are $n$ more than their canonical $z$-coordinates with respect to $v_0$. We decrease the $z$-coordinates of internal vertices along with all their subtrees by $n$.

\begin{lemma} Step 12 is a crossing-free morph.
\end{lemma}

\begin{proof}
\textbf{Case 1:} Step is crossing-free because $T'(v_0)$ and $T(v_1)$ lie in distinct parallel planes and do not intersect in projection in $\Gamma_{t + 12}$. Vertical morph does not change the projection of the drawing so separation remains. Edge $(v_0, v_1)$ and $T'(v_0)$ after Step 11 lie in different planes too.

\textbf{Case 2:} During Step 12 $(v_0, v_1)$ and $T(v_1)$ are in $XZ^+_{v_0}$. $T'(v_0)$ during the same step is in $XZ^-_{v_0}$. They can not make any crossings because they do not intersect in projection during all this morph. \qed
\end{proof}

\noindent\textit{\textbf{Step 13: Second part of $xy$-correction.}} This step consists of one morphing step $\langle \Gamma_{t + 13}, \Gamma_{t + 14} \rangle$. 

\textbf{Case 1} All processing vertices move along the same horizontal vector so that after this movement vertex $v_1$ lies in the canonical position with respect to $v_0$. As $T(v_1)$ is already in the canonical position with respect to $v_1$, after Step 13 it is in the canonical position with respect to $v_0$.

\textbf{Case 2} $T(v_1)$ and $(v_0, v_1)$ are in the canonical positions with respect to $v_0$ after Step 12: in Step 10 we got $x, y$-coordinates equal to $(v_{0_x} + (\mathcal{C}(v_1)_x - \mathcal{C}(v_0)_x), v_{0_y} + (\mathcal{C}(v_1)_y - \mathcal{C}(v_0)_y)$ because $(v_0, v_1)$ was parallel to 0X axis, after Step 12 we have corrected $z$-coordinates. In this case, we rotate $T'(v_0)$ to $XZ^+_{v_0}$, i.e. to its canonical position with respect to $v_0$. 
In both cases processing vertices and $T'(v_0)$ lie in the canonical position with respect to $v_0$ in $\Gamma_{t + 14}$, they all now form new lifted subtree of vertex $v_0$.

\begin{figure}[!htp]

  \begin{center}  
  \includegraphics[scale = 0.6]{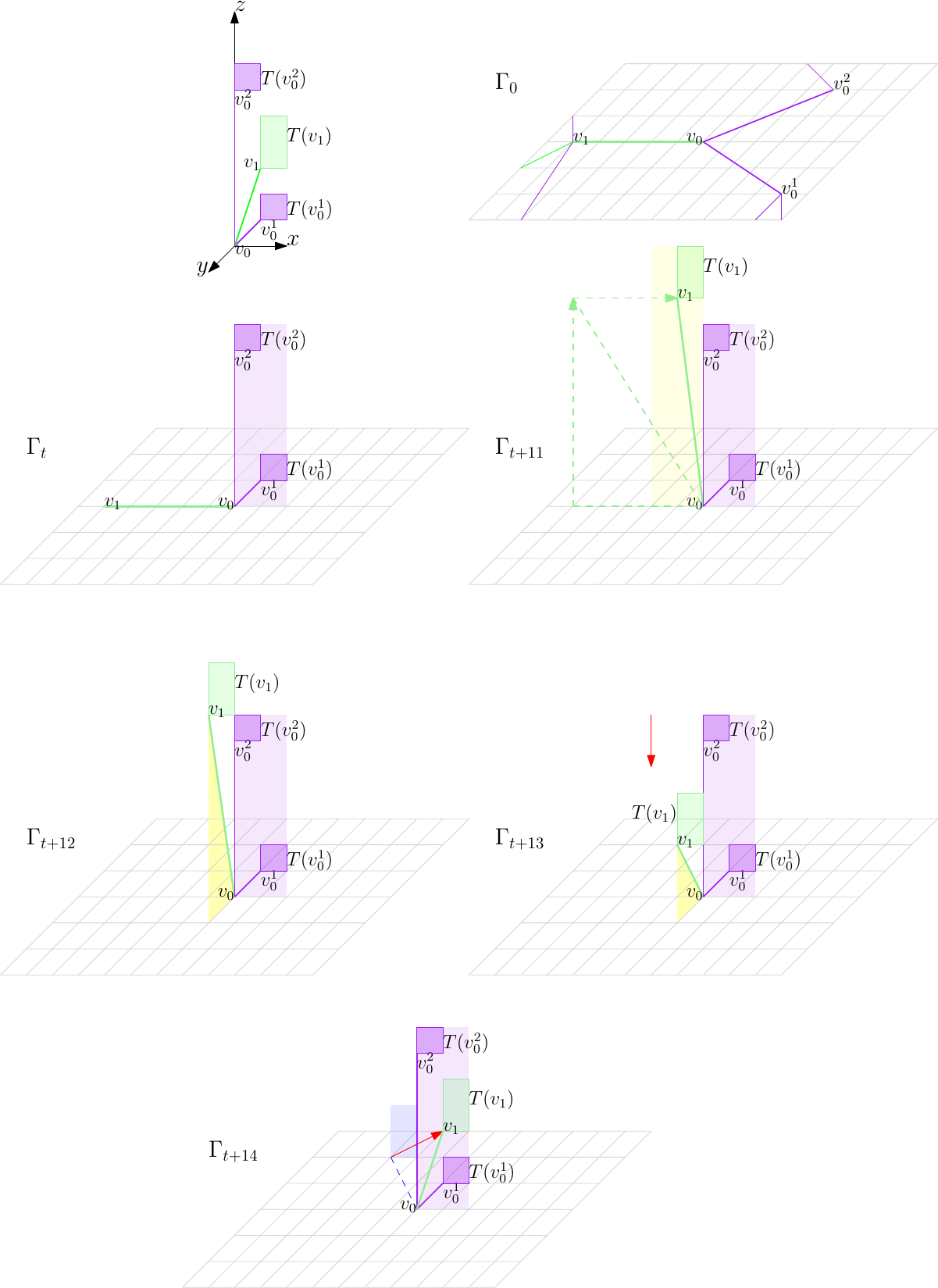}
  \end{center}
  
  \caption{\textbf{Step 11-13.} Case 1}
  \label{fig:case1}

\end{figure}

\begin{figure}[!htp]

  \begin{center}  
  \includegraphics[scale = 0.6]{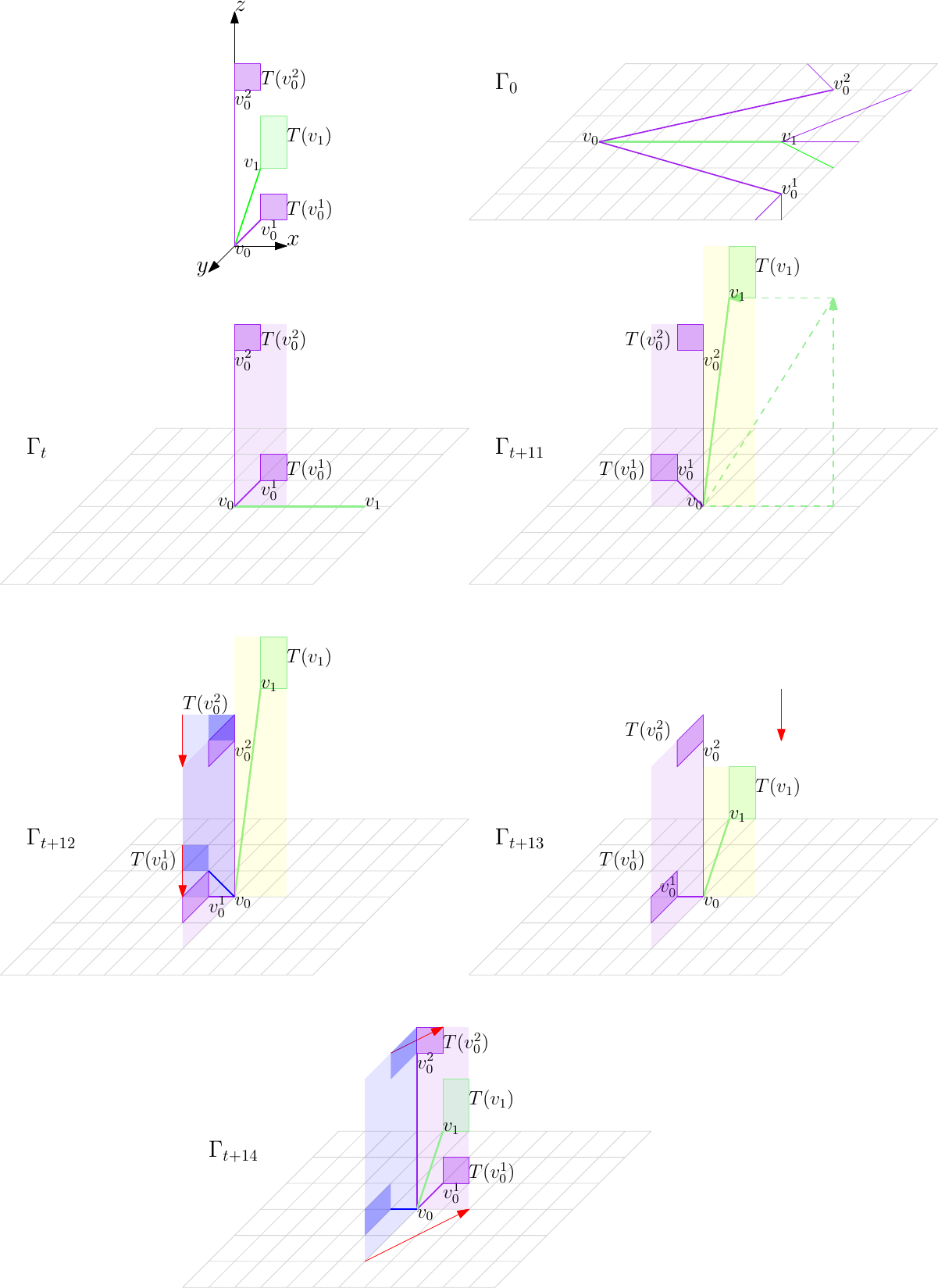}
  \end{center}
  
  \caption{\textbf{Step 11-13.} Case 2}
  \label{fig:step13}

\end{figure}

\begin{lemma} Step 13 is a crossing-free morph.
\end{lemma}

\begin{proof}
\textbf{Case 1:} During the morph projections of $(v_0, v_1)$, $T(v_1)$, $T'(v_0)$ do not cross. In the end of the step $T(v_1)$ and $(v_0, v_1)$ lie in the canonical positions with respect to $v_0$, $T'(v_0)$ was already in the canonical position with respect to $v_0$ by condition (I), so at the end of the morph they can not cross too.

\textbf{Case 2:} Rotation is a crossing-free morph and in the end of this step we get $T'(v_0), T(v_1)$ and $(v_0, v_1)$ to be in the canonical positions with respect to $v_0$ and $\mathcal{C}_{T_{v_0}}$ does not contain any crossings. \qed
\end{proof}

In the end of these morphing steps, we observe that all the internal vertices of $P_i$ along with their subtrees are placed in the canonical position with respect to $v_0$. We keep on lifting up paths until we obtain the canonical drawing of $T$.

\paragraph{\textbf{Correctness of the algorithm}} 

\begin{lemma} Conditions (I) and  (II) hold after performing \lift{P_{i}} for each $1 \le i \le m$.
\end{lemma}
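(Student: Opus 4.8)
The plan is to prove the statement by induction on $i$, treating the whole procedure \lift{P_i} as the composition of Steps~$1$--$13$ and reusing the thirteen per-step crossing-free lemmas for the topological part; the real content lies in checking that these steps re-establish the two invariants. For the base case ($i=1$) I would take the drawing immediately after Step~$0$: by Lemma~\ref{lemma:stretching_morph_is_crossing_free} stretching keeps all vertices in $X0Y$, so (II) holds for every path, and since no edge has been lifted yet, for each $v$ the lifted subtree $T'(v)$ is the single point $v$, which is trivially in canonical position, giving (I).

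For the inductive step I assume (I) and (II) hold after \lift{P_{i-1}} and process $P_i=(v_0,\ldots,v_m)$ with head $v_0$. Applying (II) to $P_i$ (a path $P_k$ with $k\ge i$) places its vertices and edges in $X0Y$, and combining the lemma that the subtrees of all internal vertices of $P_i$ are already lifted with (I) shows that each $T'(v_j)$ is drawn canonically with respect to $v_j$ inside $XZ^+_{v_j}$, strictly above $X0Y$ except at $v_j$, while $T'(v_0)$ is canonical with respect to $v_0$. This is precisely the configuration the descriptions of Steps~$1$--$13$ assume. Invariant (II) is then immediate: the only vertices that move during \lift{P_i} are the processing vertices (internal vertices of $P_i$ and their lifted subtrees) and, in Steps~$2$, $11$ and $13$, the already lifted $T'(v_0)$; all of these belong to paths $P_{k'}$ with $k'\le i$, so no vertex of any $P_k$ with $k>i$ is disturbed and they remain in $X0Y$.

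For invariant (I) I would argue that the lifted subtrees of all vertices outside $\{v_1,\ldots,v_m\}$ and outside the old $T'(v_0)$ are never touched, hence stay canonical by the inductive hypothesis, and that after Step~$13$ the newly merged subtree $T'(v_0)$ is canonical with respect to $v_0$. The latter is obtained by composing the coordinate assignments declared in the individual steps: Step~$1$ shrinks each $T'(v_j)$; Step~$3$ stacks the internal vertices vertically above height $n$ using the shrunk heights $h_{sh}$, yielding the horizontal separation of Lemma~\ref{lemma:hor_separ}; Steps~$4$--$9$ fix the $x,y,z$-coordinates of each $v_j$ and restore every subtree to full canonical size and orientation so that $T(v_1)$ sits canonically with respect to $v_1$; and Steps~$10$--$13$ translate and rotate the whole configuration (splitting into the two cases of whether $T'(v_0)$ was turned in Step~$2$) so that $v_1$, $T(v_1)$, the remaining internal vertices, and $T'(v_0)$ all occupy their canonical positions with respect to $v_0$. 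Matching the resulting coordinates against the definition of $\mathcal{C}(T)$ then yields (I).

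The main obstacle is exactly this geometric bookkeeping, and I expect three consistency checks to carry the weight. First, the shrink of Step~$1$ and the $y$-stretch of Step~$8$ must be mutually inverse on each $T'(v_j)$, so that the subtree regains its full canonical size, using Lemma~\ref{lemma:shrinking} and Observation~\ref{obs:shrink_height}. Second, the recursive vertical placement of Step~$3$ via $h_{sh}(T'(v_{j-1}))$ together with the $z$-corrections of Steps~$6$ and~$12$ must reproduce exactly the canonical $z$-coordinates of the form $z_{i-1}+|T(v_{i-1})|-|T(v_i)|$ prescribed in the definition of the canonical drawing. Third, the sequence of rotations (Steps~$4$, $7$, $9$, $11$, $13$) must leave each subtree in the $XZ^+$ half-plane that the canonical position requires, which is why Steps~$11$--$13$ are split according to whether $T'(v_0)$ was rotated in Step~$2$. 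Once these checks are in place, both invariants follow and the induction closes.
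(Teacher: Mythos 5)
Your proposal is correct and follows essentially the same route as the paper: induction on $i$, with (II) handled by observing that only already-processed vertices (the processing vertices and, in Steps 2, 11, 13, the old $T'(v_0)$) ever move, and (I) handled by tracking the coordinate assignments through Steps 1--13, including the key observations that Steps 1 and 8 are mutually inverse and that the untouched lifted subtrees remain canonical by the induction hypothesis. The three consistency checks you single out are exactly the bookkeeping the paper's own proof performs.
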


\begin{proof}

\begin{itemize}
	\item[(I)] the drawing of $T'(v)$ in $\Gamma_t$ is the canonical drawing of $T'(v)$ with respect to $v$ for any $v \in V(T)$
	
	\underline{Base of the induction: $i = 1$:}
	
	Note that after performing the stretching step the whole $T$ lies on the $XY_0$ plane. Since none of the paths are lifted, condition (I) trivially holds. 
	
	\underline{The induction step:}	
	
	By induction hypothesis every $T'(v)$ for internal vertex $v$ lies in position needed before \lift{P_i} procedure. After Step 5 all internal vertices have the canonical $x, y$-coordinates with respect to vertex $v_1$ and after Step 6 --- the canonical $z$-coordinates with respect to $v_1$ also. Steps 7 and 8 guarantee that $T'(v_j), j = 1, \ldots, m$ are in the canonical position with respect to to their roots rotated to horizontal plane in positive $x$-direction. Steps 1 and 8 are mutually inverse planar morphs for every $T'(v_j), 1 \le j \le m$ which means that the canonical coordinates with respect to the roots will remain the same after Step 8 but in the horizontal direction. 
	
	Step 9 lifts subtrees of internal vertices into vertical canonical position and $T(v_1)$ is in the canonical position with respect to vertex $v_1$. Steps 10-13 move $T(v_1)$ along with $v_1$, so after \lift{P} procedure $T(v_1)$ is in the canonical position with respect to $v_1$.
	
	Steps 10, 11, 13 place $v_1$ in the canonical $xy$-position with respect to $v_0$ and Step 12 makes it $y$-canonical. So after Step 13 old $T'(v_0)$ along with new edge $(v_0, v_1)$ and subtree $T(v_1)$ is in the canonical position with respect to $v_0$. 
	
	As for the other vertices in the $XY_0$, their lifted subtrees had not moved during \lift{P} procedure and are in the canonical positions with respect to their roots by induction hypothesis.

	\item[(II)] vertices of paths $P_k, k > i$ are lying within the $XY_0$ plane
	
	\underline{Base of the induction: $i = 1$:}	
	
	The condition (II) holds since the entire tree $T$ lies in the $XY_0$ plane.	
	
	\underline{The induction step:}
	
	By induction hypothesis before \lift{P} procedure all vertices of non-processed paths lie in the $XY_0$ plane. Internal vertices of the path $P$ can not lie in the other non-processed paths than $P$ by definition of path decomposition. That means that after \lift{P} in which we move only processed paths, i.e. lifted subtrees, or internal vertices of the $P$, all vertices that lie on non-processed paths will still lie in the $XY_0$ plane.
\end{itemize}  \qed
\end{proof}

\begin{lemma}
All morphing steps in the algorithm are integer.
\end{lemma}

\begin{proof}
Let us prove this by induction on number of lifted paths.

The base case is trivial. After the stretching morph all coordinates of all vertices are integer because the constant of stretching is integer and in the given drawing $\Gamma$ all vertices had integer coordinates.

By induction hypothesis in the beginning of \lift{P} procedure all vertices lie on lattice points of the grid. During \lift{P} procedure non-processed  vertices and vertex $v_0$ do not change coordinates at all. Rotation and shrinking morphs move points with integer coordinates to points with integer coordinates. Turning of the subtrees in the horizontal planes in Step 7 is integer by definition (see~\cite{bblbfpr19}). In all other steps the lifted subtrees of internal vertices are moved along with their roots by the integer vector.

As all coordinates at the beginning of \lift{P} were integer and all vectors of movement of all vertices in every step were integer, after \lift{P} procedure all vertices still have integer coordinates. \qed
\end{proof}

\paragraph{\textbf{Complexity of the algorithm}}

Let us estimate the size of the required grid:

Our graph $T = (V, E)$, $|V| = n$ will have drawing $\Gamma = \Gamma_0$ that takes space $l(\Gamma) \times w(\Gamma) \times 1$. 

First step of the algorithm multiplies needed space by $\mathcal{S}_1 = 2 \cdot (rpw + d(\Gamma))$. In the beginning of our algorithm, lifted subtrees of all vertices lie in planes parallel to $XZ_0$ plane passing through the corresponding vertices. They take no more than $\mathcal{O}(rpw)$ space in $x, y$ direction and no more than $n-1$ in $z$ direction. During the execution of the algorithm (during Step $4$-Step $8$), we rotate lifted subtrees of the internal vertices of the path to lie in horizontal planes passing through the corresponding vertices, at this point the subtrees take no more than $n - 1$ space in $y$-direction (and same $\mathcal{O}(rpw)$ in $x$-direction). During \enquote{Pinwheel} rotation in Step $7$, those subtrees take no more than $n - 1$ space in $x$-direction (and same $\mathcal{O}(rpw)$ in $y$-direction).  As in $z$-direction every lifted subtree drawing takes no more than $n$ height, during \lift{} procedure, we may get vertices at height at most $n + n$.

So the space is:
 $$(x \times y \times z): \; \mathcal{O}((l(\Gamma) \cdot 2 \cdot (d(\Gamma) + rpw) + 2 \cdot n) \times (w(\Gamma) \cdot 2 \cdot (d(\Gamma) + rpw) + 2 \cdot n) \times n) = $$
 $$ \mathcal{O}(d^2(\Gamma) \times d^2(\Gamma) \times n)$$
 
Note that the estimation of $rpw(T) = \mathcal{O}(\log n)$ ~\cite{PoleDance} and it is asymptotically less than $d(\Gamma) \geq  \sqrt{n}$. Also note that $d^2(\Gamma) \geq n$. This implies that  we can accommodate horizontal  subtrees in the grid of aforementioned size. 
The lemmas proved in this section along with the space and time complexity bounds prove the following Theorem.
\begin{theorem} 
\label{thm:sec4}
For every two planar straight-line grid drawings $\Gamma, \Gamma'$ of tree $T$ with $n$ vertices there exists a crossing-free 3D-morph $\mathcal{M} = \langle \Gamma = \Gamma_0, \ldots, \Gamma_l = \Gamma' \rangle$ that takes $\mathcal{O}(k)$ steps where $k$ is number of paths in some path decomposition of tree $T$. In this morph, every intermediate drawing $\Gamma_i, 1 \le i \le l$ is a straight-line 3D grid drawing lying in a grid of size $\mathcal{O}(d^2 \times d^2 \times n)$, where $d$ is maximum of the diameters of the given drawings. 
\end{theorem}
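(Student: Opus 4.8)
The plan is to route both drawings through the canonical 3D drawing $\mathcal{C}(T)$, which depends only on the heavy-rooted-pathwidth decomposition of $T$ and not on the input, and hence can serve as a fixed intermediate hub. Concretely I would construct a crossing-free morph $\langle \Gamma, \ldots, \mathcal{C}(T)\rangle$, construct a crossing-free morph $\langle \mathcal{C}(T), \ldots, \Gamma'\rangle$, and concatenate them. The second half is obtained for free by symmetry: applying the lifting-paths algorithm to $\Gamma'$ produces a crossing-free sequence $\langle \Gamma', \ldots, \mathcal{C}(T)\rangle$, and reversing each linear step of that sequence yields a crossing-free morph $\langle \mathcal{C}(T), \ldots, \Gamma'\rangle$ of the same length (reversing a crossing-free linear morph is crossing-free, since the family of intermediate drawings is unchanged). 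Using the same path decomposition $\mathcal{P} = \{P_1, \ldots, P_k\}$ of $T$ for both halves, it suffices to analyse one direction.

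For the morph $\Gamma \to \mathcal{C}(T)$ I would begin with the preprocessing stretch (Step~0), crossing-free by Lemma~\ref{lemma:stretching_morph_is_crossing_free}, and then invoke \lift{P_i} for $i = 1, \ldots, k$ in the order induced by $\mathcal{P}$. Each \lift{P_i} is Steps~1--13, each of which is a constant number of morphing steps shown crossing-free by its corresponding per-step lemma; thus \lift{P_i} contributes $O(1)$ steps and the whole first morph uses $1 + O(k) = O(k)$ steps. Adding the reversed second morph keeps the total at $O(k)$. Global crossing-freeness follows because invariants~(I) and~(II) hold after every \lift{} call by the correctness lemma, and after \lift{P_k} the drawing is exactly $\mathcal{C}(T)$.

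It then remains to verify the two grid constraints. Integrality of every intermediate drawing is the integer-morph lemma: $\mathcal{S}_1 = 2(rpw + d(\Gamma))$ is an integer, all rotation, shrinking, and turning vectors are integral, and every other step translates subtrees by integral vectors. For the volume I would track the three axes separately. The preprocessing scales the footprint by $\mathcal{S}_1$; during the algorithm each lifted subtree occupies horizontal width at most $rpw$ (Lemma~\ref{lemma:rpw_dist}) and, since heights never exceed roughly $n$ by Lemma~\ref{lemma:n-height} (reaching at most $2n$ transiently inside a \lift{} call), vertical extent $O(n)$. This gives a grid of size $O\bigl((l(\Gamma)\mathcal{S}_1 + rpw) \times (w(\Gamma)\mathcal{S}_1 + rpw) \times n\bigr)$; substituting $rpw = O(\log n)$ and $l(\Gamma), w(\Gamma) \le d(\Gamma)$ with $d(\Gamma) = \Omega(\sqrt{n})$ collapses the first two factors to $O(d^2)$, and taking $d = \max\{d(\Gamma), d(\Gamma')\}$ to cover the reversed half yields $O(d^2 \times d^2 \times n)$.

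The main obstacle is not any single calculation but the \emph{dovetailing} of the per-step lemmas into one globally crossing-free morph: one must confirm that the geometric separation hypotheses those lemmas assume on entry — horizontal separation across the plane $z = n$, disjointness of the radius-$rpw$ projection disks, and the $XZ^{+}_{v_j}$ or $XZ^{-}_{v_j}$ placement of already-lifted subtrees — are exactly the consequences of invariants~(I) and~(II) established by the correctness lemma. Once this hand-off between steps is checked, the theorem is just the assembly of the lemmas proved in this section together with the step count and volume bound above.
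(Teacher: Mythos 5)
Your proposal is correct and follows essentially the same route as the paper: morph $\Gamma$ to the canonical drawing $\mathcal{C}(T)$ via the preprocessing stretch followed by $k$ invocations of \lift{}, each costing $O(1)$ crossing-free integer steps, and obtain the second half by playing the morph $\Gamma' \to \mathcal{C}(T)$ backwards; the volume bound is assembled from the same ingredients (the $\mathcal{S}_1$ scaling, the $rpw$ horizontal extent of lifted subtrees, and the $O(n)$ height). The paper itself presents the theorem as a direct consequence of the per-step lemmas, the invariants (I)--(II), the integrality lemma, and the grid-size computation, which is exactly the assembly you describe.
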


\begin{corollary}
\label{theorem:1alg}
For every two planar straight-line grid drawings $\Gamma, \Gamma'$ of tree $T$ with $n$ vertices there exists a crossing-free 3D-morph $\mathcal{M} = \langle \Gamma = \Gamma_0, \ldots, \Gamma_l = \Gamma' \rangle$ that takes $\mathcal{O}(n)$ steps and $\mathcal{O}(d^2 \times d^2 \times n)$ space to perform, where $d$ is maximum of the diameters of the given drawings. In this morph, every intermediate drawing $\Gamma_i, 1 \le i \le l$ is a straight-line 3D grid drawing. 
\end{corollary}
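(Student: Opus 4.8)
The plan is to derive the corollary directly from Theorem~\ref{thm:sec4}, whose conclusion already supplies a crossing-free 3D-morph between the two given drawings in $\mathcal{O}(k)$ steps on a grid of the stated dimensions, where $k$ is the number of paths in a chosen path decomposition of $T$. The only thing left to establish is that one can always pick a path decomposition with $k = \mathcal{O}(n)$, so that the step count collapses to $\mathcal{O}(n)$; the volume bound $\mathcal{O}(d^2 \times d^2 \times n)$ is then inherited verbatim.

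First I would bound the number of paths in an arbitrary path decomposition. By definition a path decomposition partitions $E(T)$ into edge-disjoint paths, each containing at least one edge. Since a tree on $n$ vertices has exactly $n-1$ edges, the number of paths satisfies $k \le n-1 = \mathcal{O}(n)$. Thus no special decomposition is needed: any path decomposition whatsoever — in the extreme, the one splitting $T$ into its $n-1$ individual edges — already gives $k = \mathcal{O}(n)$, so the step count $\mathcal{O}(k)$ of Theorem~\ref{thm:sec4} becomes $\mathcal{O}(n)$.

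Next I would recall how the morph between two arbitrary drawings is assembled, to confirm that concatenation does not hide an extra dependence on $n$. The algorithm of Section~\ref{Sec:4} morphs the source drawing $\Gamma$ to the canonical drawing $\mathcal{C}(T)$ in $\mathcal{O}(k)$ steps, and by the remark opening that section a morph $\mathcal{C}(T) \to \Gamma'$ is obtained by running the corresponding algorithm for $\Gamma'$ backwards, again in $\mathcal{O}(k)$ steps. Concatenating yields a crossing-free morph $\Gamma \to \mathcal{C}(T) \to \Gamma'$ in $\mathcal{O}(k) + \mathcal{O}(k) = \mathcal{O}(k) = \mathcal{O}(n)$ steps, and every intermediate drawing is a straight-line 3D grid drawing since each constituent morph has this property (all movement vectors are integral, by the lemma establishing that all morphing steps are integer).

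There is essentially no hard step here: the corollary is a specialization of Theorem~\ref{thm:sec4} obtained by substituting the universal bound $k \le n-1$. The only point warranting a sentence of care is to observe that the space estimate of Theorem~\ref{thm:sec4} depends on the input diameters and on $rpw(T) = \mathcal{O}(\log n)$ but \emph{not} on $k$; hence replacing $k$ by its worst-case value $n-1$ leaves the grid dimensions $\mathcal{O}(d^2 \times d^2 \times n)$ unchanged, and the claim follows.
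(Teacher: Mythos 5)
Your argument is correct and is essentially the paper's own proof, which simply observes that the number of paths in any path decomposition is $\mathcal{O}(n)$; you supply the one-line justification ($k \le n-1$ since each path uses at least one of the $n-1$ edges) and correctly note that the space bound of Theorem~\ref{thm:sec4} is independent of $k$. No gaps.
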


\begin{proof}
Bound $\mathcal{O}(n)$ to the number of paths in $\mathcal{P}$ is obvious.
\end{proof}

\section{Morphing through lifting edges}
\label{sec:second_alg}


In this section, we describe another algorithm that morphs a planar drawing $\Gamma$ of tree $T$ to the canonical drawing $\mathcal{C}(T)$ of $T$. 
This time one iteration of our algorithm lifts simultaneously a set of edges with at most one edge of each path of a selected path decomposition.
Let $\Gamma = \Gamma_0$ be a planar drawing of  $T$.

\noindent\textit{\textbf{Step 0: Preprocessing.}} This step $\langle \Gamma, \Gamma_1 \rangle$ is a stretching morph with $\mathcal{S}_1 = 2 \cdot rpw \cdot d(\Gamma) \cdot (4 \cdot d(\Gamma) + 1)$, stretching is a crossing-free morph.

\subsection{\liftk{\textit{edges}} procedure}
For edge $e$ of $T$, let \textit{starting vertex of the edge} $st(e)$ (respectively, \textit{ending vertex of the edge} $end(e)$) be the vertex of  $e$ with smallest (respectively, largest) depth. Let $\mathcal{K} = \{K_1, \ldots, K_m\}$ be the partition of edges of $T$ into disjoint sets such that $e \in K_i$ if and only if $ dpt(st(e)) = m - i$, where $m$ denotes the depth of $T$. 
We lift up sets $K_i$ from $\mathcal{K}$ from $i = 1$ to $i = m$ by running the following  procedure \liftk{K_i} (Steps 1-5). Let $\Gamma_t$ be the drawing of $T$ before lifting set $K_i$. \par

Let \textit{lifted subtree} $T'(v_j)$ be the portion of subtree $T(v_j)$ lifted by the execution of \liftk{K_j} where $j < i$. Conditions that we maintain throughout the algorithm are the following: 

\begin{itemize}
	\item[(I)] The drawing of $T'(v)$ in $\Gamma_t$ is the canonical drawing of $T'(v)$ with respect to $v$.
	\item[(II)] The vertices that lie on edges in a set that is not yet processed are lying in the $XY_0$ plane.
\end{itemize} 

Let the \textit{processing} vertices be the ending vertices $end(e)$ of every edge $e$ from $K_i$ along with the vertices of their subtrees $T(end(e))$.

\begin{figure}[!htp]

  \begin{center}  
  \includegraphics[scale = 0.5]{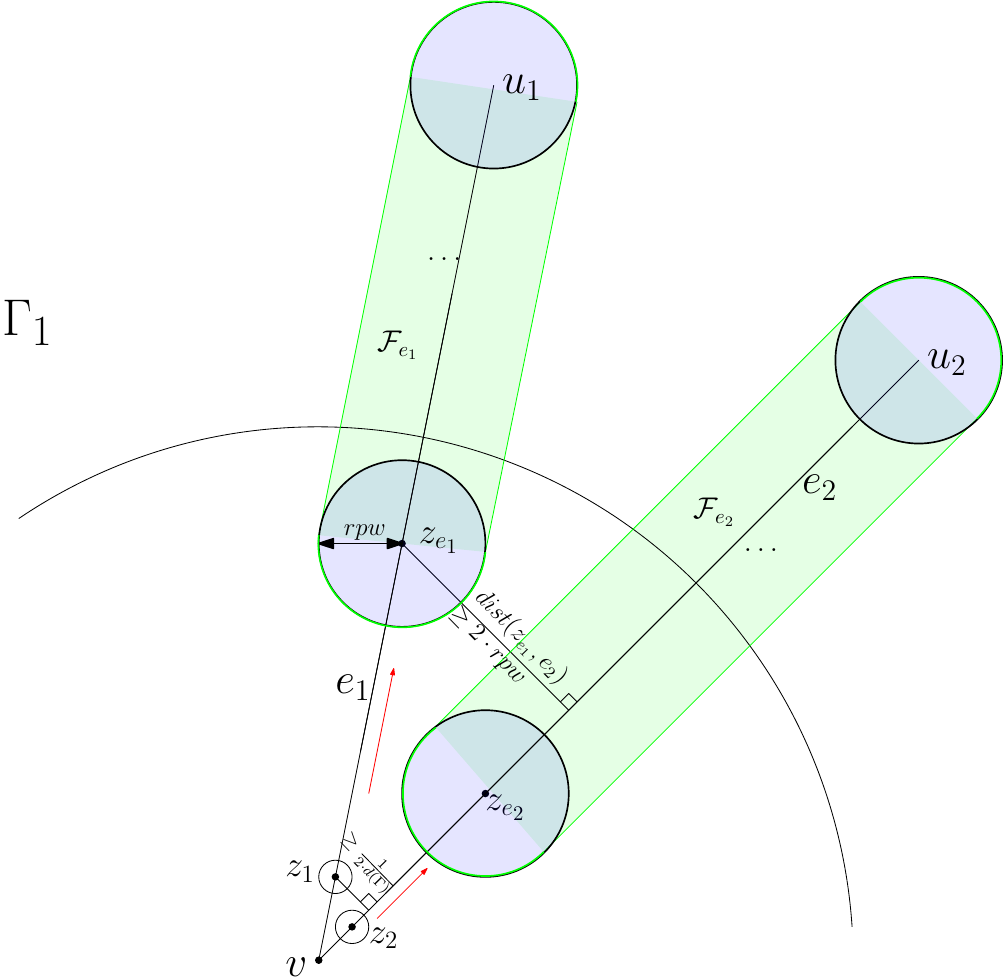}
  \end{center}
  
  \caption{Finding points $z_{e_1}, z_{e_2}$ for edges $e_1, e_2$ with common starting point $v$.}
  \label{fig:point_areas_in_circle}
  
\end{figure}

\begin{lemma} 
\label{lemma:points_in_circles}
For every edge $e = (v, u)$ with $st(e)=v$ in $\Gamma_1$ there is a lattice point $z_e \in e$ such that: 
\begin{enumerate}
	\item $\ball{\Gamma_1}{z_e} {rpw \cdot d(\Gamma)} \subset \ball{\Gamma_1}{v}{rpw \cdot d(\Gamma) \cdot (4 \cdot d(\Gamma) + 1)}$
	\item for distinct edges $e_1, e_2 \in K_i \forall i = 1, \ldots, m$ disks $\ball{\Gamma_1}{z_{e_1}}{rpw}$ and $\ball{\Gamma_1}{z_{e_2}}{rpw}$ are disjoint.
	\item for distinct edges $e_1, e_2 \in K_i \forall i = 1, \ldots, m$ regions $\mathcal{F}_{e_1}, \mathcal{F}_{e_2}$ are disjoint, where 
	$\mathcal{F}_{e} = \{x \in XY_0: dist_{\Gamma_1}(x, (z_{e}, u)) \le rpw\}$.
\end{enumerate}
\end{lemma}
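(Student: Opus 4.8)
The plan is to choose, on each edge $e=(v,u)$ with $v=st(e)$, the lattice point $z_e$ of $e$ nearest to $v$ whose distance from $\Gamma_1(v)$ strictly exceeds $2\,rpw\,d(\Gamma)^2$. First I would check this point exists and lies strictly between $v$ and $u$. By Step~0 every edge of $\Gamma_1$ has length at least $\mathcal{S}_1=2\,rpw\,d(\Gamma)(4d(\Gamma)+1)$, since the original edge joins distinct lattice points and so had length at least $1$; on the other hand the lattice points of $e$ are equally spaced by the primitive integer vector of $e$, whose length is at most $dist_{\Gamma_0}(v,u)\le d(\Gamma)$. Hence $z_e$ can be taken with $2\,rpw\,d(\Gamma)^2<dist_{\Gamma_1}(z_e,v)\le 2\,rpw\,d(\Gamma)^2+d(\Gamma)<\mathcal{S}_1$, so the segment $(z_e,u)$ appearing in $\mathcal{F}_e$ is non-degenerate. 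Property~1 is then immediate from the triangle inequality: as $dist_{\Gamma_1}(z_e,v)\le 2\,rpw\,d(\Gamma)^2+d(\Gamma)\le 4\,rpw\,d(\Gamma)^2$, every point within $rpw\,d(\Gamma)$ of $z_e$ lies within $4\,rpw\,d(\Gamma)^2+rpw\,d(\Gamma)=rpw\,d(\Gamma)(4d(\Gamma)+1)$ of $\Gamma_1(v)$, which is exactly the outer radius.

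Before the two remaining properties I would note a reduction: since $z_e$ is an endpoint of the segment $(z_e,u)$, the disk $\ball{\Gamma_1}{z_e}{rpw}$ is contained in $\mathcal{F}_e$, so once the capsules $\mathcal{F}_{e_1},\mathcal{F}_{e_2}$ are shown disjoint in Property~3 the disks of Property~2 are disjoint as well. Thus it suffices to prove Property~3, and I would split on whether $e_1,e_2\in K_i$ share their start vertex. If $st(e_1)\neq st(e_2)$, these are distinct vertices at the common depth $m-i$; their end vertices are distinct children at depth $m-i+1$, so $e_1,e_2$ are vertex-disjoint and hence (by planarity) non-crossing. The closest pair of points of two disjoint segments uses an endpoint of one of them, a vertex not incident to the other edge, so Lemma~\ref{lemma:pick}(1) gives distance at least $1/d(\Gamma)$ in $\Gamma_0$ and hence at least $\mathcal{S}_1/d(\Gamma)=2\,rpw(4d(\Gamma)+1)>2\,rpw$ in $\Gamma_1$; as $(z_{e_j},u_j)\subseteq e_j$, the two segments are more than $2\,rpw$ apart and the capsules are disjoint.

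The hard part is the case $st(e_1)=st(e_2)=v$, illustrated in Fig.~\ref{fig:point_areas_in_circle}, which I would handle through a lattice angle bound. Let $\theta$ be the angle at $v$ between the two edges and let $a,b$ be their integer vectors in $\Gamma_0$; then $\sin\theta=|a_1b_2-a_2b_1|/(|a|\,|b|)$, where the numerator is a positive integer — the two children are distinct, so by planarity the edges are not collinear — and $|a|,|b|\le d(\Gamma)$, giving $\sin\theta\ge 1/d(\Gamma)^2$. Stretching preserves directions, so the truncated segments lie on rays from $\Gamma_1(v)$ making angle $\theta$, with $z_{e_1},z_{e_2}$ at distances $\rho_1,\rho_2>2\,rpw\,d(\Gamma)^2$. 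If $\theta\le\pi/2$, any point of $(z_{e_2},u_2)$ at distance $t\ge\rho_2$ from $\Gamma_1(v)$ has distance $t\sin\theta\ge\rho_2\sin\theta>2\,rpw\,d(\Gamma)^2\cdot d(\Gamma)^{-2}=2\,rpw$ to the line through $\Gamma_1(v)$ and $z_{e_1}$, hence to the whole segment $(z_{e_1},u_1)$; if $\theta>\pi/2$ then $\cos\theta<0$ and for points at parameters $s\ge\rho_1$, $t\ge\rho_2$ one has $s^2+t^2-2st\cos\theta\ge\rho_1^2+\rho_2^2>2\bigl(2\,rpw\,d(\Gamma)^2\bigr)^2$, again exceeding $(2\,rpw)^2$. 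Either way the capsules are disjoint. The real obstacle is precisely this step: converting the combinatorial fact $\sin\theta\ge 1/d(\Gamma)^2$ into a quantitative separation of the radius-$rpw$ tubes while staying inside the containment budget of Property~1; the degenerate orientations $\theta\in\{0,\pi\}$ are absorbed by the split, since collinearity is excluded by planarity and $\theta=\pi$ falls under the obtuse sub-case.
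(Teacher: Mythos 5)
Your argument is correct, and while it follows the paper's overall strategy --- place $z_e$ at distance $\Theta(rpw\cdot d^2(\Gamma))$ from $v$ along $e$ so that lattice-geometry lower bounds inherited from $\Gamma_0$, once multiplied by the stretch factor, exceed $2\cdot rpw$ --- the key estimates are obtained by a different route. The paper fixes $z_e$ by scaling the nearby lattice point of Lemma~\ref{lemma:circle_around}(3) by the factor $4\cdot rpw\cdot d(\Gamma)$ about $v$, proves item~2 directly from $\dist{\Gamma_1}{z_1}{z_2}\ge 1$, and for item~3 applies the Pick-type bound of Lemma~\ref{lemma:pick} to the auxiliary configuration $\{z_1,(v,z_2)\}$ to get $dist(z_1,e_2)\ge \frac{1}{2 d(\Gamma)}$, finishing with a separate endpoint analysis at $u_j$ via Lemma~\ref{lemma:circle_around}(2). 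You instead reduce item~2 to item~3 (a genuine simplification, since $z_e$ is an endpoint of $(z_e,u)$ and hence \ball{\Gamma_1}{z_e}{rpw}$\subseteq\mathcal{F}_e$), treat distinct start vertices by vertex-disjointness of $e_1,e_2$ plus Lemma~\ref{lemma:pick}(1) applied to the full edges (more explicit than the paper's ``for the same reasons as above''), and for a common start vertex replace the Pick argument by the angular form of the same lattice fact, $\sin\theta\ge 1/d^2(\Gamma)$ from the integer cross product, combined with the observation that every point of the truncated segment $(z_{e_2},u_2)$ lies at distance greater than $2\,rpw\,d^2(\Gamma)$ from $v$ and hence at distance greater than $2\,rpw$ from the supporting line of $e_1$; this uniformly covers both endpoints and avoids the separate treatment of $u_j$. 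The one fragile spot --- the anti-parallel configuration $\theta=\pi$, where the cross product vanishes and your $\sin\theta$ bound is vacuous --- is correctly absorbed into your obtuse sub-case, which does not use $\sin\theta$ at all, and $\theta=0$ is indeed excluded by crossing-freeness; so no gap remains.
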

\begin{proof}
 Let us fix an edge $e = (v, u)$. By Lemma~\ref{lemma:circle_around}, there is a lattice point $z$ lying on $e$ in $\ball{\Gamma_1}{v}{d(\Gamma)}$. Then let point $z_e$ be $(v_x + (z_x - v_x) \cdot 4 \cdot rpw \cdot d(\Gamma), v_y + (z_y - v_y) \cdot 4 \cdot rpw \cdot d(\Gamma), 0)$. It satisfies the conditions of the lemma:

\begin{enumerate}
	\item $\ball{\Gamma_1}{z_e}{rpw \cdot d(\Gamma)} \subset \ball{\Gamma_1}{v}{rpw \cdot d(\Gamma) \cdot (4 \cdot d(\Gamma) + 1)}$ holds because $\dist{\Gamma_1}{z_e}{v} \leq d(\Gamma) \cdot (4 \cdot rpw \cdot d(\Gamma))$.

	\item For edges $e_1, e_2 \in K_i$ with different start vertices $v_1, v_2$ we get disjointedness of the corresponding disks from the fact that $\ball{\Gamma_1}{z_{e_j}}{rpw} \subset \ball{\Gamma_1}{v_j}{rpw \cdot d(\Gamma) \cdot (4 \cdot d(\Gamma) + 1)}$ and disks $\ball{\Gamma_1}{v}{rpw \cdot d(\Gamma) \cdot (4 \cdot d(\Gamma) + 1)}$ do not cross for different $v$. 
	
	For edges $e_1, e_2 \in K_i$ with a common start vertex $v$ we get points $z_1, z_2$ from Lemma~\ref{lemma:circle_around} in $\Gamma_1$. We have $dist(z_1, z_2) \geq 1$ because $z_1, z_2$ are lattice points of the grid. Then by definition of $z_{e_1}, z_{e_2}$ we have $dist_{\Gamma_1}(z_{e_1}, z_{e_2}) \geq 2 \cdot rpw$ and for different edges $e$ in $K_i$ disks $\ball{\Gamma_1}{z_e}{rpw}$ do not intersect.

	\item For $e_1 = (v, u_1), e_2 = (v, u_2)$ by Lemma~\ref{lemma:pick} $dist(z_1, e_2) \geq \frac{1}{2 \cdot d(\Gamma)}$, because $z_1, (v, z_2)$ can be interpreted as drawing within \ball{\Gamma_1}{v}{d(\Gamma)} with the diameter at most $2 \cdot d(\Gamma)$. Then $dist(z_{e_1}, e_2) \geq (4 \cdot rpw \cdot d(\Gamma)) \cdot \frac{1}{2 \cdot d(\Gamma)} = 2 \cdot rpw$. 
	
	Minimum distance between $(z_{e_1}, u_1), (z_{e_2}, u_2)$ is realized at one of the endpoints of the segments, for $z_{e_j}, j = 1, 2$ it is at least $2 \cdot rpw$ as we know from above. For vertices $u_j, j = 1, 2$ by Lemma~\ref{lemma:circle_around} disk \ball{\Gamma_t}{u_j}{2 \cdot rpw} does not intersect with any edges non-incident to $u_j$ or contain any other vertices than $u_j$. That means that all segment $(z_{e_1}, u_1)$ is at distance at least $2 \cdot rpw$ from segment $(z_{e_2}, u_2)$ and $\mathcal{F}_{e_1}, \mathcal{F}_{e_2}$ do not intersect.
	 
	 For edges $e_1, e_2$ with different start vertices we have that $e_1, e_2$ do not have common endpoints by the definition of $K_i$. Then $e_1 = (v_1, v_2), e_2 = (v_3, v_4)$ and for the same reasons as above the corresponding regions are disjoint. \qed
\end{enumerate}

\end{proof}

\begin{figure}[!htp]

  \begin{center}  
  \includegraphics[scale = 0.6]{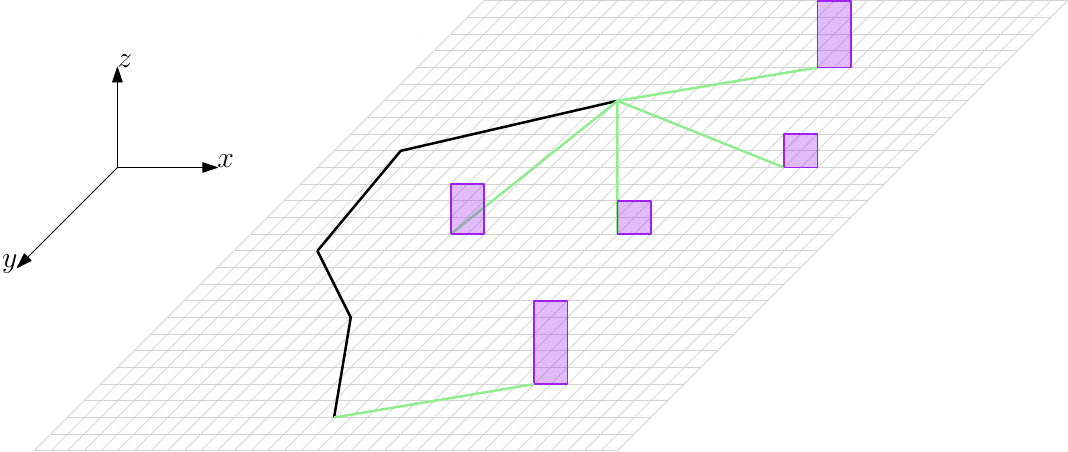}
  \end{center}

  \caption{\textbf{Drawing $\Gamma_t$} The drawing $\Gamma_t$ in the beginning of the procedure \liftk{K_i}, lifted subtrees are violet. $K_i$ consists of green edges.}
\end{figure}

\paragraph{\textbf{Overview}}

The procedure \liftk{K_i} consists of 5 steps and results in moving the end vertices of the edges of the set $K_i$ along with their subtrees rooted at them to their canonical positions with respect to the corresponding starting vertices of the edges of $K_i$. Note that by preprocessing Step 0, Lemma~\ref{lemma:circle_around} and by the similar arguments as in Lemmas~\ref{lemma:n-height} and~\ref{lemma:rpw_dist}, we ensure that  already lifted subtrees lie in disjoint cylinders of radius $rpw$ and height $n$.

During Step 1 and 2 vertices of each processing subtree $T(end(e))), e \in K_i$, move along the same vector as the root $end(e)$. In Step 1 we move simultaneously ending vertices of the edges horizontally towards the starting vertices of the corresponding edges in $K_i$ until they reach certain points specified in Lemma~\ref{lemma:points_in_circles}. Firstly, this step ensures that all later steps are performed in different disjoint cylinders for different starting vertices of the edges in $K_i$. Secondly, it also ensures that for the ending vertices with a common starting vertex, Step 3 is a crossing free morph.

Step 2 moves all ending vertices to their canonical height with respect to the corresponding starting vertices in $K_i$. This step corrects $z$-coordinates of all the processing vertices with respect to the corresponding  starting vertices. 

The following Steps 3-5 move the processing vertices only horizontally and correct their $x, y$-coordinates. Step 3 simultaneously maps every lifted subtree $T(end(e))$ to the vertical plane containing the corresponding edge $e \in K_i$. Idea of this step is to put different subtrees $T(end(e))$ with a common vertex $st(e)$ to separate vertical half-planes around the pole through $st(e)$ and to avoid intersections during Step 5.

Step 4 moves the ending vertices of edges in $K_i$ towards the pole trough the corresponding starting vertices until the last integer point. After this step the drawing in every half-plane that contains edge $e$ and subtree $T(end(e))$ differs from the canonical drawing of this part of the tree with respect to the $st(e)$ by the rotation and by some stretching factor.

Step 5 collapses all planes around each of the starting vertices in $K_i$ in one. In every iteration it divides neighbouring planes in pairs and maps one to another. Step 5 may take up to $\log(\Delta(T))$ morphing steps, where $\Delta(T)$ is the maximum degree of the vertices in $T$.  

Step 5 concludes the procedure as after the collapse of all half-planes around each vertex $st(e)$ and mapping it to the $x+$ direction from $st(e)$ we get the canonical drawing of $T(st(e))$ with respect to $st(e)$.

\noindent\textit{\textbf{Step 1: Shrink.}} 

\begin{figure}[!htp]

  \begin{center}  
  \includegraphics[scale = 0.6]{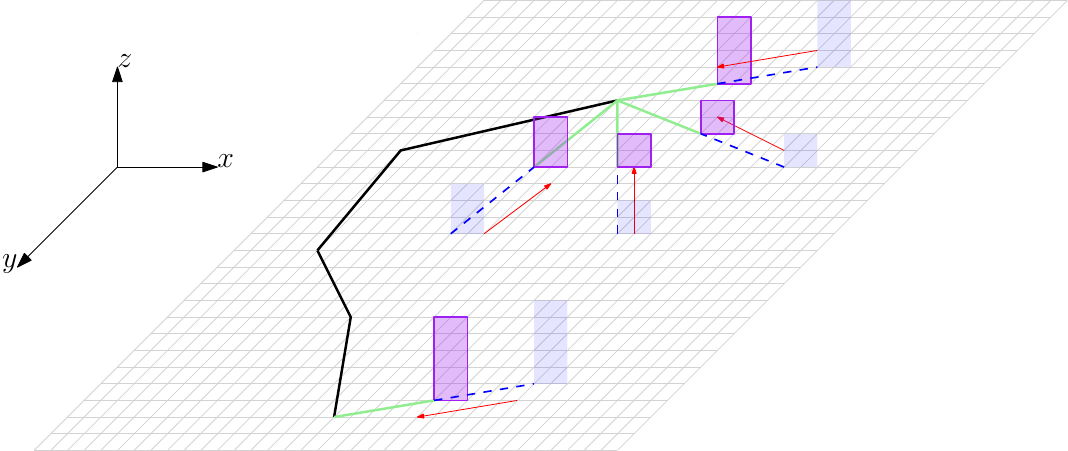}
  \end{center}

  \caption{\textbf{Step 1.}}
\end{figure}

In the  step $\langle \Gamma_{t}, \Gamma_{t + 1}\rangle$, for every edge $e \in K_i$ we move vertex $end(e)$ along with its lifted subtree towards $st(e)$ until $end(e)$  reaches point $z_e$. 

\begin{lemma}\label{lem:sec5step1}
Step 1 is a crossing-free morph.
\end{lemma}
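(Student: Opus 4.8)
The plan is to analyze the step through its vertical projection onto the $X0Y$ plane, treating separately the objects that remain in the plane and those that rise strictly above it, and feeding in the disjointness guarantees of Lemma~\ref{lemma:points_in_circles}.

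First I would pin down the geometric picture at the start of the step. Write $v=st(e)$ and $u=end(e)$ for an edge $e=(v,u)\in K_i$. Since every edge of $T(u)$ has its start vertex at depth at least $dpt(u)=m-i+1$, all of them lie in sets $K_j$ with $j<i$ and were already processed; hence $T'(u)=T(u)$ is fully lifted and, by condition~(I), occupies the canonical position with respect to $u$: it lies in the half-plane $XZ^+_u$, touches $X0Y$ only at $u$, and has horizontal extent at most $rpw$ (Lemma~\ref{lemma:rpw_dist}). By condition~(II) the vertex $v$, lying on the not-yet-processed edge $e$, stays in $X0Y$ and is motionless. During the step $u$ slides along the subsegment $[z_e,u]\subseteq e$ toward $v$ while $T(u)$ is translated rigidly with it; therefore, at every instant, the moving edge $e(t)$ is a subsegment of the original edge $e$, the moving foot $u(t)$ stays on $[z_e,u]\subseteq e$, and the projection of $T(u)$ is contained in $\mathcal F_e$.

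Next I would enumerate the possible crossings and eliminate them by height. \textbf{(a)} Two lifted subtrees $T(u_1),T(u_2)$, $e_1,e_2\in K_i$: both live above $X0Y$, but their projections are confined to $\mathcal F_{e_1}$ and $\mathcal F_{e_2}$, which are disjoint by Lemma~\ref{lemma:points_in_circles}(3), so they cannot meet in space. \textbf{(b)} A subtree $T(u_2)$ against an edge $e_1(t)$: since $T(u_2)\setminus\{u_2\}$ is strictly above the plane while $e_1(t)$ lies in it, a crossing could only occur at the in-plane foot $u_2(t)$; but $u_2(t)\in[z_{e_2},u_2]\subseteq e_2$ and $e_1(t)\subseteq e_1$, and distinct edges of $K_i$ are either vertex-disjoint or share only their common start vertex, from which $u_2(t)$ differs, so $u_2(t)\notin e_1(t)$. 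The same observation rules out a subtree foot landing on a static unprocessed edge of $K_{>i}$ or on a static start vertex. \textbf{(c)} Two edges $e_1(t),e_2(t)$, or a moving edge against a static unprocessed edge: all lie in $X0Y$, each moving edge is a subset of its original edge, and the original edges form a crossing-free planar sub-drawing of $\Gamma_t$ (their endpoints are in $X0Y$ by~(II)), so subsegments cannot cross except possibly at a shared start vertex, which is permitted. \textbf{(d)} Within a single moving unit, $e(t)$ shrinks along a fixed line and $T(u)$ is a rigid copy of a crossing-free canonical drawing meeting $e(t)$ only at the shared vertex $u(t)$, so no self-crossing appears.

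The crux I expect is case~\textbf{(b)}, precisely because $\mathcal F_e$ covers only the shrunken portion $[z_e,u]$ and not the remaining stretch $[v,z_e]$ of the edge, so region-disjointness alone does not separate a subtree's foot from a neighbouring edge. I would resolve this not through the thick regions but by exploiting that $e_1(t)\subseteq e_1$ is always a subset of the original edge, together with the fact that distinct edges of $K_i$ are vertex-disjoint apart from a possible common start; hence the foot $u_2(t)$, which never leaves $e_2$, can never land on $e_1$. Combining \textbf{(a)}--\textbf{(d)} yields that Step~1 is crossing-free.
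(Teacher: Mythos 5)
Your proof is correct and follows essentially the same route as the paper's: project onto $X0Y$, use the disjointness of the regions $\mathcal{F}_e$ from Lemma~\ref{lemma:points_in_circles} to separate distinct lifted subtrees, and use conditions (I) and (II) to handle everything remaining in the plane. You simply carry out in full the case analysis (notably the subtree-foot-versus-edge case) that the paper's much terser proof leaves implicit.
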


\begin{proof}
Vertices that lie in the $XY_0$ plane do not move or move along their incident edges, so no intersections can happen in the $XY_0$.

For the moving subtrees we know from Lemma~\ref{lemma:points_in_circles} that they do not intersect because each lifted subtree in projection lies in the corresponding region $\mathcal{F}_e$ throughout the whole morph and these regions do not intersect for different $e \in K_i$.  

From Conditions (I) and (II) it follows that no intersections happen during this morphing step. \qed
\end{proof}

\noindent\textit{\textbf{Step 2: Go up.}} 

\begin{figure}[!htp]

  \begin{center}  
  \includegraphics[scale = 0.6]{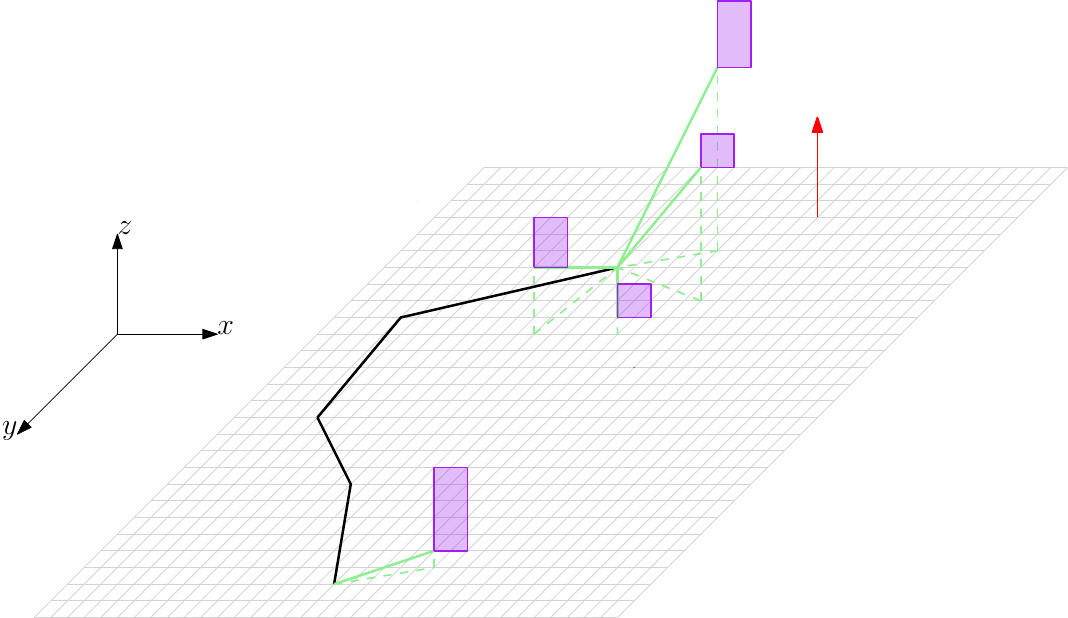}
  \end{center}

  \caption{\textbf{Step 2.}}
\end{figure}

In morphing step $\langle \Gamma_{t + 1}, \Gamma_{t + 2} \rangle$, we move  $end(e)$ with $T'(end(e))$ along the vector $(0, 0, \mathcal{C}(end(e))_z - \mathcal{C}(st(e))_z)$ for all $e \in K_i$. 

\begin{lemma}
Step 2 is a crossing-free morph.
\end{lemma}

\begin{proof}
For every vertex $v$ let $End(v)$ be a set of edges, for which $v$ is a start vertex. All edges of $End(v)$ are contained in one set $K_i$ because their depth is defined by $dpt(v)$. That means that if we are lifting some edges with start vertex $v$, then $v$ does not have any lifted subtree in the beginning of \liftk{K_i} procedure. In projection to $XY_0$ plane in $\Gamma_{t + 1}$ lifted subtrees of the same or different vertices do not intersect.

During this morphing step we do not change projection and move every lifted subtree by the same vertical vector, so no intersections can happen between the subtrees and between the edges of the same subtree. Vertices of unprocessed sets are lying still in the $XY_0$ plane and can not make any intersections too. \qed
\end{proof}

\noindent\textit{\textbf{Step 3: Mapping.}} 

\begin{figure}[!htp]

  \begin{center}  
  \includegraphics[scale = 0.6]{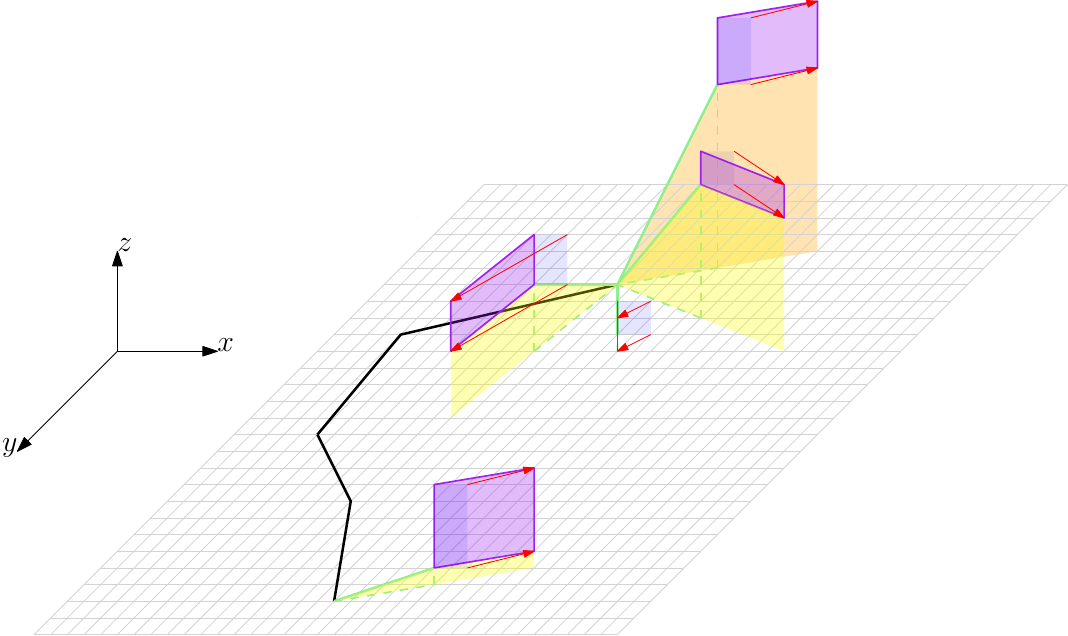}
  \end{center}

  \caption{\textbf{Step 3.}}
\end{figure}

Morphing step $\langle \Gamma_{t + 2}, \Gamma_{t + 3} \rangle$ is a mapping morph, see Section~\ref{sec:mapping_pole}. For every lifted subtree $T'(v_j)$, where $v_j = end(e), e \in K_i$, we define the half-planes of the mapping morph as follows: half-plane $\alpha$ is $XZ^+_{v_j}$, half-plane $\beta$ is part of the vertical plane containing the edge $e$ in such direction that $e \notin \beta$, the common vertical pole of $\alpha$ and $\beta$ is a pole through $v_j$. All mapping steps are done simultaneously for all subtrees of end vertices of the edges of $K_i$. 

\begin{lemma}
Step 3 is a crossing-free morph.
\end{lemma}

\begin{proof}
By Lemma.~\ref{lemma:mapping_crossing-free} mapping is a crossing-free morph and no intersections happen in every $T'(v_j)$. 

Movement of every $T'(v_j)$ for $v_j = end(e)$ happens in projection to $XY_0$ plane in the region $\mathcal{F}_e$ defined for $e$ and $st(e)$: distance from vertices of $T'(v_j)$ to $e$ decreases and $e$ has at least $rpw + 1$ integer points on it so in $\Gamma_{t + 1}$ and  $\Gamma_{t + 2}$ projections of vertices of $T'(v_j)$ lie in $\mathcal{F}_e$.

Different lifted subtrees do not intersect as they do not intersect in projection to $XY_0$ during this step. Other vertices do not move and also make no intersections. \qed
\end{proof}

\noindent\textit{\textbf{Step 4: Shrink more.}}

\begin{figure}[!htp]

  \begin{center}  
  \includegraphics[scale = 0.6]{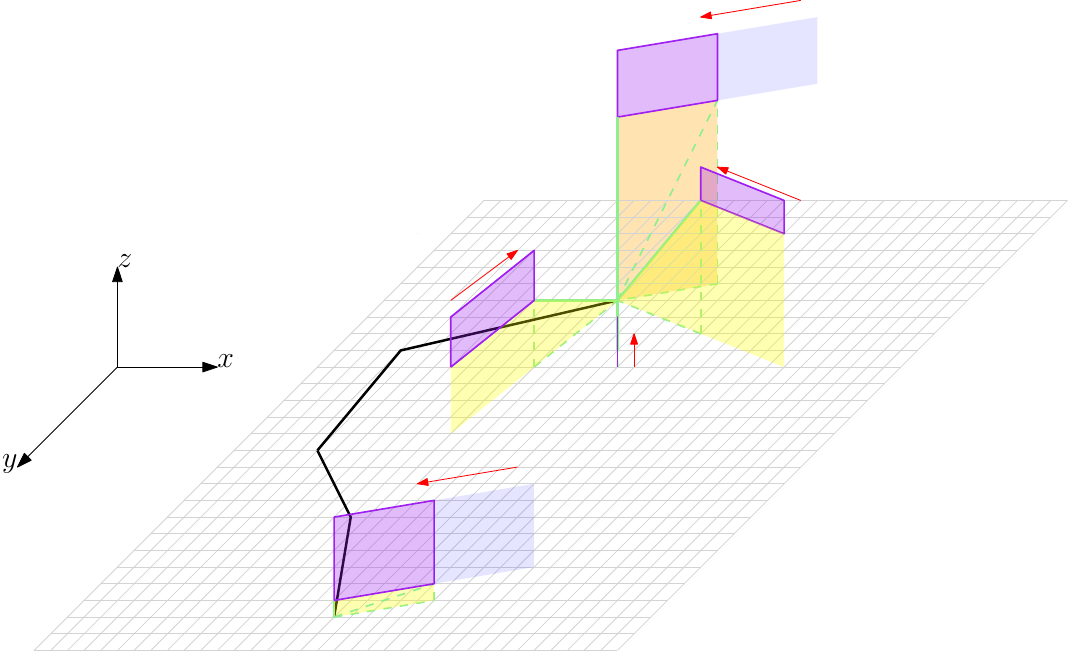}
  \end{center}

  \caption{\textbf{Step 4.}}
\end{figure}

The morphing step $\langle \Gamma_{t + 3}, \Gamma_{t + 4} \rangle$ is a horizontal morph. 
\begin{sloppypar}
For each $v_j = end(e), e \in K_i$ we define a horizontal vector of movement as following. If $e$ is a vertical edge in the canonical drawing then this vector is $(\Gamma_{t + 3}(st(e))_x - \Gamma_{t + 3}(end(e))_x, \Gamma_{t + 3}(st(e))_y - \Gamma_{t + 3}(end(e))_y, 0)$, in this case subtree $T'(end(e))$ is moving towards vertical pole through $st(e)$ until the image of the edge $e$ becomes vertical.

If $e$ is not a vertical edge in the canonical drawing, then $\mathcal{C}(end(e))_x - \mathcal{C}(st(e))_x = 1$ and we move the whole subtree $T'(end(e))$ in the same direction towards the pole through $\Gamma_{t + 3}(st(e))$ until $end(e)$ reaches the last point with integer coordinates before $(\Gamma_{t + 3}(st(e))_x, \Gamma_{t + 3}(st(e))_y, \Gamma_{t + 3}(end(e))_z)$. 
Note that $end(e)$ maps to the same point if we map the canonical drawing of $end(e)$ with respect to the $st(e)$ from $XZ^+_{st(e)}$ to the vertical plane containing $end(e)$.
\end{sloppypar}

\begin{lemma}
Step 4 is a crossing-free morph.
\end{lemma}

\begin{proof}
Every lifted subtree moves inside its half-plane. That means that different subtrees $T'(end(e_j))$ with the same $st(e_j)$ do not intersect in projection to $XY_0$ plane and though do not intersect with each other.

Subtrees with different $st(e_j)$ in projection lie in non-crossing disks $\ball{\Gamma_1}{v}{rpw \cdot d(\Gamma) \cdot (4 \cdot d(\Gamma) + 1)}$ by Lemma~\ref{lemma:points_in_circles} and also can not intersect during this morph. \qed
\end{proof}

\sloppypar{
\noindent\textit{\textbf{Step 5: Collapse planes.}} 

\begin{figure}[!htp]
\centering
\subfloat[]{\includegraphics[scale = 0.35]{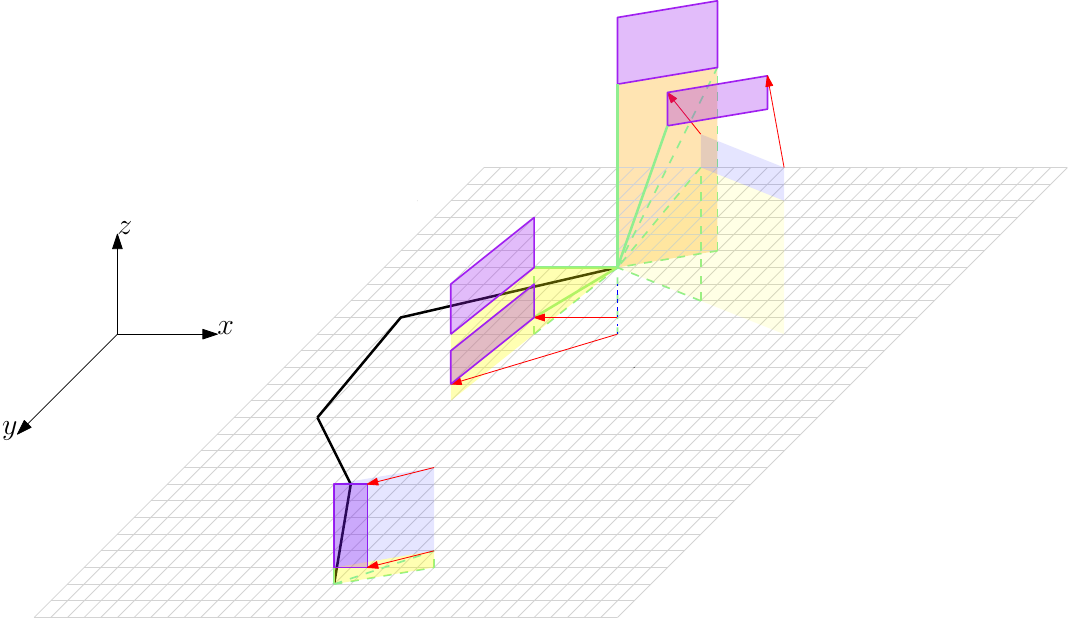}}
\centering
\subfloat[]{\includegraphics[scale = 0.35]{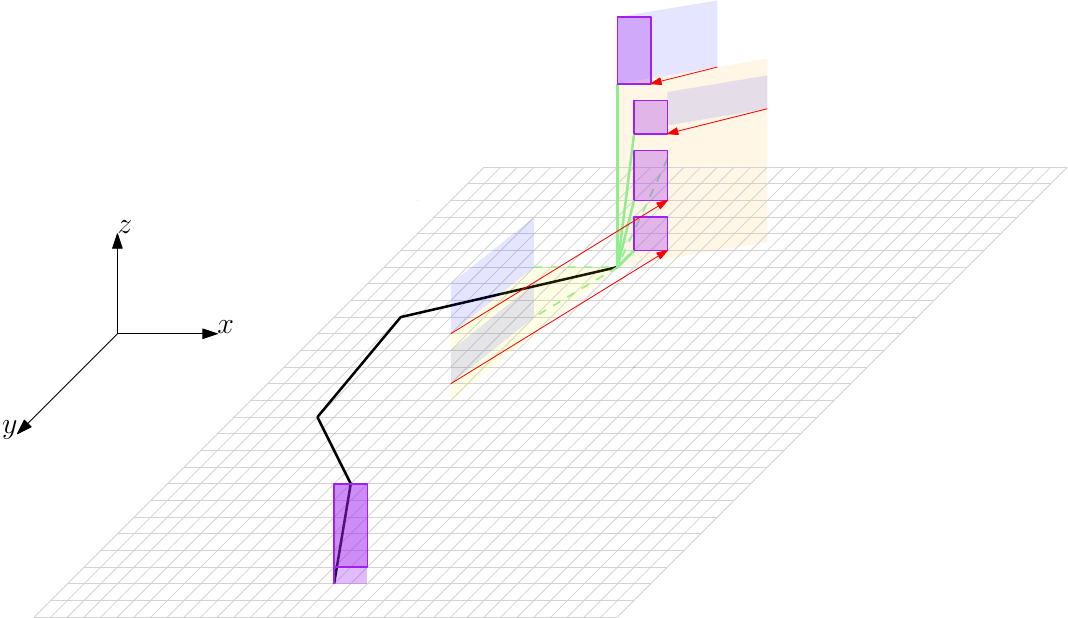}}

  \caption{\textbf{Step 5.} Step 5 in this example consists of two morphing steps.}
\end{figure}

During the following steps $\langle \Gamma_{t + 4}, \Gamma_{t + 5} \rangle, \ldots,\langle \Gamma_{t + 5 + \log k},\Gamma_{t + 5 + \log k + 1} \rangle$ we iteratively divide half-planes that contain $T'(end(e)), e \in K_i$ around each vertex $st(e), e \in K_i$ in pairs, pairs are formed of neighboring half-planes in clockwise order around the pole through  $st(e)$. If in some iteration there are an odd number of planes around some pole, the plane without pair does not move in this iteration. In every iteration we map the drawing of one plane in the pair to another simultaneously in all pairs. As around each vertex we can have at most $k = \Delta(T)$ number of half-planes, we need at most $\mathcal{O}(\log k)$ number of mapping steps to collapse all planes in one and to rotate the resulting image to $XZ^+_{st(e)}$}

\begin{lemma}
Step 5 is a crossing-free morph.
\end{lemma}

\begin{proof}
Since mapping is a crossing-free morph, no intersection can happen in a particular subtree. We now argue that no intersection can happen between two different subtrees.

After step 4, each $T'(end(e)), e \in K_i$ is in the canonical position with respect to $st(e)$ but mapped from $XZ^+_{st(e)}$ to  the vertical half-plane containing $e$. 
During the mapping morph,  we keep the invariant that the closest integral point to the pole in one plane  maps to the closest integral point  in the  other plane.  Then, other integer points are mapped proportionally. 

This implies that  after mapping the plane $\alpha$ containing one subtree to the plane $\beta$ containing another subtree,  in the plane $\beta$, we have two subtrees drawn in such a way that as if  it is  obtained by mapping the canonical drawings of the both the subtrees with respect to $st(e)$ from the $XZ^+_{st(e)}$ to $\beta$. So during each morphing step and at the end of every morphing step no intersections can happen.

Every mapping step for every $st(e)$ is happening in projection to $XY_0$ in its disk \ball{\Gamma_1}{v}{rpw \cdot d(\Gamma) \cdot (4 \cdot d(\Gamma) + 1)} so mappings for different $st(e)$ can not intersect too. 

All other non-processed vertices and $st(e)$ do not change their positions during the morph, hence can not make any intersections. \qed
\end{proof}

We perform \liftk{} for each $K_i \in \mathcal{K}$ till we obtain the canonical drawing of $T$.

\begin{lemma} Conditions (I) and (II) hold after performing \liftk{K_i} for each $1 \le i \le m$.
\end{lemma}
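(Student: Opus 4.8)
The plan is to prove the statement by induction on $i$, in the same spirit as the correctness argument for the first algorithm. The base of the induction is the drawing produced by the preprocessing step (Step~0): after the stretching morph the entire tree $T$ lies in the $X0Y$ plane and no subtree has yet been lifted, so condition~(I) holds vacuously (each $T'(v)$ is the single vertex $v$) and condition~(II) holds because every vertex has zero $z$-coordinate. For the inductive step I assume that (I) and (II) hold in $\Gamma_t$, the drawing immediately before \liftk{K_i}, and show that they persist in the drawing obtained after Steps~1--5.

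I would dispatch condition~(II) first, by a depth argument. Every edge of $K_i$ has $dpt(st(e)) = m-i$ and $dpt(end(e)) = m-i+1$, whereas an unprocessed edge belongs to some $K_{i'}$ with $i' > i$ and hence has $dpt(st) = m-i' < m-i$ and $dpt(end) = m-i'+1 \le m-i$; thus both endpoints of any unprocessed edge sit at depth at most $m-i$. During \liftk{K_i} the poles $st(e)$ never move, and the only displaced vertices are the vertices $end(e)$ (depth $m-i+1$) together with the vertices of their lifted subtrees $T'(end(e))$ (depth $\ge m-i+1$). Since every displaced vertex has depth strictly greater than $m-i$, none of them lies on an unprocessed edge, so each vertex incident to an unprocessed edge keeps its place in $X0Y$ and (II) is preserved.

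For condition~(I) I would track each lifted subtree through the five steps. By the induction hypothesis each $T'(end(e))$ is, in $\Gamma_t$, the canonical drawing of $T'(end(e))$ with respect to $end(e)$, lying in $XZ^+_{end(e)}$ with $end(e)$ in $X0Y$. Steps~1 and~4 translate the whole subtree rigidly inside a vertical half-plane (a horizontal shrink toward $st(e)$), Step~2 translates it rigidly in the vertical direction by $\mathcal{C}(end(e))_z - \mathcal{C}(st(e))_z$, and Steps~3 and~5 apply mappings around vertical poles. By Lemmas~\ref{lemma:shrinking} and~\ref{lemma:mapping_crossing-free}, together with the defining property of a mapping (the integer point nearest the pole is sent to the integer point nearest the pole, the remaining integer points proportionally), each of these operations preserves the internal canonical structure of the subtree; in particular a canonical drawing living on the integer points $0,1,2,\dots$ of a half-plane round-trips back to itself whenever it is returned to an axis-parallel half-plane such as $XZ^+$. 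Hence throughout the procedure every $T'(end(e))$ remains canonical with respect to $end(e)$, only carried to a different half-plane and rescaled horizontally. It then remains to match the final placement to the canonical coordinate formulas relative to the common root $v = st(e)$: Step~2 installs the correct $z$-offset, Step~4 installs the correct horizontal offset of $e$ (a vertical edge, i.e.\ $x$-offset $0$, for the heavy child, and $x$-offset $1$ for a light child), and Step~5 collides all half-planes incident to $v$ into $XZ^+_v$. Combined with the induction hypothesis for every subtree rooted at a vertex of depth $\ne m-i$ (which is untouched and hence stays canonical), this shows each $T'(v)$ is canonical with respect to $v$, establishing~(I).

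The main obstacle is this last part of the condition-(I) argument: verifying that the collision in Step~5 reassembles the subtrees sharing a common root $v=st(e)$ into exactly the canonical layout, namely the heavy child directly above $v$ at $x$-offset $0$, the light children at $x$-offset $1$ stacked in the prescribed $z$-order, and all coordinates integral. This relies on the index-preserving property of the mappings in Steps~3 and~5 to recover exact integer canonical coordinates upon return to $XZ^+_v$, and on the choices of the stretch constant $\mathcal{S}_1$, the points $z_e$ from Lemma~\ref{lemma:points_in_circles}, and the $rpw$ bound on the horizontal extent of each lifted subtree to guarantee that the intermediate planes carry enough room; the accompanying bookkeeping of the $z$-coordinates and of the light-child ordering is then routine.
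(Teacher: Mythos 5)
Your proposal is correct and follows essentially the same route as the paper's own proof: an induction on $i$ in which condition (II) is handled by observing that only the end-vertices of edges in $K_i$ and their lifted subtrees move (your explicit depth computation just spells out the paper's appeal to the definition of $\mathcal{K}$), and condition (I) is handled by tracking each $T'(end(e))$ through Steps 1--5, crediting Step 2 with the canonical $z$-offset, Step 4 with the horizontal offset, and Step 5 with collecting everything into $XZ^+_{st(e)}$ via index-preserving mappings. The final bookkeeping you flag as the ``main obstacle'' is treated at the same level of brevity in the paper itself, so nothing essential is missing.
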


\begin{proof}

\begin{itemize}
	\item[(I)] the drawing of $T'(v)$ in $\Gamma_t$ is the canonical drawing of $T'(v)$ with respect to $v$ for any $v \in V(T)$
	
	\underline{Base of the induction: $i = 1$:}
	
	Note that after performing the stretching step the whole $T$ lies on the $XY_0$ plane. Since none of the paths are lifted, condition (I) trivially holds. 
	
	\underline{The induction step:}	
	
	By induction hypothesis every $T'(v)$ for end vertex of any edge in $K_i$  lies in position needed before \liftk{K_i} procedure. After Step 2 all end vertices of the edges of $K_i$ have the canonical $z$-coordinates with respect to the start vertices of their edges. Then Step 4 for any edge $e$ in $K_i$ makes distances from the vertex $end(e)$ and vertices of its subtree to the pole through $st(e)$ proportional to their canonical $y$-coordinate. After Step 5 all $end(e), e \in K_i$ vertices and their lifted subtrees lie in the corresponding $XZ^+_{st(e)}$ and have the canonical $x, y, z$-coordinates with respect to the start vertices of their edges.
	
	Note that start vertices of edges in $K_i$ did not have lifted subtree in the beginning of the procedure \liftk{K_i} and after this procedure their lifted subtree consists of the end vertices of the edges in $K_i$ and their lifted subtrees.
	
	As for the other vertices in the $XY_0$, their lifted subtrees had not moved during \liftk{K_i} procedure and are in the canonical positions with respect to their roots by induction hypothesis. 
	
	\item[(II)] vertices that lie on edges in a set that is not yet processed are lying within the $XY_0$ plane
	
	\underline{Base of the induction: $i = 1$:}	
	
	The condition (II) holds since the entire tree $T$ lies in the $XY_0$ plane.	
	
	\underline{The induction step:}
	
	By induction hypothesis before \liftk{K_i} procedure all vertices that lie on edges in a set that is not yet processed lie in the $XY_0$ plane. End vertices of the edges in $K_i$ can not lie in $K_j, j > i$ by definition of the partition $\mathcal{K}$. That means that after \liftk{K_i} in which we move only end vertices of the edges in $K_i$ and their lifted subtrees, all vertices that lie on edges of sets $K_j, j > i$ will still lie in the $XY_0$ plane.
\end{itemize}  \qed
\end{proof}

\begin{lemma}
All morphing steps in the algorithm are integer.
\end{lemma}

\begin{proof}
Let us prove this by induction on number of lifted sets $K_i$.

The base case is trivial. After the stretching morph all coordinates of all vertices are integer because the constant of stretching is integer and in the given drawing $\Gamma$ all vertices had integer coordinates.

By induction hypothesis in the beginning of \liftk{K_i} procedure all vertices lie on lattice points of the grid. During \liftk{K_i} procedure vertices that are not ends of the edges of $K_i$ do not change coordinates at all. Mapping morphs move points with integer coordinates to points with integer coordinates.  In all other steps the lifted subtrees of end vertices of the edges are moved along with their roots by the integer vector.

As all coordinates at the beginning of \liftk{K_i} were integer and all vectors of movement of all vertices in every step were integer, after \liftk{K_i} procedure all vertices still have integer coordinates. \qed
\end{proof}

\paragraph{\textbf{Complexity of the algorithm}}

The estimation on the size of the required grid is similar to the previous algorithm:

In the beginning of the algorithm the given  drawing $\Gamma = \Gamma_0$ of graph $T = (V, E)$, $|V| = n$ takes space $l(\Gamma) \times w(\Gamma) \times 1$. First step of the algorithm multiplies needed space by $\mathcal{S}_1 = 2 \cdot rpw \cdot d(\Gamma) \cdot (4 \cdot d(\Gamma) + 1)$.  During our algorithm the height of lifted subtrees does not exceed their height in relative canonical drawing with respect to their roots, i.e. does not exceed $n$. Lifted subtrees take no more than   $rpw \cdot d(\Gamma) \cdot (4 \cdot d(\Gamma) + 1)$ space
in $x; y$-directions from their root during each iteration of \liftk{} procedure. Since in Step 1 we are moving a lifted subtree rooted at $end(e)$ toward $st(e)$ along the edge $e$, the  maximum of the $x; y$-coordinates of vertices in the lifted subtree can not exceed maximum of the $x; y$-coordinates of the vertices $end(e), st(e)$.  During Steps 3-5 of \liftk{} procedure lifted subtrees lie in disks $\ball{\Gamma_1}{v}{rpw \cdot d(\Gamma) \cdot (4 \cdot d(\Gamma) + 1)}$ for some $v$ that is the start of edge in $K_i$.

So the space algorithm takes during all step is at most:
 $$(x \times y \times z): \; \mathcal{O}((l(\Gamma) \cdot 2 \cdot rpw \cdot d(\Gamma) \cdot (4 \cdot d(\Gamma) + 1) + 2 \cdot rpw \cdot d(\Gamma) \cdot (4 \cdot d(\Gamma) + 1)) \times $$
 $$ \times (w(\Gamma) \cdot 2 \cdot rpw \cdot d(\Gamma) \cdot (4 \cdot d(\Gamma) + 1) + 2 \cdot rpw \cdot d(\Gamma) \cdot (4 \cdot d(\Gamma) + 1)) \times n) = $$
 $$ \mathcal{O}(d^3(\Gamma) \cdot \log n \times d^3(\Gamma) \cdot \log n \times n)$$
 
There are $l$ number of sets $K_i$, where $l$ is the depth of $T$. For every procedure \liftk{K_i} we need at most $6 + \log k$ number of morphing steps, where $k = \Delta(T)$, i.e., maximum degree of a vertex in $T$. This implies that the total number of steps in the algorithm is $\mathcal{O}(dpt(T) \cdot \log \Delta(T))$.

The lemmas proved in this section along with the space and time complexity bounds prove the following Theorem.

\begin{theorem} 
\label{theorem:2alg}
For every two planar straight-line grid drawings $\Gamma, \Gamma'$ of a $n$-vertex tree $T$,  there exists a crossing-free 3D-morph $\mathcal{M} = \langle \Gamma = \Gamma_0, \ldots, \Gamma_k = \Gamma' \rangle$ that takes $\mathcal{O}(dpt(T) \cdot \log \Delta(T))$ steps and $\mathcal{O}(d^3 \cdot \log n \times d^3 \cdot \log n \times n)$ space such that every intermediate drawing $\Gamma_i, 0 \le i \le k$ is a straight-line 3D grid drawing, where $d$ is maximum of the diameters of the given drawings, $\Delta(T)$ is the maximum degree of vertices in $T$.
\end{theorem}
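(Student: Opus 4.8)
The plan is to assemble the per-step lemmas of this section into one crossing-free morph and then to account for its length and for the grid it occupies. The central observation is that the canonical drawing $\mathcal{C}(T)$ depends only on the heavy-rooted-pathwidth decomposition of $T$, hence only on $T$ itself and not on the input drawing. I would therefore first morph $\Gamma$ to $\mathcal{C}(T)$ by running the procedure \liftk{K_i} for $i = 1, \ldots, m$, then morph $\Gamma'$ to $\mathcal{C}(T)$ in exactly the same way and play that second morph backwards. Concatenating the two halves at their common drawing $\mathcal{C}(T)$ yields the desired morph $\langle \Gamma = \Gamma_0, \ldots, \Gamma_k = \Gamma' \rangle$.

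For crossing-freeness I would argue that the forward half $\Gamma \to \mathcal{C}(T)$ is crossing-free by combining the Step-1 through Step-5 lemmas (Lemma~\ref{lem:sec5step1} and the subsequent per-step lemmas) with the invariant that Conditions (I) and (II) are preserved after every call to \liftk{K_i}; the latter is precisely the correctness lemma established above, and it is what lets the per-step arguments chain together across consecutive classes $K_i$. The reversed half is crossing-free because the time reversal of a crossing-free linear morph is again one. Integrality of every intermediate drawing follows from the lemma asserting that all morphing steps are integer, applied to each half (reversal preserves integrality). This establishes that every $\Gamma_i$ is a straight-line 3D grid drawing.

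For the number of steps I would count as follows. The partition $\mathcal{K} = \{K_1, \ldots, K_m\}$ has $m = dpt(T)$ classes, one per depth level of start vertices. Each invocation of \liftk{K_i} consists of the fixed Steps 1--4 (a constant number of morphing steps) together with Step 5, and Step 5 collides at most $\Delta(T)$ half-planes around each start vertex by repeatedly pairing neighbouring half-planes and mapping one onto the other, halving their number at each iteration and hence terminating in $\mathcal{O}(\log \Delta(T))$ mapping steps. Thus each \liftk{K_i} takes $\mathcal{O}(\log \Delta(T))$ steps, the forward morph takes $\mathcal{O}(dpt(T) \cdot \log \Delta(T))$ steps, and the reversed half merely doubles this, leaving the asymptotics unchanged.

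Finally, for the grid size I would track the maximum extent reached during the morph. The preprocessing stretch multiplies the $x$- and $y$-dimensions by $\mathcal{S}_1 = 2 \cdot rpw \cdot d(\Gamma) \cdot (4 \cdot d(\Gamma) + 1)$, and during Steps 3--5 each lifted subtree is confined in projection to a disk of radius $rpw \cdot d(\Gamma) \cdot (4 \cdot d(\Gamma) + 1)$ around a start vertex by Lemma~\ref{lemma:points_in_circles}, so the horizontal footprint never exceeds $\mathcal{O}(d(\Gamma) \cdot rpw \cdot d^2(\Gamma))$ in each of $x$ and $y$, while by Condition (I) every lifted subtree has height at most $n$. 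Substituting $rpw = \mathcal{O}(\log n)$ gives the claimed bound $\mathcal{O}(d^3 \log n \times d^3 \log n \times n)$. The main obstacle I expect is exactly this bookkeeping of the transient horizontal spread during Steps 3--5: one must verify that the disks of Lemma~\ref{lemma:points_in_circles} contain every lifted subtree throughout the mapping and collision steps, not merely at their endpoints, so that the $d^3 \log n$ factor is an honest upper bound on the grid actually used rather than a snapshot at isolated times.
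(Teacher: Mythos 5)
Your proposal is correct and follows essentially the same route as the paper: concatenate the forward morph $\Gamma \to \mathcal{C}(T)$ with the reversed morph $\mathcal{C}(T) \to \Gamma'$, invoke the per-step crossing-freeness lemmas together with invariants (I)--(II) and the integrality lemma, count $dpt(T)$ invocations of \liftk{} at $\mathcal{O}(\log \Delta(T))$ steps each, and bound the grid via the stretching constant $\mathcal{S}_1$, the disks of Lemma~\ref{lemma:points_in_circles}, and $rpw = \mathcal{O}(\log n)$. The bookkeeping concern you flag about the transient horizontal spread during Steps 3--5 is exactly the point the paper's complexity paragraph addresses, so no gap remains.
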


\begin{corollary} 
For every two planar straight-line grid drawings $\Gamma, \Gamma'$ of a $n$-vertex tree $T$,  there exists a crossing-free 3D-morph $\mathcal{M} = \langle \Gamma = \Gamma_0, \ldots, \Gamma_k = \Gamma' \rangle$ that takes $\mathcal{O}(dpt(T) \cdot \log n)$ steps and $\mathcal{O}(d^3 \cdot \log n \times d^3 \cdot \log n \times n)$ space such that every intermediate drawing $\Gamma_i, 0 \le i \le k$ is a straight-line 3D grid drawing, where $d$ is maximum of the diameters of the given drawings.
\end{corollary}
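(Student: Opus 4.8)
The plan is to derive this corollary directly from Theorem~\ref{theorem:2alg} by crudely bounding the maximum degree. First I would recall that Theorem~\ref{theorem:2alg} already constructs, for any two planar straight-line grid drawings $\Gamma, \Gamma'$ of $T$, a crossing-free 3D-morph $\mathcal{M} = \langle \Gamma = \Gamma_0, \ldots, \Gamma_k = \Gamma' \rangle$ whose intermediate drawings are all straight-line 3D grid drawings lying in a grid of size $\mathcal{O}(d^3 \cdot \log n \times d^3 \cdot \log n \times n)$, using $\mathcal{O}(dpt(T) \cdot \log \Delta(T))$ morphing steps. The only discrepancy between that statement and the corollary is the step count, where the factor $\log \Delta(T)$ is replaced by the coarser factor $\log n$; everything else is to be taken over verbatim.

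The key (and only) observation is that in any $n$-vertex tree the maximum degree satisfies $\Delta(T) \le n - 1$, since a single vertex can be adjacent to at most the remaining $n-1$ vertices. Hence $\log \Delta(T) \le \log n$, so that $\mathcal{O}(dpt(T) \cdot \log \Delta(T)) \subseteq \mathcal{O}(dpt(T) \cdot \log n)$. Substituting this weaker but parameter-free bound into the conclusion of Theorem~\ref{theorem:2alg} gives precisely the claimed step count, while the space bound, the crossing-freeness of $\mathcal{M}$, and the integrality (grid property) of every intermediate drawing $\Gamma_i$ are inherited unchanged, because $\mathcal{M}$ is literally the same morph produced in the proof of Theorem~\ref{theorem:2alg}.

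I expect no genuine obstacle: the corollary is a deliberate relaxation that trades the sharper factor $\log \Delta(T)$ for the uniform, input-independent factor $\log n$, which is the form needed to compare against the $\mathcal{O}(\sqrt{n}\log n)$ target bound. The only point requiring mild care is the degenerate regime of very small trees (for instance $\Delta(T) = 1$ when $T$ is a single edge), but such cases do not affect the asymptotic statement, and it suffices to use $\Delta(T) \le n$ throughout.
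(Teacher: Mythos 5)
Your proof is correct and matches the paper's (implicit) reasoning: the corollary follows from Theorem~\ref{theorem:2alg} simply by bounding $\Delta(T) \le n-1$, hence $\log \Delta(T) \le \log n$, with the morph, space bound, and grid property carried over unchanged. The paper states the corollary without further argument, so your derivation is exactly the intended one.
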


\section{Trade-off}\label{Sec:6}

Recall that $\mathcal{L}(T)$ is the set of paths induced by the long-path decomposition, see Section~\ref{sec:defs}. Let $Long(T)$ be a set of paths from $\mathcal{L}(T)$, consisting of the paths which length is at least $\sqrt{n}$, i.e. $Long(T) = \{L_i \in \mathcal{L}(T): |L_i| \ge \sqrt{n}\}$, let the order in $Long(T)$ be induced from the order in $\mathcal{L}(T)$. We denote by $Short(T)$ a set of trees that are left after deleting from $T$ edges of $Long(T)$.

\begin{lemma}
\label{lem:sec_6}
\begin{enumerate}
	\item $|Long(T)| \leq \sqrt{n}$
	\item For every tree $T_i$ in $Short(T)$ depth of $T_i$ is at most $\lfloor \sqrt{n} \rfloor$. 
\end{enumerate}
\end{lemma}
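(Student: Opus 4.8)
The plan is to prove the two parts separately, relying only on the defining properties of the long-path decomposition recalled in Section~\ref{sec:defs}: that every iteration removes a globally longest root-to-leaf path, and that the resulting ordering satisfies $|L_i|\le|L_j|$ for $i<j$. For the first part I would argue by a counting over edges. The paths of $\mathcal{L}(T)$ are pairwise edge-disjoint, since a path decomposition partitions $E(T)$; in particular the paths of $Long(T)$ are edge-disjoint. Each path in $Long(T)$ has length (the number of its edges) at least $\sqrt n$, while the total number of edges of $T$ is $n-1$. Hence $|Long(T)|\cdot\sqrt n \le \sum_{L\in Long(T)}|L|\le n-1<n$, which already gives $|Long(T)|<\sqrt n$ and therefore $|Long(T)|\le\sqrt n$. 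This step is routine.

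For the second part the key observation I would establish first is that $Long(T)$ consists of exactly the paths removed during the initial iterations of the decomposition. Indeed, by the monotonicity $|L_1|\le\cdots\le|L_m|$, the set $Long(T)=\{L_i:|L_i|\ge\sqrt n\}$ is a suffix $\{L_{i_0},\ldots,L_m\}$ of the ordering; and since a larger index corresponds to an earlier deletion (paths are prepended to $\mathcal{P}$ during the construction), these are precisely the first $t:=|Long(T)|$ paths deleted. Consequently, deleting the edges of $Long(T)$ from $T$ produces exactly the forest of components present after the first $t$ iterations of the decomposition, and by definition this forest is $Short(T)$.

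It then remains to bound the depth of each such component. At the moment just after the first $t$ deletions, the next path the algorithm would remove is $L_{i_0-1}$ (or none, if $i_0=1$, in which case every component is a single vertex and has depth $0$). Since every iteration removes a globally longest root-to-leaf path over all current components, the length $|L_{i_0-1}|$ is an upper bound on the longest root-to-leaf path of every component present at that moment, i.e.\ on the depth of every tree of $Short(T)$. As $L_{i_0-1}\notin Long(T)$ we have $|L_{i_0-1}|<\sqrt n$, so each $T_i\in Short(T)$ has depth strictly less than $\sqrt n$; being an integer, this depth is at most $\lfloor\sqrt n\rfloor$.

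The main obstacle is the second part, specifically making rigorous the identification of $Short(T)$ with the intermediate forest of the decomposition together with the claim that the depth of each such component is controlled by the length of the next path to be removed. This requires using that the decomposition always selects a \emph{globally} longest root-to-leaf path, rather than an arbitrary longest path inside a single chosen subtree, for only then are the remaining components genuinely short; the monotonicity of $|L_i|$ stated in Section~\ref{sec:defs} is exactly what encodes this property, and it is what I would lean on throughout. By contrast, the edge-counting of the first part and the final integrality rounding are straightforward.
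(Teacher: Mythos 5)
Your proof is correct and follows essentially the same route as the paper: part~1 is the identical edge-counting over the edge-disjoint paths of the decomposition, and part~2 rests on the same observation that the trees of $Short(T)$ are exactly the components left by the decomposition after removing the (earliest-deleted, longest) paths of $Long(T)$, so each one's longest root-to-leaf path is a remaining $L_i$ of length below $\sqrt{n}$. The only difference is presentational—you argue part~2 directly via the suffix structure of $Long(T)$ in the ordering, while the paper phrases the same idea as a one-line contradiction.
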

 
\begin{proof}
\begin{enumerate}
	\item Every edge in the tree lies in exactly one path of long-path decomposition. In a tree $T$ with $n$ nodes there are $n - 1$ edge. 
	
	$$n - 1 \geq |E(T)| = |\cup Long(T)| \geq |Long(T)| \cdot \sqrt{n}$$
	
	$$\Rightarrow |Long(T)| \leq \sqrt{n}$$
	\item If there exists a tree $T_i$ which depth is at least $\sqrt{n}$, then long edges from its root create a path that lies in long-path decomposition and has length at least $\sqrt{n}$. That means that this path from the root of $T_i$ should lie in $Long(T)$ and does not exist in $T_i$. We have come to a  contradiction. \qed
\end{enumerate} 
\end{proof}

We divide edges in $Short(T)$ into disjoint sets $Sh_1, \ldots Sh_{\lfloor \sqrt{n} \rfloor}$. An edge $(v_i, v_j)$ in tree $T_k$ lies in the set $Sh_l$ if and only if $max(dpt(v_i), dpt(v_j)) = \lfloor \sqrt{n} \rfloor - l + 1$, where $dpt(v)$ is the depth of vertex $v$ in the corresponding tree $T_k$, see Figure~\ref{fig:trade-off}. Since the maximum depth of any trees $T_k$ is at most $\sqrt{n}$, $Sh_1, \ldots Sh_{\lfloor \sqrt{n} \rfloor}$  contains all the edges of these subtrees.\\
\begin{figure}[!htp]

  \begin{center}  
  \includegraphics[scale = 0.5]{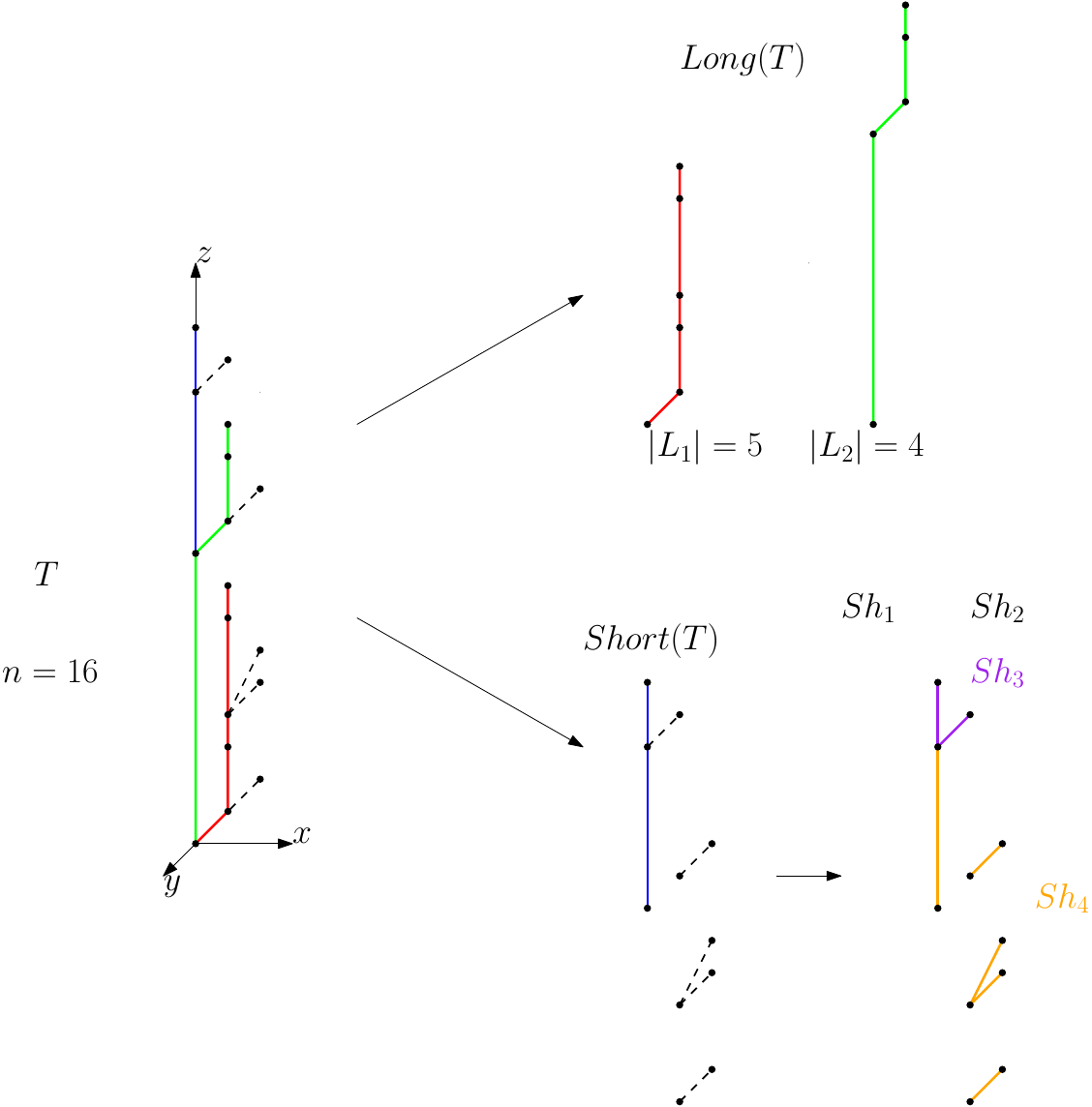}
  \end{center}

  \caption{Example of partition of the edges into set of paths $Long(T)$ and sets of edges $Sh_i, i \in \{1, \ldots, \sqrt{n} = 4\}$. Sets $Sh_1, Sh_2$ are empty, as trees in $Short(T)$ have depth at most 2.}
  \label{fig:trade-off}
\end{figure}

\noindent\textit{\textbf{Trade-off algorithm}:}In the beginning we perform a stretching step with $\mathcal{S}_1 = 2 \cdot rpw \cdot d(\Gamma) \cdot (4 \cdot d(\Gamma) + 1)$ as mentioned in Section~\ref{sec:second_alg}.  $\mathcal{S}_1$ is big enough to perform \lift{} procedure mentioned in Section~\ref{Sec:4}. Then, we lift edges from sets $Sh_1$ to $Sh_{\lfloor \sqrt{n} \rfloor}$ by \liftk{Sh_i} procedure.  This step takes $\mathcal{O}(\sqrt{n} \cdot \log \Delta(T))$ steps in total by Theorem~\ref{theorem:2alg}. After that, we lift paths in $Long(T)$ in reverse order. As $|Long(T)| \leq \sqrt{n}$ and each \lift{} procedure consists of a constant number of morphing steps, this step takes $\mathcal{O}(\sqrt{n})$ steps.

\begin{theorem}
For every two planar straight-line grid drawings $\Gamma, \Gamma'$ of tree $T$ with $n$ vertices there exists a crossing-free 3D-morph $\mathcal{M} = \langle \Gamma = \Gamma_0, \ldots, \Gamma_l = \Gamma' \rangle$ that takes $\mathcal{O}(\sqrt{n} \cdot \log \Delta(T))$ steps and $\mathcal{O}(d^3 \cdot \log n \times d^3  \cdot \log n \times n)$ space to perform, where $d$ is maximum of the diameters of the given drawings, $\Delta(T)$ is the maximum degree of vertices in $T$. In this morph, every intermediate drawing $\Gamma_i, 1 \le i \le l$ is a straight-line 3D grid drawing.
\end{theorem}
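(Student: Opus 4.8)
The plan is to combine the two morphing strategies of Sections~\ref{Sec:4} and~\ref{sec:second_alg}, applying each one on the part of $T$ on which it is cheap. The path-based algorithm spends $\mathcal{O}(1)$ steps per path, so it is efficient precisely when the number of paths is small; the edge-based algorithm spends $\mathcal{O}(\log\Delta(T))$ steps per depth layer, so it is efficient precisely when the depth is small. The hybrid decomposition $Long(T)\cup Short(T)$ is engineered so that both conditions hold simultaneously: by Lemma~\ref{lem:sec_6} there are at most $\sqrt{n}$ long paths, and every tree of $Short(T)$ has depth at most $\lfloor\sqrt{n}\rfloor$. It therefore suffices to morph $\Gamma$ into the canonical drawing $\mathcal{C}(T)$ by first lifting the short subtrees layer by layer with \liftk{} and then lifting the long paths one by one with \lift{}; the desired morph between $\Gamma$ and $\Gamma'$ is obtained, as for the earlier algorithms, by concatenating this morph with the reverse of the analogous morph for $\Gamma'$.

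Concretely, I would first perform the single stretching step with $\mathcal{S}_1 = 2\cdot rpw\cdot d(\Gamma)\cdot(4\cdot d(\Gamma)+1)$, which is exactly the constant used in Section~\ref{sec:second_alg} and hence large enough for every subsequent \liftk{} and \lift{} step to be realizable on the integer grid. In the first phase I run \liftk{Sh_i} for $i=1,\ldots,\lfloor\sqrt{n}\rfloor$; since the layers $Sh_i$ are indexed by depth inside the short trees and every such tree has depth at most $\lfloor\sqrt{n}\rfloor$ by Lemma~\ref{lem:sec_6}, the step count of Theorem~\ref{theorem:2alg} specializes to $\mathcal{O}(\sqrt{n}\cdot\log\Delta(T))$. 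In the second phase I lift the paths of $Long(T)$ in the reverse of their long-path order; since $|Long(T)|\le\sqrt{n}$ and each \lift{} invocation is a constant number of steps, this phase costs $\mathcal{O}(\sqrt{n})$ steps. Adding the two phases gives the claimed $\mathcal{O}(\sqrt{n}\cdot\log\Delta(T))$ bound.

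For correctness I would check that the output of the first phase is exactly the precondition of the second. After all layers $Sh_i$ are processed, invariants (I)--(II) of Section~\ref{sec:second_alg} guarantee that every short subtree is drawn in canonical position with respect to its root while every vertex still lying on a long-path edge remains in the $X0Y$ plane. The roots of the short subtrees are precisely the internal vertices of the long paths, so when \lift{} later processes a long path $P$ the hypothesis that the subtrees of the internal vertices of $P$ are already lifted is met; the reverse long-path order respects the requirement that $P$ be lifted only after every child subtree of its internal vertices is lifted, exactly as in Section~\ref{Sec:4}. Crossing-freeness and integrality then follow step by step from the per-step lemmas already proved for \liftk{} and \lift{}, since no step of one phase moves a part of the tree touched by the other.

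Finally, the space bound is inherited verbatim from Theorem~\ref{theorem:2alg}: the stretching constant is the same, lifted subtrees never occupy more than $rpw$ in the horizontal directions nor more than $n$ in height, and during \liftk{} they stay inside the disks of radius $rpw\cdot d(\Gamma)\cdot(4\cdot d(\Gamma)+1)$, yielding a grid of size $\mathcal{O}(d^3\cdot\log n\times d^3\cdot\log n\times n)$. The main obstacle is the interface argument of the previous paragraph: one must verify that the canonical positions produced for the short subtrees by the edge-based procedure coincide exactly with the canonical positions that the path-based procedure expects for the already-lifted subtrees hanging off the long paths, so that the two algorithms can be glued together without inserting any corrective morph between the phases.
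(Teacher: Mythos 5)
Your proposal matches the paper's own argument essentially verbatim: the same $Long(T)/Short(T)$ split via Lemma~\ref{lem:sec_6}, the same stretching constant $\mathcal{S}_1 = 2\cdot rpw\cdot d(\Gamma)\cdot(4\cdot d(\Gamma)+1)$, the same two phases (\liftk{Sh_i} for the short layers, then \lift{} on $Long(T)$ in reverse order), and the same step and space accounting. Your additional remarks on the interface between the two phases only make explicit what the paper leaves implicit, so there is nothing to correct.
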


Note that it is possible to morph between $\Gamma, \Gamma'$ using $\mathcal{O}(\sqrt{n})$ steps if the maximum degree of $T$ is a constant.

\section{Conclusion}\label{Sec:7}

In this paper, we presented an algorithm that morphs between two planar grid drawings of a $n$-vertex tree $T$ in $\mathcal{O}(\sqrt n \log n)$ steps such that all intermediate drawings are crossing free $3D$-grid drawings and lie inside a polynomially bounded $3D$-grid.  Arseneva et al.~\cite{PoleDance} proved that
$\mathcal{O}(\log n)$ steps are enough to morph between two planar grid drawings of a $n$-vertex tree $T$ where intermediate drawings are allowed to lie in $\mathbb{R}^3$ but they did not guarantee that intermediate drawings have polynomially bounded resolution.  Several problems are left open in this area of research. We mention few of them here.
It is interesting to prove a lower-bound on the number of morphing steps if  intermediate drawings are allowed to lie in $\mathbb{R}^3$. It is also interesting to prove a lower bound on this problem along with the additional constraint of polynomially bounded resolution. Is it possible to morph between two planar grid drawings in $o(n)$ number of steps for a richer class of graphs (e.g. outer-planar graphs) than trees if we are allowed to use the third dimension? Is there a trade-off between the number of steps required and the volume of the grid needed for the morph?

\section*{Acknowledgement}
Elena Arseneva was partially supported by the Foundation for the Advancement of Theoretical Physics and Mathematics “BASIS”. Elena Arseneva and Aleksandra Istomina were partially supported by RFBR, project 20-01-00488. Rahul Gamgopadhyay was   supported by Ministry of Science and Higher Education of the Russian Federation, agreement no. 075–15–2019–1619.

\bibliographystyle{plainurl}
\bibliography{morphing}




\end{document}